\tikzset{fontscale/.style = {font=\relsize{#1}}
    }
\newtheorem{theorem}{Theorem}
\newtheorem{lemma}[theorem]{Lemma}
\newtheorem{corollary}[theorem]{Corollary}
\theoremstyle{definition}
\newtheorem{definition}[theorem]{Definition}
\newtheorem{example}[theorem]{Example}
\newtheorem{remark}[theorem]{Remark}
\newcommand{\VCSPof}[1]{$\operatorname{VCSP}\hspace*{-0.2em}\left(#1\right)$}
\newcommand{\VCSPs}[1]{$\operatorname{VCSP}_s\hspace*{-0.2em}\left(#1\right)$}
\newcommand{\VCSPl}[1]{$\operatorname{VCSP}_l\hspace*{-0.2em}\left(#1\right)$}
\newcommand{\VCSPx}[2]{$\operatorname{VCSP}_{#1}\hspace*{-0.2em}\left(#2\right)$}
\newcommand{\GMC}{$\operatorname{GMC}$}
\newcommand{\BGMC}[2]{$\operatorname{GMC}_{#1}^{#2}$}
\newcommand{\EDS}{\textsf{EDS}}
\newcommand{\SIM}{\textsf{SIM}}
\newcommand{\SEDS}{\textsf{SEDS}}
\newcommand{\SDS}{\textsf{SDS}}
\DeclarePairedDelimiter\myparentheses{\lparen}{\rparen}
\DeclarePairedDelimiter\mybrackets{\lbrack}{\rbrack}
\newcommand{\fixall}[1]{\operatorname{fix} \myparentheses*{#1}}
\newcommand{\fix}[2]{\operatorname{fix}_{#1} \mybrackets*{#2}}
\newcommand{\eoe}{\hfill$\clubsuit$} 
\begin{document}
\title{Using a min-cut generalisation to go\\ beyond Boolean surjective VCSPs\thanks{An extended abstract of this work
appeared in \emph{Proceedings of the 36th International Symposium on
Theoretical Aspects of Computer Science (STACS 2019)}~\cite{mz19:stacs}. 
Stanislav \v{Z}ivn\'y was supported by a Royal Society University Research Fellowship. The work was done while
Gregor Matl was at the University of Oxford. This project has received funding
from the European Research Council (ERC) under the European Union's Horizon 2020
research and innovation programme (grant agreement No 714532). The paper
reflects only the authors' views and not the views of the ERC or the European
Commission. The European Union is not liable for any use that may be made of the
information contained therein.}}

\author{
Gregor Matl\\
Department of Informatics, Technical University of Munich, Germany\\
\texttt{matlg@in.tum.de}
\and
Stanislav \v{Z}ivn\'y\\
Department of Computer Science, University of Oxford, UK\\
\texttt{standa.zivny@cs.ox.ac.uk}
}

\date{}
\maketitle
 
\begin{abstract}
In this work, we first study a natural generalisation of the Min-Cut problem, where a graph is augmented by a superadditive set function defined on its vertex subsets. The goal is to select a vertex subset such that the weight of the induced cut plus the set function value are minimised. In addition, a lower and upper bound is imposed on the solution size. We present a polynomial-time algorithm for enumerating all near-optimal solutions of this Bounded Generalised Min-Cut problem.

Second, we apply this novel algorithm to surjective general-valued constraint satisfaction problems (VCSPs), i.e., VCSPs in which each label has to be used at least once. On the Boolean domain, Fulla, Uppman, and \v{Z}ivn\'y~[ACM ToCT'18] have recently established a complete classification of surjective VCSPs based on an unbounded version of the Generalised Min-Cut problem. Their result features the discovery of a new non-trivial tractable case called EDS that does not appear in the non-surjective setting.

As our main result, we extend the class EDS to arbitrary finite domains and provide a conditional complexity classification for surjective VCSPs of this type based on a reduction to smaller domains. On three-element domains, this leads to a complete classification of such VCSPs.
\end{abstract}

\section{Introduction}
\label{sec:intro}

Constraint satisfaction problems (CSPs) are fundamental computer science
problems studied in artificial intelligence, logic (as model checking of the
positive primitive fragment of first-order logic), graph theory (as
homomorphisms between relational structures), and databases (as conjunctive
queries)~\cite{HN2008}. A vast generalisation of CSPs is that of general-valued
CSPs (VCSPs)~\cite{Schiex95:valued}, see also~\cite{cohen06:complexitysoft}. Recent years have seen some
remarkable progress on our understanding of the computational complexity of CSPs
and VCSPs, as will be discussed later in related work. We start with a few
definitions to state existing as well as our new results.

We consider regular, surjective as well as lower-bounded VCSPs on the extended
rationals $\overline{\mathbb{Q}} = \mathbb{Q} \cup \left\{ \infty \right\}$. An
\textit{instance} $I = \left(V, D, \phi_I\right)$ of either of these problems is given by a finite set of variables $V = \left\{x_1, \dots, x_n\right\}$, a finite set of labels $D$ called the \textit{domain}, and an \textit{objective function}
$\phi_I : D^n  \rightarrow \overline{\mathbb{Q}}$. The objective function is of the form 
\begin{equation*}
\phi_I\left(x_1,\dots, x_n\right) = \sum_{i = 1}^{t}w_i \cdot\gamma_i\left(\boldsymbol{x}_i\right), 
\end{equation*}
where $t \in \mathbb{N}$ and, for each $1 \leq i \leq t$, $\gamma_i  :
D^{\operatorname{ar}\left( \gamma_i \right)}\rightarrow \overline{\mathbb{Q}}$
is a \textit{weighted relation} of arity $\operatorname{ar}\left( \gamma_i\right)
\in \mathbb{N}$, $w_i \in \mathbb{Q}_{ \geq 0}$ is a \textit{weight} and $\boldsymbol{x}_i \in V^{\operatorname{ar}\left( \gamma_i\right)}$ is a tuple of variables from $V$ called the \textit{scope} of $\gamma_i$.

Regular, surjective and lower-bounded VCSPs differ only in their solution space, although this makes a big difference in complexity. If $I$ is an instance of a regular VCSP, an \textit{assignment} is a map $s : V
\rightarrow D$ assigning a label from $D$ to each variable. In the surjective
setting, only a surjective map $s : V \rightarrow D$ is an assignment.
For lower-bounded VCSPs, a fixed lower bound $l : D \rightarrow \mathbb{N}_0$ is
provided and an assignment is a map $s : V \rightarrow D$ such that $\left|s^{ -
1}\left(d\right)\right| \geq l\left(d\right)$ for every label $d \in D$. In
other words, a lower bound $l\left(d\right)$ on the number of occurrences of
each label $d \in D$ is imposed.
This is a generalisation of surjective VCSPs where the lower bound is always 1.
(We are not aware of any previous work on lower-bounded VCSPs, which we
introduce in this work.)
The \textit{value} of an assignment $s$ is given by
$\phi_I\left(s\left(x_1\right), \dots, s\left(x_n\right)\right)$. An assignment
is called \textit{feasible} if its value is finite, and is called \textit{optimal} if it is of minimal value among all assignments for the instance. The objective is to obtain an optimal assignment. 

While finding an optimal assignment is NP-hard in general, valued constraint
languages impose a natural restriction on the types of instances that are
allowed. A \textit{valued constraint language}, or simply a \textit{language},
is a possibly infinite set of weighted relations. In this paper, we only
consider languages of bounded arity, that is languages admitting a fixed upper
bound on the arity of all weighted relations contained in them. 
Weighted relations in any VCSP instance will be stored explicitly.
 
We denote the class of regular VCSP instances with objective functions using only weighted relations from a language $\Gamma$ by \VCSPof{\Gamma}. Similarly, \VCSPs{\Gamma} is the class of surjective VCSP instances with weighted relations from $\Gamma$ and, for some lower bound $l : D \rightarrow
\mathbb{N}_0$, \VCSPl{\Gamma} is the class of lower-bounded VCSP instances over
$\Gamma$ with bound $l$.

A language $\Gamma$ is \textit{globally tractable} if there is a polynomial-time
algorithm for solving each instance of \VCSPof{\Gamma},
or \textit{globally intractable} if \VCSPof{\Gamma} is NP-hard.
Analogously, $\Gamma$ is \textit{globally $s$-tractable} if there is a polynomial-time
algorithm for \VCSPs{\Gamma}, or \textit{globally $s$-intractable} if
\VCSPs{\Gamma} is NP-hard. And $\Gamma$ is \textit{globally $\ell$-tractable} if
\VCSPl{\Gamma} is solvable in polynomial time for every fixed lower bound $l :
D \rightarrow \mathbb{N}_0$, or \textit{globally $\ell$-intractable} if
\VCSPl{\Gamma} is NP-hard for at least one fixed lower bound $l : D \rightarrow
\mathbb{N}_0$. Thus, global $\ell$-tractability implies global $s$-tractability,
and global $s$-intractability implies global $\ell$-intractability.

The following examples show how well-studied variants of the \textsc{Min-Cut}
problem can be modelled in the VCSP frameworks we have defined.

\begin{example}[\textsc{$r$-Terminal Min-Cut}]
Given a graph $G = \left(V, E\right)$ with non-negative edge weights $w : E
\rightarrow \mathbb{Q}_{ \geq 0}$ and designated terminal vertices $s_1, \dots,
s_r \in V$, the \textsc{$r$-Terminal Min-Cut} problem asks to partition $V$ into
subsets $X_1, \dots, X_r$ such that $s_d \in X_d$ for all $d \in
\left[r\right]:=\{1,\ldots,r\}$, while the accumulated weight of all edges going
between distinct sets $X_i$ and $X_j$ is minimised. For $r = 2$, this problem is
also known as the \textsc{$\left(s, t\right)$-Min-Cut} problem.

We show how this problem can be represented as a regular VCSP. Let $\gamma_{\text{$r$-cut}}$ denote the binary weighted relation defined for $x, y \in
\left[r\right]$ by $\gamma_{\text{$r$-cut}}\left(x, y\right) = 0$ if $x = y$ and
$\gamma_{\text{$r$-cut}}\left(x, y\right) = 1$ otherwise. Furthermore, for each label $d
\in \left[r\right]$, let $\rho_d$ denote the constant relation given by $\rho_d\left(d\right) = 0$ and $\rho_d\left(x\right) =  \infty$ for $d \neq x \in
\left[r\right]$. Let $\Gamma_{\text{$r$-cut}} =
\left\{ \gamma_{\text{$r$-cut}},  \rho_1,  \dots,  \rho_{r}\right\}$.

Finding an optimal $r$-terminal cut is equivalent to solving the
\VCSPof{\Gamma_{\text{$r$-cut}}} instance $I = \left(V, \left[r\right],
\phi\right)$ with objective function \begin{equation*} \phi\left(x_1, \dots,
x_n\right) =  \rho_1\left(s_1\right) +  \dots +  \rho_{r}\left(s_{r}\right) +
\sum_{\left(u, v \right) \in E}w\left(u, v\right) \cdot
\gamma_{\text{$r$-cut}}\left(u, v\right). \end{equation*}
To see this, observe that there is a correspondence between feasible assignments $s : V \rightarrow \left[r\right]$ and $r$-terminal cuts $X_1, \dots, X_r$ by setting  $X_d =
\left\{v \in V : s\left(v\right) = d\right\}$. Hence, an optimal assignment
induces an optimal cut.

The \textsc{$r$-Terminal Min-Cut} problem can be solved in polynomial time if $r
= 2$, but it is NP-hard for any $r \geq 3$~\cite{Dahlhaus:1994}. Since every
\VCSPof{\Gamma_{\text{$r$-cut}}} instance can also be reduced to an instance of the
\textsc{$r$-Terminal Min-Cut} problem, the language $\Gamma_{\text{$r$-cut}}$ is
globally tractable if $r = 2$ and globally intractable for $r \geq 3$. 
\eoe
\end{example} 

\begin{example}[\textsc{$r$-Way Min-Cut}]
Without setting out any terminals, the \textsc{$r$-Way Min-Cut} problem asks to partition $V$ into non-empty subsets $X_1, \dots, X_r$ such that weight of the induced cut is minimised.
Finding an optimal $r$-way min-cut is equivalent to solving the
\VCSPs{\left\{\gamma_{\text{$r$-cut}}\right\}} instance $I = \left(V,
\left[r\right], \phi\right)$ with objective function
\begin{equation*}
	 \phi\left(x_1, \dots, x_r\right) = \sum_{\left(u, v \right) \in E}w\left(u, v\right) \cdot \gamma_{\text{$r$-cut}}\left(u, v\right).
\end{equation*}
The \textsc{$r$-Way Min-Cut} problem can be solved in polynomial time for every
fixed integer $r$~\cite{Goldschmidt:1994}. Since every
\VCSPs{\left\{\gamma_{\text{$r$-cut}}\right\}} instance can be reduced to an \textsc{$r$-Way Min-Cut} problem as well, the language $\left\{\gamma_{\text{$r$-cut}}\right\}$ is globally $s$-tractable.
\eoe
\end{example}

For a fixed $l : D \rightarrow V$, 
\VCSPl{\left\{\gamma_{\text{$r$-cut}}\right\}} allows to model a generalisation
of the \textsc{$r$-Way Min-Cut} problem where a partition $X_1, \dots, X_r$ of
$V$ minimising the induced cut is sought under the condition that
$\left|X_d\right| \geq l\left(d\right)$ for every $d \in D$. As far as we know, the complexity of
both \VCSPl{\left\{\gamma_{\text{$r$-cut}}\right\}} and the lower-bounded
\textsc{$r$-Way Min-Cut} problem is unknown.

\paragraph*{Related Work}

Early results on CSPs include the fundamental results of Schaefer on Boolean
CSPs~\cite{schaefer78:complexity} and of Hell and Ne\v{s}et\v{r}il on graph
CSPs~\cite{Hell:1990}. The computational complexity of CSPs has drawn a lot of
attention following the seminal paper of Feder and Vardi~\cite{Feder:1998}.
Using the algebraic approach~\cite{Jeavons:1997,Bulatov05:classifying}, the
complexity of CSPs on finite domains was resolved in two independent papers by
Bulatov~\cite{Bulatov:2017} and Zhuk~\cite{Zhuk:2017}. The computational
complexity of the problem of minimising the number of unsatisfied constraints
(and more generally rational-valued weighted relations) was obtained by Thapper and \v{Z}ivn\'y in~\cite{Thapper:2016}.
Finally, the computational complexity of general-valued CSPs on finite domains
was obtained by the work of Kozik and Ochremiak~\cite{Kozik:2015} and
Kolmogorov, Krokhin, and Rol\'inek~\cite{Kolmogorov:2017}.

Many constraint solvers allow not only constraints that apply locally to the
variables specified as arguments, but also some sort of global constraints. In
fact, the latter are the default representations in most constraint
solvers~\cite{Rossi06:handbook}. Among VCSPs with global constraints studied
from the complexity point of view are CSPs with global cardinality constraints, or CCSPs, where it is specified how often exactly each label has to occur in an assignment. A dichotomy theorem for CCSPs on finite domains was established by Bulatov and Marx~\cite{Bulatov:2010cardinality}.

Surjective VCSPs, which can be seen as imposing a global constraint, have been
studied by Fulla, Uppman, and \v{Z}ivn\'y~\cite{FUZ18:surjective}, following
earlier results on CSPs by Creignou and H\'ebrard~\cite{Creignou1997} and
Bodirsky, K\'ara, and Martin~\cite{Bodirsky12:dam}. Unfortunately, the algebraic
approach that has proved pivotal in the understanding of the computational
complexity of regular CSPs and VCSPs is not applicable in the surjective
setting.

The following two facts are easy to show (see, e.g,~\cite{FUZ18:surjective}): (i)
intractable languages are also $s$-intractable; (ii) a tractable language $\Gamma$
is also $s$-tractable \emph{if} $\Gamma$ includes all constant relations.
Consequently, new $s$-tractable languages can only occur (if at all) as subsets
of tractable languages that do not contain all constant relations. The first example of such languages have been presented in~\cite{FUZ18:surjective}. In
particular, the authors of \cite{FUZ18:surjective} have identified languages on the Boolean domain that are \textit{essentially a
downset}, or {\EDS}, as a new class of efficiently solvable problems and, in
doing so, have provided a complexity classification of surjective VCSPs on the Boolean domain. 
Informally, a weighted relation $\gamma$ is {\EDS} if both the set of
feasible tuples of $\gamma$ and the set of optimal tuples of $\gamma$ are
essentially downsets. Here a relation is called essentially a downset if it can
be written as a conjunction of downsets and binary equality relations.
Equivalently, a relation is essentially a downset if it admits a binary
polymorphism $\operatorname{sub}(x,y)=\min(x,1-y)$. A finite language is {\EDS} if every
weighted relation in $\Gamma$ is {\EDS}. The definition for infinite languages
is more complicated. We give a formal definition of the {\EDS} class in
Section~\ref{sec:EDS} and refer the reader to~\cite{FUZ18:surjective} for more
details.

The tractability result of {\EDS} languages is based on the Generalised Min-Cut
({\GMC}) problem for graphs, also introduced in~\cite{FUZ18:surjective}. In a
{\GMC} instance, the goal is to find a non-trivial subset of the vertices such
that the weight of the induced cut and a superadditive set function are
minimised simultaneously. In particular, the following has been shown
in~\cite{FUZ18:surjective}. Firstly, the objective function of surjective VCSPs
that are {\EDS} can be approximated by an instance of the {\GMC} problem.
Secondly, there is a polynomial-time algorithm to enumerate all solutions to the
{\GMC} problem that are optimal up to a constant factor. These two together give
an efficient algorithm for surjective VCSPs that are {\EDS}.

\paragraph*{Contributions}

This paper extends the class {\EDS} to arbitrary finite domains. We introduce a
class {\SIM} of languages that exhibit properties \textit{similar to a Boolean language}.
Based on this class, we define the class {\SEDS} of languages \textit{similar to {\EDS}} as a natural extension of
{\EDS} and classify languages from this extension based on two criteria.
Firstly, we give a subclass {\SDS}, or \textit{similar to a downset}, of {\SEDS} that guarantees global
$\ell$-tractability without additional requirements. Secondly, we prove that the
complexity of lower-bounded VCSPs over any remaining {\SEDS} languages is equivalent to the complexity over a particular language on a smaller domain, which can be constructed by including all possible
ways to assign a certain label.
This is illustrated in Figure~\ref{figure_three_element_domain} (left), where we use
the notation $\fixall{\Gamma}$, formally defined in Section~\ref{subsec:fix}.
Informally, for a language $\Gamma$ defined on domain $D$ that includes the
label $0$, $\fixall{\Gamma}$ is the language on domain $D\setminus\{0\}$ obtained
by including, for every
weighted relation $\gamma\in\Gamma$ of arity $n$ and a subset $U$ of
the arguments of $\gamma$, the weighted relation $\fix{U}{\gamma}$, which is a
weighted relation on $D\setminus\{0\}$ of arity $n-|U|$ defined as the
restriction of $\gamma$ that fixes the label $0$ to all arguments in $U$.

\begin{figure}[hbt]
\centering
\begin{tikzpicture}[scale=0.5]
\scope
\clip (0,0) ellipse (5.9cm and 2.5cm);
\fill[rounded corners=0.5cm, pattern=vertical lines,pattern color=red] ( 6.3, - 5.2) rectangle (1.1, 2.6);
\endscope

\scope
\clip (0,-0.55) ellipse (5.1cm and 1.7cm);
\fill[rounded corners=0.5cm, pattern=horizontal lines,pattern color= blue]  ( -6.3, - 5.2) rectangle (-1.1, 2.6);
\endscope

    \draw (0,0) ellipse (5.9cm and 2.5cm);
    \draw (0,-0.55) ellipse (5.1cm and 1.7cm);
    \fill[color = white] (0,-1.1) ellipse (4.1cm and 0.9cm);
    \draw[fill = white, pattern=horizontal lines,pattern color= blue] (0,-1.1) ellipse (4.1cm and 0.9cm);

    \draw[rounded corners=0.5cm, dashed] ( - 6.2, - 4.3) rectangle ( - 1.1, 2.6);
    \draw[rounded corners=0.5cm, dashed]  ( 6.2, -4.3) rectangle (1.1, 2.6);

\node[rectangle, fill = white, fill opacity=0.65] at (0, -1.1) {\color{white} {\SDS}};
\node at (0, -1.1) {{\SDS}};
\node at (0, 0.43) {{\SEDS}};
\node at (0, 1.8) {{\SIM}};

\node[align = center, font = \small] at (-3.6,  -3.3) {$\fixall{\Gamma}$ globally\\ $\ell$-tractable};
\node[align = center, font = \small] at (3.6,  - 3.3) {$\fixall{\Gamma}$ globally\\ $s$-intractable};
\end{tikzpicture}
\hskip0.7cm
\begin{tikzpicture}[scale=0.5]
\scope
\clip (0,0) ellipse (5.9cm and 2.5cm);
\fill[pattern=vertical lines,pattern color=red] ( 7, - 5.2) rectangle (0.5, 2.8);
\endscope

\scope
\clip (0,-0.55) ellipse (5.1cm and 1.7cm); 
\fill[pattern=horizontal lines,pattern color= blue]  ( -7, - 5.2) rectangle (0.5, 2.8);
\endscope

    \draw (0,0) ellipse (5.9cm and 2.5cm);
    \draw (0,-0.55) ellipse (5.1cm and 1.7cm);
    \fill[color = white] (0,-1.1) ellipse (4.1cm and 0.9cm);
    \draw[fill = white, pattern=horizontal lines,pattern color= blue] (0,-1.1) ellipse (4.1cm and 0.9cm);

    \draw[dashed]  (0.5, - 4.3) -- (0.5, 2.8);

\node[rectangle, fill = white, fill opacity=0.65] at ( - 0.6, -1.1) {\color{white} {\SDS}};
\node at ( - 0.6, -1.1) {{\SDS}};
\node[rectangle, fill = white, fill opacity=0.65] at ( - 0.6, 0.43) {\color{white} {\SEDS}};
\node at ( - 0.6, 0.43) {{\SEDS}};
\node at ( - 0.6, 1.8) {{\SIM}};

\node[align = center, font = \small] at (-2.5,  -3.4) {$\fixall{\Gamma}$ globally\\ $\ell$-tractable};
\node[align = center, font = \small] at (3.2,  - 3.4) {$\fixall{\Gamma}$ globally\\ $\ell$-intractable};

\end{tikzpicture}
\caption{Classification of {\SEDS} languages on arbitrary finite domains (left) and on three-element domains (right). A language $\Gamma$ is globally $\ell$-tractable when marked by
horizontal (blue) lines and globally $s$-intractable when marked by vertical (red)
lines, depending on the language $\fixall{\Gamma}$ on a smaller domain. 
(Recall that global $s$-intractability implies global $\ell$-intractability.)
In case of three-element domains, the Boolean language
$\fixall{\Gamma}$ is either globally $\ell$-tractable
or globally $\ell$-intractable, while this is not known for larger domains.
}
\label{figure_three_element_domain}
\end{figure}
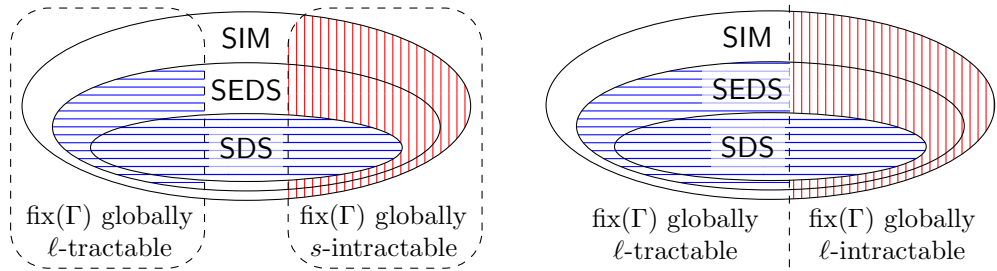

One implication of our results is a dichotomy theorem for lower-bounded VCSPs on the Boolean domain;
every Boolean language is either globally $\ell$-tractable or globally
$\ell$-intractable. Although lower-bounded VCSPs are more general than
surjective VCSPs, this classification coincides with the dichotomy theorem for
surjective VCSPs given by \cite{FUZ18:surjective}. 

In addition, combining our reduction of {\SEDS} languages to a smaller domain and the dichotomy theorem for the Boolean domain leads to a classification of all {\SEDS} languages on three-element domains with respect to $\ell$-tractability, which is featured on the right-hand side of Figure~\ref{figure_three_element_domain}.

The foundation of our results is an extension of the Generalised Min-Cut problem
that might be of independent interest. Given integers $p, q \in \mathbb{N}_{0}$, a graph with non-negative edge weights and a superadditive set
function defined on its vertices, the goal in the Bounded Generalised Min-Cut
problem is, just like in the {\GMC} problem, to find a subset of the vertices
such that the sum of the induced cut and the superadditive set function
evaluated on it are minimal among all possible solutions. The solution space,
however, is restricted to subsets containing at least $q$ and at most all but
$p$ vertices.

If an optimal solution has value $0$, there can be exponentially many optimal
solution, \mbox{e.g.} when there are no edges and the superadditive function always
evaluates to 0. Our main algorithmic result is that, for all instances with non-zero optimal value and for any constant bounds $p, q \in \mathbb{N}_{0}$, all solutions that are optimal up to a constant factor can be
enumerated in polynomial time (and thus, in particular, there are only
polynomially many of them).

We finish with two remarks on, as far as we can tell, unrelated work. First, it
is natural to consider Karger's elegant (randomised) min-cut
algorithm~\cite{Karger:1993}, which also allows to enumerate (polynomially many)
near-optimal cuts, and try to adapt it to the newly introduced Bounded
Generalised Min-Cut problem. Despite trying, we do not see any way of doing
it.\footnote{It appears to be an issue that the superadditive set function is
evaluated only for the solution set, while the set of remaining vertices may
exhibit an excessively large set function value even in an optimal solution.
That makes it implausible to think a local criterion for edge contractions could
incorporate the superadditive set function in a suitable manner, i.e. somehow
preventing the set function value from getting too large.} 
Moreover, we only know how to establish our tractability results on surjective
VCSPs by a reduction to the Bounded Generalised Min-Cut problem that includes
that superadditive function, but that one fails many properties required by
Karger's algorithm. (For instance, superadditive functions are not necessarily
submodular.) Second, it is notationally convenient to go back and forth between
weighted relations (on a domain of size $k + 1$) and $k$-set functions, as we will
explain in Section~\ref{sec:EDS} and use throughout the paper. We do not see a
connection (suggested by an anonymous reviewer of the extended abstract of this
work~\cite{mz19:stacs}) to the characterisation of arc consistency via set
polymorphisms~\cite{Feder:1998,Dalmau99:set}, which are properties of (weighted)
relations but not their equivalent description. More generally, we do not know
whether our tractability result could be established using recent work on
consistency methods for CSPs~\cite{Barto14:jacm} or LP relaxations for
VCSPs~\cite{Kolmogorov2015,tz17:sicomp}.

\paragraph*{Organisation}
We will proceed in the following manner. Section 2 gives a polynomial-time algorithm for enumerating all near-optimal optimal solutions of the Bounded Generalised Min-Cut problem. In Section 3, we extend the notion of {\EDS} to larger domains. A classification of languages from this extension is presented in Section 4. Section 5 provides a dichotomy theorem for lower-bounded VCSPs on the Boolean domain.

\section{The Bounded Generalised Min-Cut Problem}

\label{section:BGMC}

We begin by presenting our algorithm for the Bounded Generalised Min-Cut problem. The problem is based on the notion of superadditive set functions, which we define first.

\begin{definition}
	A \textit{set function} on a finite set $V$ is a function $f : 2^V \rightarrow \overline{\mathbb{Q}}$ defined on subsets of $V$; it is \textit{normalised} if it satisfies $f\left( \emptyset \right) = 0$ and $f\left(X\right) \geq 0$ for all $X \subseteq V$.

	A set function $f$ on $V$ is \textit{increasing} if it is normalised and $f\left(X\right) \leq f\left(Y\right)$ for all $X \subseteq Y \subseteq V$. It is \textit{superadditive} if it is normalised and, for all disjoint $X, Y\subseteq V$, it holds that
\begin{equation}\label{superadditive_equation}
	f\left(X\right) + f\left(Y\right) \leq f\left(X \cup Y\right).\tag{SUP}
\end{equation}
\end{definition}

Since equation (\ref{superadditive_equation}) implies that $f\left(X\right) \leq f\left(X\right) + f\left(Y \backslash X\right) \leq f\left(Y\right)$ for all $X \subseteq Y \subseteq V$, every superadditive set function must also be increasing.

\begin{definition}\label{Definition_Bounded_Generalised_Min_Cut}
	For $p, q \in \mathbb{N}_0$, the \textit{Bounded Generalised Min-Cut} problem with lower bound $q$ and upper bound $p$ is denoted by \BGMC{q}{p}.

	A \BGMC{q}{p} \textit{instance} $h$ is given by an undirected graph $G = \left(V, E\right)$ with edge weights $w : E \rightarrow \mathbb{Q}_{ \geq 0} \cup \left\{ \infty \right\}$ and an oracle defining a superadditive set function $f$ on $V$. For $X \subseteq V$, let $w\left(X\right)  =   \sum_{\substack{\left|\left\{u, v\right\} \cap X\right| = 1}}w\left(\left\{u, v\right\}\right)$ denote the weight of the cut induced by $X$. 

A \textit{solution} of instance $h$ is any set $X \subseteq V$ such that $\left|X\right| \geq q$ and $\left|X\right|  \leq \left|V\right| - p$. The objective is to minimise the value $h\left(X\right) = f\left(X\right) + w\left(X\right)$.
A solution $X$ is \textit{optimal} if the value $h\left(X\right)$ is minimal among all solutions for this instance. We denote the value of an optimal solution by $\lambda$. For any $\alpha  \geq 1$, a solution $X$ is $\alpha$-\textit{optimal} if $h\left(X\right) \leq  \alpha  \lambda$.
\end{definition}

The Generalised Min-Cut problem, simply denoted by {\GMC}, is the Bounded
Generalised Min-Cut problem with lower and upper bound 1. All $\alpha$-optimal solutions of a {\GMC} instance can be enumerated in polynomial time according to \cite[Theorem~5.11]{FUZ18:surjective}, which we restate here.

\begin{theorem}[\cite{FUZ18:surjective}]\label{GMC_Proof} 
	For any instance $h$ of the {\GMC} problem on $n$ vertices with optimal value $0 <  \lambda  <  \infty$ and any constant $\alpha  \in \mathbb{N}$, the number of $\alpha$-optimal solutions is at most $n^{20 \alpha  - 15}$. There is an algorithm that finds all of them in polynomial time.
\end{theorem}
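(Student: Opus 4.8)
The plan is to enumerate all $\alpha$-optimal solutions by reducing the problem to a bounded search that is driven by the cut term $w$ and \emph{pruned} by the superadditive term $f$. The starting point is two elementary bounds: for any $\alpha$-optimal solution $X$ we have both $w(X) \le \alpha\lambda$ and $f(X) \le \alpha\lambda$, since $f$ and $w$ are non-negative. The second, decisive observation is that superadditivity forces $f$ to be increasing (as already noted after the definition), so the sublevel set $\{X : f(X) \le \alpha\lambda\}$ is \emph{downward closed}, and every $\alpha$-optimal solution lies in this down-set. This is exactly why one cannot simply invoke a standard near-minimum-cut enumeration: writing $\lambda_w$ for the global minimum cut value, we only have $\lambda_w \le w(X^\ast) \le \lambda$, so the threshold $\alpha\lambda$ may be far larger than $\alpha\lambda_w$, and there can be super-polynomially many subsets with $w(X) \le \alpha\lambda$. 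The monotone function $f$ is what cuts these down to polynomially many, and any correct argument must use it; this is consistent with the remark in the introduction that Karger's algorithm, which sees only $w$, does not adapt.

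To localise the search I would fix a \emph{separating pair}. Every solution is a proper non-empty subset, hence contains some vertex $s$ and excludes some vertex $t$; so I would iterate over all $O(n^2)$ ordered pairs $(s,t)$, enumerate for each the $\alpha$-optimal solutions $X$ with $s \in X$ and $t \notin X$, and finally take the union over pairs and remove duplicates. For a fixed pair the task becomes: among the subsets separating $s$ from $t$, enumerate those minimising $h(X) = f(X) + w(X)$ up to the factor $\alpha$. The value $\lambda$ itself, needed to define the threshold, can be produced first by running the same search with $\alpha = 1$.

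For the fixed-pair subproblem I would use a recursive contraction-and-branching scheme, in the spirit of deterministic near-minimum-cut enumeration, rather than the randomised contraction of Karger. At each step I either contract an edge, committing two vertices to the same side of the cut, or branch on whether a chosen vertex lies inside or outside $X$, charging each branch to the slack $\alpha\lambda$. Here monotonicity of $f$ does the pruning: any partial solution whose forced inside-set already has $f$-value exceeding $\alpha\lambda$ can be discarded, and the down-closed structure of the $f$-sublevel set guarantees that the branching tree stays polynomial. The cut term $w$, being submodular, governs which contractions are safe and supplies the usual polynomial count of surviving candidate cuts, so that the overall number of leaves is $n^{O(\alpha)}$.

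The main obstacle is precisely the interaction between the non-submodular $f$ and the cut $w$: since $f$ is neither submodular nor (necessarily) supermodular in a way compatible with $w$, the objective $f+w$ admits no single min-cut or submodular-minimisation call, and the candidate cuts cannot be bounded by their weight alone. My resolution is to let $f$ enter only through its monotonicity (for pruning) and through the global cap $f(X) \le \alpha\lambda$ (which limits how many vertices of positive marginal $f$-contribution a candidate may contain), while the cut term alone drives the contraction. I expect the genuinely delicate part to be not any single conceptual step but the accounting: bounding how many contractions and branchings can occur before the accumulated value exceeds $\alpha\lambda$, and matching this to the precise count $n^{20\alpha-15}$ claimed in the statement. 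That bookkeeping is where the exponent, and the constants $20$ and $15$, would have to be pinned down.
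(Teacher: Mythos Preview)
First, note that the paper does not actually prove this theorem: it is quoted verbatim from~\cite{FUZ18:surjective} and used as a black box (inside Case~1 of Theorem~\ref{BGMC_Enumeration_Lower_Bound}). So there is no ``paper's proof'' to compare against here. That said, your proposal does not yet contain a working argument, and the gap is identifiable.

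Your plan hinges on the claim that the down-closed sublevel set $\{X : f(X)\le\alpha\lambda\}$, intersected with the cuts of weight at most $\alpha\lambda$, keeps a contraction-and-branching tree polynomial. But down-closedness alone gives no such bound: a down-set can easily have exponential size, and there is no mechanism in your sketch that turns monotonicity of $f$ into a \emph{charge} against the budget $\alpha\lambda$ when you branch. Concretely, if you branch on a vertex $v$, neither ``$v\in X$'' nor ``$v\notin X$'' need increase $f$ or $w$ at all, so nothing is consumed and the tree can blow up. Likewise, your remark that $w$ ``being submodular, governs which contractions are safe'' is not enough, because (as you yourself note) the relevant threshold $\alpha\lambda$ may be arbitrarily far above the global min-cut value $\lambda_w$, and the standard near-min-cut count $n^{O(\alpha)}$ is relative to $\lambda_w$, not to $\lambda$.

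What is missing is a structural lemma that couples $f$ and $w$ simultaneously, not separately. The paper's own generalisation (Theorem~\ref{BGMC_Enumeration_Lower_Bound}) shows what such a lemma looks like: Lemma~\ref{cutsplitting2} uses \emph{posimodularity} of $w$ together with \emph{superadditivity} of $f$ to prove that if $h(X)\le\alpha\lambda$ and $w(Y)\le\beta\lambda$ then $h(X\setminus Y)+h(X\cap Y)\le(\alpha+2\beta)\lambda$. This is what allows a genuine divide-and-conquer: one finds a small-weight cut $Y$ (an optimal solution of an induced {\GMC} instance), splits every candidate $X$ into $X\cap Y$ and $X\setminus Y$, and recurses on each piece with a controlled budget. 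The superadditivity of $f$ is used not as a monotone prune but as an additive split $f(X)\ge f(X\cap Y)+f(X\setminus Y)$, and the cut side needs posimodularity, not just submodularity. Your sketch never invokes either ingredient in this combined form, and without it I do not see how the exponent $20\alpha-15$ (or any polynomial bound) can be reached.
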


We will assume that all edges are strictly positive-valued, as they can be ignored otherwise. Similarly to \cite[Lemma~53]{FUZ18:surjective} for the {\GMC} problem, we can easily detect and solve the problem when $\lambda  = 0$ or $\lambda  =  \infty$.

\begin{lemma}\label{GMC_Lemma_Value_Distinction}
	For any $p, q \in \mathbb{N}_0$, where $q$ is a constant, a polynomial-time algorithm can determine whether the optimal value of a \BGMC{q}{p} instance $h$ on a graph $G = \left(V, E\right)$ is $\lambda  = 0$, $1 <  \lambda  <  \infty$ or $\lambda  =  \infty$. In case $\lambda  = 0$, it can provide an optimal solution.
\end{lemma}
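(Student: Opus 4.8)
The plan is to mirror the three-way case distinction of \cite[Lemma~53]{FUZ18:surjective} for the \GMC{} problem while coping with the size constraints $q\le|X|\le|V|-p$. First I would normalise the instance: discard the zero-weight edges (as permitted above) and scale all finite edge weights and the values returned by the oracle by a common positive integer, so that both $w$ and $f$ are integer-valued. This changes neither which sets are optimal, $0$-valued, or $\infty$-valued, nor the notion of $\alpha$-optimality, and it guarantees that every finite $h(X)$ is a non-negative integer; replacing the instance by its double then forces any finite positive optimum to satisfy $\lambda>1$. It therefore suffices to decide between $\lambda=0$, $0<\lambda<\infty$, and $\lambda=\infty$, exhibiting a witness in the first case.

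The technical heart is a single shrinking observation that exploits the hypothesis that $q$ is constant. Suppose $\mathcal{A}$ is a downward-closed family of subsets of a ground set whose atoms carry positive integer sizes, and suppose some member of $\mathcal{A}$ has total size in $[q,|V|-p]$. Starting from such a member and deleting atoms one at a time as long as the size stays $\ge q$, I reach an inclusion-minimal member $S$ whose size lies in $[q,|V|-p]$: the lower bound holds by construction, and the upper bound survives because deletions only decrease the size. Minimality forces each atom of $S$ to have size exceeding $\mathrm{size}(S)-q$, and a short calculation then shows that either $S$ consists of a single atom or it uses fewer than $2q$ atoms. Hence, whenever $\mathcal{A}$ contains a member of admissible size, it contains one using at most $2q$ atoms, and all such members can be located by enumerating the $k^{O(q)}$ unions of at most $2q$ atoms and querying the oracle — polynomially many queries, since $q$ is constant.

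I would apply this observation twice. For feasibility ($\lambda<\infty$), first contract every $\infty$-weight edge by merging its endpoints, so that any cut of the resulting super-vertices has finite weight; the feasible solutions are then exactly the unions of super-vertices $X$ with $f(X)<\infty$ and $q\le|X|\le|V|-p$. As $f$ is increasing, $\{X:f(X)<\infty\}$ is downward closed over the super-vertices, so the shrinking observation applies: a feasible solution exists if and only if one is found in the bounded enumeration, and otherwise $\lambda=\infty$. For the zero case, note that $h(X)=0$ forces $w(X)=0$ and $f(X)=0$; here $w(X)=0$ means $X$ is a union of connected components of $G$, and since $f$ is increasing the family $\{X:f(X)=0\}$ of such unions is again downward closed over the components. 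The same enumeration over unions of at most $2q$ components, now checking $f(X)=0$ together with the size bounds, either produces a $0$-valued admissible solution — in which case $\lambda=0$ and we output it — or certifies that no admissible set has value $0$. If the instance is feasible but has no $0$-valued solution, then $1<\lambda<\infty$.

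The main obstacle, and the reason the unbounded argument of \cite{FUZ18:surjective} does not transfer verbatim, is that the relevant families are only downward closed and not closed under union: superadditivity gives $f(X\cup Y)\ge f(X)+f(Y)$ for disjoint $X,Y$, so a union of $0$-valued (respectively finite-valued) sets need not be $0$-valued (respectively finite), and there is in general no single maximal witness to optimise over. The constant lower bound $q$ is exactly what rescues the search, bounding the number of atoms in a minimal witness and so reducing everything to a polynomial enumeration; the upper bound $p$ requires no constancy assumption, as it is preserved automatically under shrinking. The only remaining routine points are the integer-scaling justification for the strict inequality $\lambda>1$ and the degenerate case $q=0$, in which the empty set is already an admissible $0$-valued solution whenever $p\le|V|$.
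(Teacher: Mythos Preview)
Your approach is essentially the same as the paper's: both reduce the $\lambda=0$ and $\lambda=\infty$ tests to enumerating unions of $O(q)$ atoms (connected components, respectively $\infty$-edge super-vertices), using that $f$ is increasing so the relevant families are downward closed. The paper is slightly more direct than your shrinking observation: from a zero-valued solution $X=C_1\cup\dots\cup C_k$ it simply takes $Y=C_1\cup\dots\cup C_{\min(k,q)}$, which already has $|Y|\ge q$ (either $Y=X$, or $Y$ has $q$ non-empty components), $|Y|\le|X|\le|V|-p$, and $h(Y)=0$; so checking all $\binom{n}{\le q}$ unions of at most $q$ components suffices. Your bound of ``fewer than $2q$ atoms'' is correct but looser; the same minimality calculation you sketch actually gives $m\le q$.

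One caveat on your normalisation step: $f$ is given by an oracle, so you cannot in polynomial time discover a common denominator for all $2^{|V|}$ values of $f$ and scale to integers. Fortunately this step is only there to force the literal ``$1<\lambda$'' in the statement, which is a typo for ``$0<\lambda$'' (compare how the lemma is invoked in the proof of Theorem~\ref{Theorem-l-tractability-EDT0}); the paper's own proof does not address the strict inequality either. Your core three-way distinction $\lambda=0$ / $0<\lambda<\infty$ / $\lambda=\infty$ is sound and matches the paper.
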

\begin{proof}
	First, we assume $\lambda  = 0$. Consider some optimal solution $X \subseteq V$. Then $h(X) = 0$ implies that $X$ cannot cut any edges and, hence, must be a union of connected components $C_1, \dots, C_k \subseteq V$ from $G$ for some $k \in \mathbb{N}$. The union $Y = C_1 \cup \dots \cup C_{\min\left(k, q\right)}$ of up to $q$ of those components must still satisfy $q \leq \left|Y\right| \leq  \left|X\right|\leq \left|V\right| - p$ and $h(Y) = 0$, because the superadditive set function $f$ is increasing and $Y\subseteq X$.
	Consequently, an algorithm can check all $O\left(n^q\right)$ combinations of up to $q$ components from $G$ in order to find the solution $Y$. And vice versa, if no such solution of value $0$ is found, it can be concluded that $\lambda > 0$.

	Similarly, to probe whether $\lambda  =  \infty$, we consider those vertices that are connected by infinite-weight edges as components, because any finite-valued solution cannot cut those edges. It is then sufficient to check all $O\left(n^q\right)$ combinations of up to $q$ components to see whether a finite-valued solution exists. Otherwise, if all these candidates have infinite value when they are comprised of $q$ or more vertices, any solution that does not cut any infinity-edges must be a superset of one of these candidates and therefore have infinite value as well due to the increasing nature of $f$.
\end{proof}

Consequently, our goal is to provide a polynomial-time algorithm for enumerating
near-optimal solutions in the case that the optimal value is both positive and
finite. Before doing so, we give two auxiliary lemmas based on \cite[Lemma~5.6]{FUZ18:surjective} and \cite[Lemma~5.10]{FUZ18:surjective}.

\begin{lemma}\label{GMC_Lemma_Induced_Subgraph}
	For any $p, q  \in \mathbb{N}_0$, any \BGMC{q}{p} instance $h$ on a graph $G = \left(V, E\right)$ and any subset $V' \subseteq V$, there is a \BGMC{q}{p} instance $h'$ on the induced subgraph $G\left[V'\right]$ that preserves the objective value of all solutions $X \subseteq V'$. In particular, any $\alpha$-optimal solution $X$ of $h$ such that $X \subseteq V'$ is $\alpha$-optimal for $h'$ as well.
\end{lemma}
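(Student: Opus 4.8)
The plan is to keep the graph $G[V']$ unchanged and to absorb the ``missing'' cut weight---the weight of the edges of $G$ leaving $V'$---into a modified superadditive set function. Concretely, for each $v \in V'$ let $b(v) = \sum_{u \in V \setminus V'} w\left(\{v,u\}\right)$ be the total weight of edges from $v$ to the removed part, and define $f'\left(X\right) = f\left(X\right) + \sum_{v \in X} b(v)$ for every $X \subseteq V'$. The instance $h'$ on $G[V']$ then uses the edge weights inherited from $G$ together with the set function $f'$; note that $f'$ is still given by an oracle, since evaluating it costs one call to the oracle for $f$ plus a sum of boundary weights computed directly from the graph.

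First I would verify that $h'$ preserves values. For $X \subseteq V'$, an edge of $G$ with exactly one endpoint in $X$ either has its other endpoint in $V' \setminus X$, in which case it contributes to the induced cut $w_{G[V']}\left(X\right)$, or in $V \setminus V'$, in which case its weight is counted exactly once in $\sum_{v \in X} b(v)$. Hence $w_G\left(X\right) = w_{G[V']}\left(X\right) + \sum_{v \in X} b(v)$, and therefore $h\left(X\right) = f\left(X\right) + w_G\left(X\right) = f'\left(X\right) + w_{G[V']}\left(X\right) = h'\left(X\right)$.

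Next I would check that $f'$ really is a superadditive set function on $V'$. Normalisation is immediate: $f'\left(\emptyset\right) = 0$, and $f'\left(X\right) \geq 0$ because both $f\left(X\right)$ and all weights $b(v)$ are non-negative. For superadditivity the key observation is that $X \mapsto \sum_{v \in X} b(v)$ is additive over disjoint unions, so for disjoint $X, Y \subseteq V'$ the superadditivity of $f$ gives $f'\left(X \cup Y\right) = f\left(X \cup Y\right) + \sum_{v \in X} b(v) + \sum_{v \in Y} b(v) \geq f'\left(X\right) + f'\left(Y\right)$.

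Finally, for the $\alpha$-optimality claim I would argue via the optimal values. Every solution $Y$ of $h'$ satisfies $q \leq |Y| \leq |V'| - p \leq |V| - p$ and $Y \subseteq V \subseteq V$, so it is also a solution of $h$ with $h'\left(Y\right) = h\left(Y\right)$; hence the set of solutions of $h'$ is contained in that of $h$, and the optimal value $\lambda'$ of $h'$ satisfies $\lambda' \geq \lambda$. Consequently, if $X$ is an $\alpha$-optimal solution of $h$ with $X \subseteq V'$ (and thus still meeting the upper bound so that it qualifies as a solution of $h'$), then $h'\left(X\right) = h\left(X\right) \leq \alpha \lambda \leq \alpha \lambda'$, so $X$ is $\alpha$-optimal for $h'$. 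The content of the lemma is really the one-line idea of folding the boundary into $f'$; the only point demanding care---and thus the closest thing to an obstacle---is getting the inequality $\lambda' \geq \lambda$ in the right direction and confirming that the tighter solution bounds on $V'$ are respected, rather than any genuine difficulty in the estimates.
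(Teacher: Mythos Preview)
Your approach is exactly the paper's: fold the boundary edges into the set function via $f'(X)=f(X)+\sum_{v\in X}b(v)$, check superadditivity, and use $\lambda'\ge\lambda$ for the $\alpha$-optimality claim. You even fill in details (the explicit edge-split and the additivity of the boundary term) that the paper leaves implicit.

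One small slip: in your final paragraph you write that $X\subseteq V'$ ``thus still [meets] the upper bound so that it qualifies as a solution of $h'$''. That implication is false in general: from $|X|\le |V|-p$ and $X\subseteq V'$ you cannot conclude $|X|\le |V'|-p$. The paper's proof glosses over this point as well, and in the paper's applications it never bites (the lemma is invoked with $p=1$ and the relevant sets are explicitly checked to be proper subsets of $V'$), but as written your parenthetical justification is not correct. Everything else is fine, including the direction of the inequality $\lambda'\ge\lambda$, and apart from the typo $Y\subseteq V\subseteq V$ (you mean $Y\subseteq V'\subseteq V$).
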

\begin{proof}
	Edges with exactly one endpoint in $V'$ need to be taken into account separately because they do not appear in the induced subgraph. We accomplish that by defining the new set function $f'$ by
\begin{equation*}
	f'\left(X\right) = f\left(X\right) +  \sum_{u \in X} \sum_{v \in V \backslash V'}w\left(u, v\right)
\end{equation*}
for all $X \subseteq V'$. By the construction, $f'$ is superadditive, and the objective value $h'\left(X\right)$ for any solution $X \subseteq V'$ equals $h\left(X\right)$.

Note that the minimum objective value for $h'$ is greater than or equal to the minimum objective value for $h$. Therefore, any solution $X \subseteq V'$ that is $\alpha$-optimal for $h$ is also $\alpha$-optimal for $h'$.
\end{proof}

When a solution of some bounded GMC instance is split into two parts, the next lemma gives a bound on the values of these parts based on edges involved in the split.

\begin{lemma}\label{cutsplitting2}
	Let $h$ be a \BGMC{q}{p} instance over vertices $V$ with optimal value $\lambda$ and let $X, Y \subseteq V$ such that $h\left(X\right) \leq  \alpha  \lambda$ and $w\left(Y\right) \leq  \beta  \lambda$ for some $\alpha \geq 1$ and $\beta  \geq 0$. Then it holds
\begin{equation*}
	h\left(X \backslash Y\right) + h\left(X \cap Y\right) \leq \left( \alpha  + 2 \beta \right)\lambda.
\end{equation*}
\end{lemma}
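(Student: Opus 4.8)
The plan is to split the objective into its set-function part and its cut part and treat them separately. Writing $h(Z) = f(Z) + w(Z)$, I would first note that $X \setminus Y$ and $X \cap Y$ are disjoint with union $X$, so superadditivity of $f$, i.e. (\ref{superadditive_equation}), immediately gives $f(X \setminus Y) + f(X \cap Y) \le f(X)$. It then remains only to control the cut part, and the whole statement reduces to the purely graph-theoretic inequality
\begin{equation*}
w(X \setminus Y) + w(X \cap Y) \le w(X) + 2 w(Y).
\end{equation*}
Indeed, combining this with the bound on $f$ yields $h(X \setminus Y) + h(X \cap Y) \le f(X) + w(X) + 2w(Y) = h(X) + 2w(Y) \le \alpha\lambda + 2\beta\lambda$, which is exactly the claimed bound.

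To establish the cut inequality I would argue edge by edge. Partition $V$ into the four regions $A = X \cap Y$, $B = X \setminus Y$, $C = Y \setminus X$ and $D = V \setminus (X \cup Y)$, and recall that an edge $\{u,v\}$ of weight $w_e$ contributes $w_e$ to $w(S)$ exactly when precisely one of its endpoints lies in $S$. It suffices to check, for each way the two endpoints can fall into these regions, that the edge's contribution to the left-hand side does not exceed its contribution to the right-hand side; summing over all edges then gives the inequality. Edges with both endpoints in the same region contribute $0$ everywhere, so only the six crossing types need to be examined. A short case analysis confirms the claim in each case: for instance, an edge between $A$ and $B$ contributes $2w_e$ to both sides, an edge between $A$ and $D$ contributes $w_e$ to the left and $3w_e$ to the right, and an edge between $B$ and $D$ contributes $w_e$ to each side.

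The argument is essentially routine, so I do not expect a genuine obstacle; the one point to get right is the coefficient of $w(Y)$. A bound using a single $w(Y)$ on the right would be false: an edge between $A = X \cap Y$ and $B = X \setminus Y$ is cut by both $X \cap Y$ and $X \setminus Y$, hence contributes $2w_e$ to the left-hand side, yet it lies entirely inside $X$ and so contributes only $w_e$ to $w(X) + w(Y)$. Doubling $w(Y)$ exactly compensates for such $A$--$B$ edges, which is why the statement carries the factor $2\beta$. The other thing to be careful about is simply applying superadditivity to the correct disjoint decomposition $X = (X \setminus Y) \cup (X \cap Y)$.
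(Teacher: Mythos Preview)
Your proof is correct and follows the same overall structure as the paper's: split $h$ into $f + w$, handle $f$ via superadditivity applied to the disjoint decomposition $X = (X \setminus Y) \cup (X \cap Y)$, and then establish the purely cut-theoretic inequality $w(X \setminus Y) + w(X \cap Y) \le w(X) + 2w(Y)$ before combining. The only difference is in how that cut inequality is verified. You partition $V$ into four regions and do a direct edge-by-edge case check; the paper instead invokes posimodularity of the cut function, $w(A) + w(B) \ge w(A \setminus B) + w(B \setminus A)$, applied twice---once with $(A,B) = (X,Y)$ and once with $(A,B) = (Y, Y \setminus X)$---and adds the two resulting inequalities to cancel the intermediate term $w(Y \setminus X)$. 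The paper's route is a bit slicker and avoids enumerating cases, while yours is more elementary and self-contained; both yield the same inequality with the same constant, and your observation about the $A$--$B$ edges forcing the coefficient $2$ on $w(Y)$ is exactly the right tightness check.
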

\begin{proof}
	It is well-known and can easily be verified that the cut function $w$ is posimodular, meaning that $w\left(A\right) + w\left(B\right) \geq w\left(A \backslash B\right) + w\left(B \backslash A\right)$ for all $A, B \subseteq V$.
	
	As a consequence, we have
	\begin{align*}
		w\left(X\right) + w\left(Y\right) &  \geq w\left(X \backslash Y\right) + w\left(Y \backslash X\right) \\
		w\left(Y\right) + w\left(Y \backslash X\right) &  \geq w\left(X \cap Y\right) + w\left( \emptyset \right), 
	\end{align*}
	and hence,
	\begin{equation*}
		w\left(X\right) + 2w\left(Y\right) \geq w\left(X \backslash Y\right) + w\left(X \cap Y\right).
	\end{equation*}
	By superadditivity of $f$, it holds $f\left(X\right) \geq f\left(X \backslash Y\right) + f\left(X \cap Y\right)$. The claim then follows from the fact that $f\left(X\right) + w\left(X\right) + 2w\left(Y\right) \leq \left( \alpha  + 2 \beta \right) \lambda $.
\end{proof}

With these preparations on hand, we now proceed with our main algorithmic result. 

\begin{theorem}\label{BGMC_Enumeration_Lower_Bound}
	For some constant $q \geq 2$, let $h$ be a \BGMC{q}{1} instance on a graph $G = \left(V, E\right)$ of size $n = \left|V\right|$ with optimal value $0 <  \lambda  <  \infty$. Let $Y \cup Z = V$ be a partition of $V$ and let $Y_1 \cup \dots \cup Y_k = Y$ for some $k \in \mathbb{N}_0$ be a partition of $Y$ satisfying $0 < \left|Y_i\right| < q$ and $h\left(Y_i\right) \leq \frac{ \lambda }{3q}$ for all $1 \leq i \leq k$.

Then for every constant $\alpha \geq 1$, at most $\frac{\left|Z\right|}{n} \cdot n^{ \tau \left(q, \alpha \right)}$ $\alpha$-optimal solutions $X \subseteq V$ of $h$ satisfy $\left|X \cap Y\right| < q$, where $ \tau \left(q,  \alpha \right) = 60q\alpha  + 41q + 7$. These solutions can all be enumerated in polynomial time.
\end{theorem}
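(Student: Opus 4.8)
The plan is to reduce the enumeration to the unbounded \GMC{} enumeration of Theorem~\ref{GMC_Proof}, applied to the induced subgraph on $Z$, after disposing of both the bounded constraint and the part of the solution lying in $Y$. Fix any $\alpha$-optimal solution $X$ with $|X \cap Y| < q$, and write $S = X \cap Y$ and $W = X \cap Z$. Since $|X| \geq q$ while $|S| < q$, the set $W$ is nonempty, and moreover at most $q-1$ of the pieces $Y_i$ meet $X$ (each contributes at least one vertex to $S$, and the pieces are disjoint). First I would peel off these pieces one at a time. Starting from $h(X) \leq \alpha\lambda$ and using $w(Y_i) \leq h(Y_i) \leq \lambda/(3q)$, repeated application of Lemma~\ref{cutsplitting2} (each step with $\beta = 1/(3q)$, discarding the nonnegative $h(\,\cdot \cap Y_i)$ term) yields $h(W) \leq \bigl(\alpha + \tfrac{2(q-1)}{3q}\bigr)\lambda \leq (\alpha + \tfrac{2}{3})\lambda$. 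Thus $W$ is a nonempty, near-optimal subset of $Z$ (the exceptional case $W = Z$, i.e.\ $Z \subseteq X$, forces $X = Z \cup S$ with $|S| < q$, so it contributes only $O(n^{q-1})$ solutions and is absorbed into the final bound).

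Next I would pass to the induced instance. By Lemma~\ref{GMC_Lemma_Induced_Subgraph} there is an instance $h_Z$ on $G[Z]$ with $h_Z(W') = h(W')$ for every $W' \subseteq Z$, which I regard as a plain \GMC{} instance (lower and upper bound $1$) on the $|Z|$ vertices of $Z$; let $\mu$ be its optimal value. The decisive point is that $\mu \geq \lambda/(3q)$: every nonempty proper $W' \subseteq Z$ with $|W'| \geq q$ is itself a feasible \BGMC{q}{1} solution of $h$, so $h_Z(W') = h(W') \geq \lambda$, whereas the only way $\mu$ could be smaller is a subset of size $< q$ of value at most $\lambda/(3q)$. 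Granting $\mu \geq \lambda/(3q)$, the peeled set $W$ is $\bigl(\tfrac{(\alpha + 2/3)\lambda}{\mu}\bigr)$-optimal for $h_Z$, hence at most $(3q\alpha + 2q)$-optimal, so Theorem~\ref{GMC_Proof} bounds the number of admissible $W$ by $|Z|^{20(3q\alpha + 2q) - 15} = |Z|^{60q\alpha + 40q - 15}$ and enumerates them in polynomial time. Finally, each counted $X$ is recovered as $W \cup S$ with $S \subseteq Y$, $|S| < q$, i.e.\ at most $q\,n^{q-1}$ choices of trace per $W$. Writing $|Z|^{m} \leq \tfrac{|Z|}{n}\,n^{m}$ and combining, the total is at most $\tfrac{|Z|}{n}\,q\,n^{60q\alpha + 41q - 16} \leq \tfrac{|Z|}{n}\,n^{\tau(q,\alpha)}$, the slack between $-16$ and $+7$ in the exponent of $n$ comfortably absorbing the factor $q$ and any rounding of the approximation ratio to an integer. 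This is also where the leading term $60q\alpha$ originates: the threshold $\lambda/(3q)$ in the hypothesis is exactly what turns an $(\alpha+\tfrac23)$-approximation for $h$ into a $\Theta(q\alpha)$-approximation for $h_Z$, and $20 \cdot 3q\alpha = 60q\alpha$.

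The main obstacle is justifying $\mu \geq \lambda/(3q)$, since the hypothesis only excludes cheap small sets \emph{inside} $Y$: a priori, $Z$ could contain a set $W^*$ with $0 < |W^*| < q$ and $h(W^*) \leq \lambda/(3q)$, collapsing $\mu$ and blowing up the \GMC{} bound. I would handle this by induction on $|Z|$. In the base case $Z$ has no such set, and the computation above applies verbatim. Otherwise I absorb one such $W^*$ into $Y$ as a new piece $Y_{k+1} = W^*$ (it satisfies the piece conditions), obtaining a legal partition with strictly smaller $Z' = Z \setminus W^*$; the inductive hypothesis then counts all $\alpha$-optimal $X$ with $|X \cap (Y \cup W^*)| < q$ within budget $\tfrac{|Z'|}{n}\,n^{\tau}$. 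The delicate remainder consists of the solutions with $|X \cap Y| < q$ but $|X \cap (Y \cup W^*)| \geq q$, which necessarily meet $W^*$; for these I would split off $W^*$ via Lemma~\ref{cutsplitting2} (again $\beta = 1/(3q)$), reducing each to a near-optimal solution on the induced subgraph together with a constant number ($\leq 2^{q}$) of choices for the part inside $W^*$, and charge them against the budget difference $\tfrac{|Z| - |Z'|}{n}\,n^{\tau} = \tfrac{|W^*|}{n}\,n^{\tau} \geq n^{\tau - 1}$. The hard part is precisely this bookkeeping: showing that the solutions lost when the trace crosses the threshold $q$ upon absorption are few enough to fit inside the extra budget that the factor $\tfrac{|Z|}{n}$ (rather than a bare $n^{\tau}$) makes available, so that the induction closes with the stated constants.
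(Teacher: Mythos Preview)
Your overall architecture matches the paper's: induct with the partition, let $h_Z$ be the induced \GMC{} instance on $Z$, branch on whether its optimum $\mu=\lambda_Z$ is at least $\lambda/(3q)$, and in the small-$\mu$ case absorb an optimal $W^*=Y_{k+1}$ into $Y$. Your base case (Case~1) and the absorption step (Case~2b) are essentially the paper's arguments.

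The gap is in your handling of the ``threshold-crossing'' solutions (your Case~2a: $|X\cap Y|<q$ but $|X\cap(Y\cup W^*)|\ge q$). You propose to split off only $W^*$ via Lemma~\ref{cutsplitting2} with $\beta=1/(3q)$, obtaining $h(X\setminus W^*)+h(X\cap W^*)\le(\alpha+\tfrac{2}{3q})\lambda$, and then count $X\setminus W^*$ as a near-optimal solution on an induced subgraph. But this step \emph{increases} the approximation factor to $\alpha+\tfrac{2}{3q}$, so by induction the count is governed by $\tau(q,\alpha+\tfrac{2}{3q})=\tau(q,\alpha)+40$, far too large to fit in the budget $n^{\tau-1}$ you allotted. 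No amount of bookkeeping closes that gap; what is missing is a mechanism that makes $\alpha$ \emph{decrease}.

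The paper supplies exactly this mechanism. Instead of peeling $W^*$ alone, one peels the union $U'=Y_{i_1}\cup\dots\cup Y_{i_t}\cup W^*$ of all relevant pieces at once (so $w(U')\le q\cdot\tfrac{\lambda}{3q}=\tfrac{\lambda}{3}$), and applies Lemma~\ref{cutsplitting2} with $\beta=\tfrac13$ to get $h(X\setminus U')+h(X\cap U')\le(\alpha+\tfrac23)\lambda$. Now the hypothesis $|X\cap Y'|\ge q$ is used: $X\cap U'=X\cap Y'$ is itself a feasible \BGMC{q}{1} solution, hence $h(X\cap U')\ge\lambda$, which forces
\[
h(X\cap Z')=h(X\setminus U')\le\bigl(\alpha-\tfrac13\bigr)\lambda.
\]
So $X\cap Z'$ is $(\alpha-\tfrac13)$-optimal for the induced instance $h_{Z'}$, and one invokes the theorem inductively on the \emph{smaller graph} $G[Z']$ with the trivial partition $\emptyset\cup Z'$. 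The drop $\tau(q,\alpha)-\tau(q,\alpha-\tfrac13)=20q$ is precisely what absorbs the $n^{q}$ choices for $X\cap Y$, the $2^{q-1}$ choices for $X\cap W^*$, and the extra additive constants, yielding the Case~2a bound $\le\tfrac{1}{n}\,n^{\tau(q,\alpha)}$. Correspondingly, the induction is on $n+\tfrac{|Z|}{n+1}$ (primarily on $n$, secondarily on $|Z|$), since Case~2a descends to a strictly smaller graph while Case~2b keeps $n$ fixed and shrinks $|Z|$; your induction on $|Z|$ alone would not license the Case~2a recursion.
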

Note that with $Y =  \emptyset$ and $Z = V$, this theorem states for any \BGMC{q}{1} instance  that the number of $\alpha$-optimal solutions is bounded by $n^{\tau\left(q,  \alpha \right)}$.
\begin{proof}
	Proof by induction over $n + \frac{\left|Z\right|}{n + 1}$; that is, induction  primarily over $n$ and, for equal values of $n$, also over $\left|Z\right|$. For $n \leq q$ or $Z =  \emptyset$, there are no solutions of the described form and hence, the statement holds.
	
	Now, fix some $n > q$, some \BGMC{q}{1} instance $h$ on a graph $G = \left(V, E\right)$  of size $n$ with optimum value $0 <  \lambda  <  \infty$ and partitions $Y \cup Z = V$ and $Y_1 \cup \dots \cup Y_k = Y$ as described. By the induction hypothesis, we can assume that the theorem holds for every graph of size $n' < n$ as well as for every partition $\tilde{Y} \cup \tilde{Z} = V$ of graph $G$ satisfying $\left|\tilde{Z}\right| < \left|Z\right|$.
	
	To simplify matters, we can replace any infinite edge weights in $G$ with a
  large value ($\alpha \cdot (1 + f(V) + \sum_{w(u,v) < \infty}w\left(u,v\right))$ works) without affecting any of our assumptions or the set of $\alpha$-optimal solutions we are looking for. Thus, we will subsequently assume that all edges are finite-valued.

	According to Lemma \ref{GMC_Lemma_Induced_Subgraph}, there exists a \BGMC{q}{1} instance $h_Z$ on the induced subgraph $G\left[Z\right]$ that preserves the objective value of every solution $X' \subsetneq Z$ with respect to $h$. In the following, we treat $h_Z$ as a {\GMC} instance (i.e. with lower bound $1$). Let $\lambda_Z$ denote the optimal value of
$h_Z$. We can assume $\lambda_Z <  \infty$ because otherwise, due to the absence of infinite-weight edges in $G$ and the superadditivity of $f$, no finite-valued solution $X \subseteq V$ of $h$ can satisfy $X \cap Z \neq  \emptyset$. Let $Y_{k + 1} \subsetneq Z$ be an optimal solution of $h_Z$, i.e. $h_Z\left(Y_{k + 1}\right) =  \lambda_Z$.

	If $h\left(Y_{k + 1}\right)$ is sufficiently large, we show that it is essentially sufficient to enumerate {\GMC} solutions of $G\left[Z\right]$ up to a constant factor. For small $h\left(Y_{k + 1}\right)$, our strategy will be to reduce the problem to the partition $Y' \cup Z' = V$, where $Y' = Y_1 \cup \dots \cup Y_k \cup Y_{k + 1}$ and $Z' = Z \backslash Y_{k + 1}$. This approach is outlined in Figure \ref{fig:BGMCproof}.

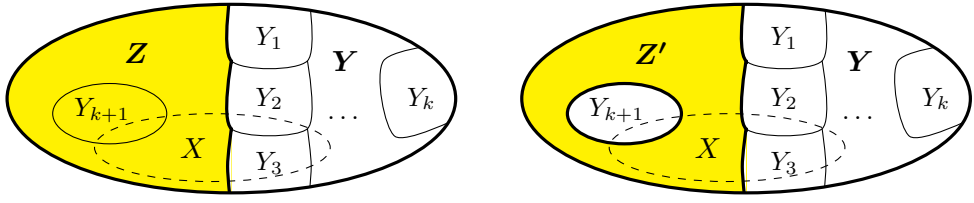
\begin{figure}[hbt]
	\centering
	\begin{tikzpicture}[scale=0.5]

	\scope
	\clip (0,0) ellipse (5.9cm and 2.5cm);
	\fill[color= yellow]  ( -7, - 3) rectangle (0, 3);
	\fill [white] plot [smooth] coordinates {(0,3) (0,1) (2,1) (2,3)};
	\fill [white] plot [smooth] coordinates {(0,1) (0,-0.8) (2, -0.8) (2,1)};
	\fill [white] plot [smooth] coordinates {(0,-0.8) (-0.1, -1.8) (0, -3)};
	
	\endscope
	
	\draw [very thick] (0,0) ellipse (5.9cm and 2.5cm);
	\scope
	\clip (0,0) ellipse (5.9cm and 2.5cm);
	\draw plot [smooth] coordinates {(0,3) (0,1) (2,1) (2,3)};
	\draw plot [smooth] coordinates {(0,1) (0,-0.8) (2, -0.8) (2,1)};
	\draw plot [smooth] coordinates {(0,-0.8) (-0.1, -1.8) (0, -3)};
	\draw plot [smooth] coordinates {(2,-0.8) (2.1, -1.8) (2, -3)};
	\draw plot [smooth] coordinates {(6,1.5) (4,1) (4.3,-1) (6, - 0.5)};
	
	\draw (-3.2 , -0.4) ellipse (1.5cm and 0.8 cm);
	
	\scope
	\clip (-1 , -3) rectangle (0.03, 3);
	\draw [very thick] plot [smooth] coordinates {(0,3) (0,1) (2,1) (2,3)};
	\draw [very thick] plot [smooth] coordinates {(0,1) (0,-0.8) (2, -0.8) (2,1)};
	\draw [very thick] plot [smooth] coordinates {(0,-0.8) (-0.1, -1.8) (0, -3)};
	\endscope
	
	\draw[dashed]  (-0.5,-1.3) ellipse (3.1cm and 0.9 cm);

	\node[align = center, font = \small] at (1, 1.6) {$Y_1$};
	\node[align = center, font = \small] at (1, 0) {$Y_2$};
	\node[align = center, font = \small] at (1, -1.7) {$Y_3$};
	\node[align = center, font = \small] at (3, -0.5) {$\dots$};
	\node[align = center, font = \small] at (5, 0) {$Y_k$};
	\node[align = center] at (3, 1) {$\boldsymbol{Y}$};
	
	\node[align = center] at (-2.5, 1.2) {$\boldsymbol{Z}$};
	\node[align = center, font = \small] at (-3.4,-0.3) {$Y_{k + 1}$};
	
	\node[align = center] at (-1.0, - 1.3) {$X$};
	\endscope

	\end{tikzpicture}
	\hskip0.7cm
	\begin{tikzpicture}[scale=0.5]
	
	\scope
	\clip (0,0) ellipse (5.9cm and 2.5cm);
	\fill[color= yellow]  ( -7, - 3) rectangle (0, 3);
	\fill [white] plot [smooth] coordinates {(0,3) (0,1) (2,1) (2,3)};
	\fill [white] plot [smooth] coordinates {(0,1) (0,-0.8) (2, -0.8) (2,1)};
	\fill [white] plot [smooth] coordinates {(0,-0.8) (-0.1, -1.8) (0, -3)};
	
	\endscope
	
	\draw [very thick] (0,0) ellipse (5.9cm and 2.5cm);
	\scope
	\clip (0,0) ellipse (5.9cm and 2.5cm);
	\draw plot [smooth] coordinates {(0,3) (0,1) (2,1) (2,3)};
	\draw plot [smooth] coordinates {(0,1) (0,-0.8) (2, -0.8) (2,1)};
	\draw plot [smooth] coordinates {(0,-0.8) (-0.1, -1.8) (0, -3)};
	\draw plot [smooth] coordinates {(2,-0.8) (2.1, -1.8) (2, -3)};
	\draw plot [smooth] coordinates {(6,1.5) (4,1) (4.3,-1) (6, - 0.5)};
	
	\draw [very thick, fill= white] (-3.2 , -0.4) ellipse (1.5cm and 0.8 cm);
	
	\scope
	\clip (-1 , -3) rectangle (0.03, 3);
	\draw [very thick] plot [smooth] coordinates {(0,3) (0,1) (2,1) (2,3)};
	\draw [very thick] plot [smooth] coordinates {(0,1) (0,-0.8) (2, -0.8) (2,1)};
	\draw [very thick] plot [smooth] coordinates {(0,-0.8) (-0.1, -1.8) (0, -3)};
	\endscope
	
	\draw[dashed]  (-0.5,-1.3) ellipse (3.1cm and 0.9 cm);

	\node[align = center, font = \small] at (1, 1.6) {$Y_1$};
	\node[align = center, font = \small] at (1, 0) {$Y_2$};
	\node[align = center, font = \small] at (1, -1.7) {$Y_3$};
	\node[align = center, font = \small] at (3, -0.5) {$\dots$};
	\node[align = center, font = \small] at (5, 0) {$Y_k$};
	\node[align = center] at (3, 1) {$\boldsymbol{Y}$};
	
	\node[align = center] at (-2.5, 1.2) {$\boldsymbol{Z'}$};
	\node[align = center, font = \small] at (-3.4,-0.3) {$Y_{k + 1}$};
	
	\node[align = center] at (-1.0, - 1.3) {$X$};
	\endscope

	\end{tikzpicture}
	\caption{Given a partition $V = Y \cup Z$ with $Y = Y_1 \cup \dots \cup Y_k$ of the vertices of a \BGMC{q}{1} instance $h$, we want to find every solution $X$ such that $h\left(X\right) \leq  \alpha \lambda $ and $\left|X \cap Y\right| < q$. Consider the {\GMC} instance $h_Z$ on $G\left[Z\right]$ with optimal solution $Y_{k + 1}$. If $h\left(Y_{k + 1}\right) \geq \frac{ \lambda }{3q}$, $X \cap Z$ must be a near-optimal solution of $h$ (left, Case 1). Otherwise, we apply the induction hypothesis either on the subgraph $G\left[Z'\right]$, where $Z' = Z \backslash Y_{k + 1}$ (right, Case 2a), or on the partition $V = Z' \cup \left(Y_1 \cup \dots \cup Y_{k + 1}\right)$ (Case 2b).}
	\label{fig:BGMCproof}
\end{figure}

	Consider any $\alpha$-optimal solution $X \subseteq V$ of $h$ satisfying $\left|X \cap Y\right| < q$. For some integer $t$, let $i_1, \dots, i_t$ denote indices such that $X \cap Y = X \cap \left(Y_{i_1} \cup \dots \cup Y_{i_t}\right)$, i.e. such that $X$ has vertices only in $Y_{i_1}, \dots, Y_{i_t}$ and $Z$. Since $\left|X \cap Y\right| < q$, we require that $t < q$. Let $U = Y_{i_1} \cup \dots \cup Y_{i_t}$.
	
	\paragraph*{Case 1:} If $\lambda_Z \geq \frac{\lambda}{3q}$, we aim to bound the value $h\left(X \cap Z\right)$ relative to $\lambda_Z$.
	Since $w\left(Y_i\right) \leq h\left(Y_i\right) \leq \frac{\lambda}{3q} $ for every $1 \leq i \leq k$ by assumption, it must hold that
	\begin{align*}
		w\left(U\right)  & =  \sum_{\left|\left\{u, v\right\} \cap U\right| = 1}w\left(\left\{u, v\right\}\right) \leq \sum_{j = 1}^{t}\left(\sum_{\left|\left\{u, v\right\} \cap Y_{i_j}\right| = 1}w\left(\left\{u, v\right\}\right) \right) \\
		& = \sum_{j = 1}^{t}w\left(Y_{i_j}\right)  \leq t \cdot \frac{\lambda}{3q}  < q \cdot \frac{\lambda}{3q}  = \frac{ \lambda }{3}.
	\end{align*}
	According to Lemma \ref{cutsplitting2} with $\beta  = \frac{1}{3}$, it follows that
	\begin{equation*}
		h\left(X \backslash U\right) + h\left(X \cap U\right) \leq \left( \alpha  + \frac{2}{3}\right) \lambda,  
	\end{equation*}
	and in particular, since $X \cap Z = X \backslash U$, we have
	\begin{equation*}
		h\left(X \cap Z\right) \leq \left( \alpha  + \frac{2}{3}\right) \lambda.
	\end{equation*}
	Assuming $ \lambda_Z \geq \frac{\lambda}{3q}  $, we can limit the value $h\left(X \cap Z\right)$ relative to $\lambda_Z$ by
	\begin{equation*}
		\left( \alpha  + \frac{2}{3}\right) \lambda  \leq \left( \alpha  + \frac{2}{3}\right) \cdot 3q\lambda_Z = \left(3q\alpha  + 2q\right) \lambda_Z.
	\end{equation*}
	Given that $X \cap Z \neq  \emptyset$, the above equation implies that if $X \cap Z \subsetneq Z$, then $X \cap Z$ is a $\left(3q\alpha  + 2q\right)$-optimal solution of the {\GMC} instance $h_Z$.
	According to Theorem \ref{GMC_Proof}, there are at most
	\begin{equation*}
		n^{20\left\lceil 3q\alpha  + 2q\right\rceil  - 15} \leq n^{20\left(3q\alpha  + 2q + 1\right) - 15} = n^{60q\alpha + 40q + 5}
	\end{equation*}
	$\left(3q \alpha  + 2q\right)$-optimal solutions of {\GMC} instance $h_Z$, which can all be enumerated in polynomial time. Pairing up these choices for $X \cap Z$, in addition to the possibility $X = Z$, with the at most $\sum_{i = 0}^{q - 1}\binom{n}{i}  \leq \sum_{i = 0}^{q - 1}n^i\leq \sum_{i = 0}^{q - 1}\left(\frac{1}{2}\right)^{q - i}n^q \leq n^{q}$ sets of up to $q - 1$ vertices from $Y$ gives at most
	\begin{equation*}
		\left(n^{60q\alpha + 40q + 5} +1\right)  \cdot n^q\leq n^{60q\alpha + 41q + 6} = \frac{1}{n}  \cdot n^{\tau\left(q,  \alpha \right)} \leq \frac{\left|Z\right|}{n} \cdot n^{\tau\left(q,  \alpha \right)}\tag*{(Case 1)}
	\end{equation*} overall choices for $X$ in this case, as required.

	\paragraph*{Case 2a:} Now, let's assume that $\lambda_Z \leq \frac{\lambda}{3q}$ and furthermore that $\left|X \cap Y'\right| \geq q$, where $Y' = Y \cup Y_{k + 1}$. Then it must hold $w\left(Y_{k + 1}\right) \leq  \lambda_Z \leq \frac{\lambda}{3q}$. Let $U' = Y_{i_1} \cup \dots \cup Y_{i_t} \cup Y_{k + 1}$ so that it holds $X \cap Y' \subseteq U'$. Similar to the previous case, we can bound $w\left(U'\right)$ by
	\begin{equation*}
		w\left(U'\right) \leq w\left(Y_{i_1}\right) + \dots + w\left(Y_{i_t}\right) + w\left(Y_{k + 1}\right) \leq \left(t + 1\right) \cdot \frac{ \lambda }{3q}  \leq q \cdot \frac{\lambda}{3q}  = \frac{ \lambda }{3}.
	\end{equation*}
	According to Lemma \ref{cutsplitting2} with $\beta  = \frac{1}{3}$, it must then hold that
	\begin{equation*}
		h\left(X \backslash U'\right) + h\left(X \cap U'\right) \leq \left( \alpha  + \frac{2}{3}\right) \lambda.
	\end{equation*}
	Assuming that $\left|X \cap Y'\right| \geq q$, the set $X \cap U' = X \cap Y'$ is a solution of $h$ and must have value $h\left(X \cap U'\right) \geq  \lambda$. For $Z' = Z \backslash Y_{k + 1}$, it therefore holds that
	\begin{equation*}
		h\left(X \cap Z'\right) = h\left(X \backslash U'\right)\leq \left( \alpha  + \frac{2}{3}\right) \lambda - h\left(X \cap U'\right) \leq \left( \alpha  - \frac{1}{3}\right) \lambda.
	\end{equation*}
	Let $h_{Z'}$ denote the \BGMC{q}{1} instance on the induced subgraph $G\left[Z'\right]$ that preserves the value of $h$ as detailed in Lemma \ref{GMC_Lemma_Induced_Subgraph}.
	Unless $\left|X \cap Z'\right| < q$ or $X \cap Z' = Z'$, the set $X \cap Z'$ is an $\left( \alpha  - \frac{1}{3}\right)$-optimal solution of $h_{Z'}$ (in particular, this case can be ignored when $\alpha < \frac{4}{3}$).
	By applying the induction hypothesis on $h_{Z'}$ with the trivial partition $ \emptyset  \cup Z' = Z'$, it follows that the number of $\left( \alpha  - \frac{1}{3}\right)$-optimal solutions is at most
	\begin{equation*}
		\frac{\left|Z'\right|}{\left|Z'\right|} \cdot \left(\left|Z'\right|\right)^{ \tau \left(q,  \alpha  - \frac{1}{3}\right)} \leq n^{ \tau \left(q,  \alpha  - \frac{1}{3}\right)}.
	\end{equation*}
	In addition, there are at most $ \sum_{i = 0}^{q - 1}\binom{n}{i}\leq n^{q}$ subsets of $Z'$ that have size less than $q$. Accounting also for the possibility $X \cap Z' = Z'$, there are at most 
	\begin{equation*}
		n^{ \tau \left(q,  \alpha  - \frac{1}{3}\right)} + n^q + 1 \leq 3n^{ \tau \left(q,  \alpha  - \frac{1}{3}\right)} \leq n^{ \tau \left(q,  \alpha - \frac{1}{3}\right) + 1}
	\end{equation*}
	choices for $X \cap Z'$ in this case.
	
	Next, we limit the number of choices for $X \cap Y'$. Since $X$ contains at most $q - 1$ vertices from $Y$ (less than $n^q$ choices) and since $Y_{k + 1}$ contains at most $q - 1$ vertices (at most $2^{q - 1}$ choices), the number of possible choices for $X \cap Y'$ is limited by
	\begin{equation*}
		n^q \cdot 2^{q - 1} \leq n^{2q}.
	\end{equation*}
	
	Pairing up each possible choice for $X \cap Z'$ with each choice for $X \cap Y'$ gives a total of at most
	\begin{equation*}
		n^{\tau\left(q,  \alpha  -\frac{1}{3}\right) + 1} \cdot n^{2q}  = n^{\tau\left(q, \alpha   -\frac{1}{3}\right) + 2q + 1} \leq \frac{1}{n} \cdot n^{\tau\left(q,  \alpha \right)}\tag*{(Case 2a)}
	\end{equation*}
	solutions, where the last inequality follows from the fact that
	\begin{equation*}
		\tau\left(q,  \alpha \right) - \tau\left(q,  \alpha  - \frac{1}{3}\right) = 60q \cdot \frac{1}{3} \geq 2q + 2.
	\end{equation*}

	\paragraph*{Case 2b:} Finally, let's assume that $\lambda_Z \leq \frac{ \lambda }{3q}$ and that $\left|X \cap Y'\right| < q$. Since $h_Z\left(Y_{k + 1}\right) = \lambda_Z <  \lambda$ implies $\left|Y_{k + 1}\right| < q$, we can apply the induction hypothesis for instance $h$ with the partition $Y' \cup Z' = V$ to limit the number of choices for $X$. Consequently, this number is at most
	\begin{equation*}
		\frac{\left|Z'\right|}{n} \cdot n^{\tau\left(q, \alpha \right)} \leq \frac{\left|Z\right| - 1}{n} \cdot n^{\tau\left(q, \alpha \right)}.\tag*{(Case 2b)}
	\end{equation*}

	Summing up the bounds for Case 2a and Case 2b, the overall number of choices for $X$ if $\lambda_Z \leq \frac{ \lambda }{3q}$ is bounded by
	\begin{equation*}
		\frac{1}{n} \cdot n^{\tau\left(q, \alpha \right)} + \frac{\left|Z\right| - 1}{n} \cdot n^{\tau\left(q, \alpha \right)} = \frac{\left|Z\right|}{n} \cdot n^{\tau\left(q, \alpha \right)}.
	\end{equation*}
	This proves the upper bound of $\frac{\left|Z\right|}{n} \cdot n^{\tau\left(q, \alpha \right)}$ solutions of the described form.
	
	A polynomial-time algorithm to enumerate all such solutions follows almost immediately from these calculations. 
	Given that $\lambda$ might not be known beforehand, we simply check both Case 1 and Case 2.

	Note that the induction hypothesis is used only in Case 2a, where all $\left( \alpha  - \frac{1}{3}\right)$-optimal solutions of \BGMC{q}{1} instance $h_{Z'}$ with partition $\emptyset  \cup Z'$ need to be computed, and in Case 2b, where all $\alpha$-optimal solutions of \BGMC{q}{1} instance $h_Z$ with partition $Y' \cup Z'$ need to be computed. 
	It is straightforward to verify that the algorithmic complexity of all required operations except for these two recursive calls can be bounded by some polynomial $\mbox{poly}\left(n\right)$.
	We show by induction that $T_{ \alpha }\left(n, Z\right) = 3 \alpha n^{3 \alpha } \cdot \left|Z\right| \cdot \mbox{poly}\left(n\right)$ is an upper bound on the overall complexity. 
	\begin{align*}
	T_{ \alpha }\left(n, Z\right) &  \leq \mbox{poly}\left(n\right) + T_{ \alpha  - \frac{1}{3}}\left(\left|Z'\right|, Z'\right) + T_{ \alpha }\left(n, Z'\right) \\
		  &  \leq  \mbox{poly}\left(n\right) + \left(3\alpha - 1\right) n^{3\alpha} \cdot \mbox{poly}\left(n\right) + 3 \alpha n^{3 \alpha } \cdot\left(\left|Z\right| - 1\right)\cdot \mbox{poly}\left(n\right) \\
		&  \leq 3 \alpha n^{3 \alpha } \cdot\left|Z\right|\cdot \mbox{poly}\left(n\right)\qedhere
	\end{align*}
\end{proof}

\begin{corollary}\label{BGMC-Enumeration}
	For any $p, q \in \mathbb{N}_0$ and $\alpha \geq 1$, where $q$ and $\alpha$ are constants, and for any \BGMC{q}{p} instance $h$ with optimal value $0 <  \lambda  <  \infty$, all $\alpha$-optimal solutions can be enumerated in polynomial time.
\end{corollary}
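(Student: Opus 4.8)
The plan is to reduce an arbitrary \BGMC{q}{p} instance to the case treated by Theorem \ref{BGMC_Enumeration_Lower_Bound}, namely $q \geq 2$ and $p = 1$, by peeling off the two bounds one at a time. The guiding principle throughout is a monotonicity observation: whenever we shrink the feasible region of a bounded GMC instance while keeping the objective $h$ fixed, the optimal value can only increase. Consequently, if $\lambda$ denotes the optimum of the original instance and $\lambda'$ the optimum of a restricted subinstance with $\lambda' \geq \lambda$, then every $\alpha$-optimal solution $X$ of the original instance that remains feasible for the subinstance satisfies $h(X) \leq \alpha\lambda \leq \alpha\lambda'$ and is therefore $\alpha$-optimal for the subinstance as well. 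This lets us enumerate a polynomial-size superset of the target solutions without any blow-up in the approximation factor $\alpha$; we finish by computing $\lambda$ as the smallest $h$-value among the feasible candidates (the true optimum is among them, being $1$-optimal) and discarding every candidate $X$ violating $q \leq |X| \leq n - p$ or $h(X) \leq \alpha\lambda$.

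First I would dispose of degenerate lower bounds. If $q = 0$ then the empty set is a solution of value $0$, forcing $\lambda = 0$ and contradicting our assumption; hence $q \geq 1$. To reduce $q = 1$ to $q \geq 2$, I would evaluate $h$ on each of the at most $n$ singleton solutions directly, and separately enumerate the $\alpha$-optimal solutions of the \BGMC{2}{p} instance on the same graph. Since enlarging the lower bound from $1$ to $2$ only shrinks the feasible region, the monotonicity observation guarantees that every $\alpha$-optimal solution of the original \BGMC{1}{p} instance of size at least $2$ is captured by the \BGMC{2}{p} enumeration (using Lemma \ref{GMC_Lemma_Value_Distinction} to first check whether that subinstance has optimum $0$, $\infty$, or a finite positive value). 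Thus it suffices to handle $q \geq 2$.

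Next I would reduce an arbitrary upper bound to $p = 1$. For $p \geq 1$, I would iterate over all $O(n^{p-1})$ subsets $T \subseteq V$ with $|T| = p - 1$. For each such $T$, set $V' = V \setminus T$ and use Lemma \ref{GMC_Lemma_Induced_Subgraph} to build a \BGMC{q}{1} instance $h'$ on $G[V']$ whose objective agrees with $h$ on every $X \subseteq V'$. Its feasible sets, those $X \subseteq V'$ with $q \le |X| \le |V'| - 1 = n - p$, are all feasible for the original instance, so the optimum $\lambda'$ of $h'$ satisfies $\lambda' \geq \lambda > 0$; using Lemma \ref{GMC_Lemma_Value_Distinction} I would skip any $T$ with $\lambda' = \infty$ and otherwise invoke Theorem \ref{BGMC_Enumeration_Lower_Bound} to list all $\alpha$-optimal solutions of $h'$ in polynomial time. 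The point is that any $\alpha$-optimal solution $X$ of the original instance has $|V \setminus X| \geq p$, so we may choose $T \subseteq V \setminus X$ of size $p - 1$; then $X \subsetneq V'$ is feasible and hence, by monotonicity, $\alpha$-optimal for the corresponding $h'$, so it is enumerated. The boundary case $p = 0$ (where $X = V$ becomes admissible) is handled by running the $p = 1$ procedure and adding $V$ itself as an extra candidate.

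Combining the layers yields polynomially many candidate sets, produced in polynomial time, among which every $\alpha$-optimal solution appears; a final pass computes $\lambda$ and filters as above. The only real subtlety, and the step I expect to need the most care, is the monotonicity bookkeeping: one must verify at each reduction that the subinstance optimum is at least the original optimum, so that restricting the search to $\alpha$-optimal solutions of the subinstances loses none of the target solutions and keeps the factor $\alpha$ (and hence the exponent $\tau(q, \alpha)$ of Theorem \ref{BGMC_Enumeration_Lower_Bound}) constant. Everything else is routine counting: $O(n^{p-1})$ choices of $T$, at most $n$ singletons, and one extra candidate for $V$, multiplied by the polynomial bounds supplied by Theorem \ref{BGMC_Enumeration_Lower_Bound}.
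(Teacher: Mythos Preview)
Your reduction handles the lower bound correctly, but the upper-bound step has a genuine gap: the statement only fixes $q$ and $\alpha$ as constants, not $p$. Your enumeration over all $\binom{n}{p-1}$ subsets $T$ of size $p-1$ is therefore not polynomial in general. (The paper is deliberate about this asymmetry; compare also Lemma~\ref{GMC_Lemma_Value_Distinction}, which likewise requires only $q$ to be constant.)

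The paper sidesteps this entirely with a one-line trick that you may find instructive: instead of removing vertices, it absorbs the upper bound into the superadditive part by setting
\[
f'(X) = \begin{cases}\infty & \text{if } |X| > |V| - p,\\ f(X) & \text{otherwise.}\end{cases}
\]
One checks directly that $f'$ is still superadditive (if $|X \cup Y| \le |V| - p$ both summands are unchanged; otherwise the right-hand side is $\infty$), and the resulting \BGMC{q}{1} instance $h' = f' + w$ has exactly the same finite-valued solutions as the original \BGMC{q}{p} instance. A single call to Theorem~\ref{BGMC_Enumeration_Lower_Bound} then suffices, with no iteration over $p$ at all. The residual cases $p = 0$ or $q < 2$ contribute at most $|V| + 2$ extra candidate sets, handled as you do. If $p$ were assumed constant your argument would be fine, but as stated the corollary needs this stronger reduction.
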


\begin{proof}
	Let $h = f + w$ be a \BGMC{q}{p} instance with $0 <  \lambda  <  \infty$. First, we assume that $p \geq 1$ and $q \geq 2$. The superadditive set function
\begin{equation*}
	f'\left(X\right) = \begin{cases} \infty & \text{if }\left|X\right| > \left|V\right| - p \\
	f\left(X\right) & \text{otherwise}\end{cases}
\end{equation*}
defines a \BGMC{q}{1} instance $h' = f' + w$ where every solution $X \subseteq V$ of size $\left|X\right| > \left|V\right| - p$ has infinite value so that the set of finite-valued solutions and their values are identical for $h$ and $h'$. Therefore, it is sufficient to enumerate all $\alpha$-optimal solutions of $h'$, which can be accomplished in polynomial time according to Theorem  \ref{BGMC_Enumeration_Lower_Bound}

If $p = 0$ or $q < 2$, there are up to $\left|V\right| + 2$ additional solutions that can all be checked in polynomial time.
\end{proof}

\section{Extending {\EDS} to Larger Domains}
\label{sec:EDS}

In this section, we formally introduce the classes {\SIM}, {\SEDS} and {\SDS}.
In order to simplify our notation, we will subsequently always consider the $\left(k + 1\right)$-element domain $D = \left\{0, 1, \dots, k\right\}$ for some integer $k$. Any other domain of size $k + 1$ can simply be relabelled without affecting its properties.
One label from the domain will play a special role; without loss of generality
(due to relabellings), it will be $0$.

\subsection{$k$-Set Functions}
It will be convenient to go back and forth between weighted relations and $k$-set functions, which is, subject to a minor technical assumption, always possible.

\begin{definition}
	Let $k \in \mathbb{N}$ and let $V$ be a finite set. A \textit{$k$-set function} on $V$ is a function $f : \left(k + 1\right)^V \rightarrow \overline{\mathbb{Q}}$ defined on $k$-tuples of pairwise disjoint subsets of $V$. 
A $k$-set function $f$ over $V$ is \textit{normalised} if it satisfies $f\left( \emptyset, \dots,  \emptyset \right) = 0$ and $f\left(X_1, \dots, X_k\right) \geq 0$ for all disjoint $X_1, \dots, X_k \subseteq V$.
\end{definition}

Note that a $1$-set function is simply a \textit{set function} as defined in Section \ref{section:BGMC}. 
The correspondence between weighted relations and $k$-set functions is formalised by the next definition.

\begin{definition}
	Let $\gamma$ be an $n$-ary weighted relation on the $\left(k + 1\right)$-element domain $D = \left\{0, 1, \dots, k\right\}$, and let $f$ be the $k$-set function on $V = \left[n\right]$ that is defined for disjoint sets $X_1,\dots, X_{k} \subseteq V$ by $f \left(X_1, \dots, X_{k}\right) = \gamma \left(\boldsymbol{x}\right)$, where the $i$-th coordinate of $\boldsymbol{x}$ is given by $x_i = d$ if $i \in X_d$ for some $0 \neq d \in D$ and $x_i = 0$ otherwise. Then $\gamma$ \textit{corresponds} to $f$.

	Furthermore, we say that $\gamma$ \textit{corresponds under normalisation} to a $k$-set function if $\gamma \left(\boldsymbol{0}^{n}\right) <  \infty$ and $ \gamma \left(\boldsymbol{0}^n\right) \leq \gamma \left(\boldsymbol{x}\right)$ for all $\boldsymbol{x} \in D^n$. In this case, the $k$-set function corresponding under normalisation to $\gamma$ is the normalised $k$-set function corresponding to $\gamma  -  \gamma \left(\boldsymbol{0}^{n}\right)$, i.e. the weighted relation with offset such that the assignment $\boldsymbol{0}^{n}$ evaluates to $0$.
\end{definition}

According to this definition, there is a unique $k$-set function corresponding to every weighted
relation on the $\left(k + 1\right)$-element domain, and vice versa.
Furthermore, assuming that $\gamma \left(\boldsymbol{0}^{n}\right) <  \infty$, a
weighted relation $\gamma$ corresponds under normalisation to a $k$-set function
precisely if it admits multimorphism $\langle c_0 \rangle$, which we will formally define in Section~\ref{sec:Boolean}.

The next definition states when a $k$-set function is approximated by a (1-)set function. This approximation will serve as central tool in order to bring the structure of languages from larger domains essentially down to a Boolean domain.
\begin{definition}\label{Definition-alpha-approximation}
Let $f$ be a $k$-set function and $g$ a set function on $V$. We say that $g$ $\alpha$-\textit{approximates} $f$ if, for all disjoint $X_1,\dots, X_{k} \subseteq V$, it holds that
\begin{equation*}
	g\left(X_1 \cup \dots \cup  X_{k}\right) \leq f\left(X_1,\dots, X_{k}\right) \leq  \alpha  \cdot g\left(X_1  \cup \dots \cup X_{k}\right).
\end{equation*}
\end{definition}

\subsection{Fixing a Label: Reduced Languages}
\label{subsec:fix}

Reducing a language to a smaller domain by fixing the occurrences of label 0, as defined subsequently, will become a central tool in our classification.

\begin{definition}\label{Definition-Fixing-Relation}
	Let $\gamma$ be a weighted relation on domain $D$ of arity $n$ and let $U \subseteq \left[n\right]$. Then $\fix{U}{\gamma}$ is the weighted relation on domain $D^*=D \backslash \left\{0\right\}$ of arity $m = n - \left|U\right|$ defined for $x_1,\dots, x_m \in D^*$ by
	\begin{equation*}
	\fix{U}{\gamma}\left(x_1, \dots, x_m\right) = \gamma \left(y_1,\dots,y_n\right), \quad \text{where } y_i = \begin{cases} 0 & \text{if }i \in U \\
	 x_{\left|\left[i\right]\setminus U\right|} & \text{otherwise.}\end{cases}
	\end{equation*}
\end{definition}

In other words, $\fix{U}{\gamma}$ takes an assignment from domain $D^*$ to all variables except for those with index in $U$, and evaluates it through $\gamma$ by assigning label $0$ to the remaining variables. In Definition~\ref{def:fixlanguage}, we generalise this concept in order to express the language that is generated by fixing every possible assignment of label 0.

\begin{definition}
	\label{def:fixlanguage}
Let $\Gamma$ be a language on domain $D$.
For any $\gamma  \in  \Gamma$, let $\fixall{ \gamma }$ denote the set $\left\{\fix{U}{ \gamma } : U \subseteq \left[ \operatorname{ar}\left( \gamma\right) \right] \right\}$ generated by fixing any possible subset of variables to label 0. We define the language $\fixall{ \Gamma }$ on domain $D^*=D \backslash \left\{0\right\}$ by
$
	\fixall{ \Gamma } =  \bigcup_{ \gamma  \in  \Gamma }\fixall{ \gamma }.
$
\end{definition}

\subsection{Extending {\EDS} to Larger Domains}

The class {\EDS}, or \textit{essentially a downset}, has been introduced in \cite{FUZ18:surjective} for the Boolean domain.
\begin{definition}\label{Definition-EDS}
	For any $\alpha  \geq 1$, a normalised set function $f$ on $V$ is $\alpha$-{\EDS} if, for all $X, Y \subseteq V$, it holds that
\begin{equation}\label{Definition-EDS_equation}
   f\left(X \backslash Y\right) \leq \alpha  \cdot \left( f \left(X\right) + f \left(Y\right)\right).\tag{\EDS}
\end{equation}

A weighted relation is $\alpha$-{\EDS} if it corresponds under normalisation to a set function that is $\alpha$-{\EDS}. Moreover, a language $\Gamma$ is {\EDS} if there is some $\alpha  \geq 1$ such that every weighted relation $ \gamma   \in  \Gamma$ is $\alpha$-{\EDS}.
\end{definition}

Fulla et al. showed~\cite{FUZ18:surjective} that {\EDS} languages are globally $s$-tractable. We improve upon this result by proving that such languages are in fact globally $\ell$-tractable, and we extend the idea of being essentially a downset to larger domains through the classes {\SIM}, {\SEDS} and {\SDS}.

Intuitively, a language is {\SIM}, or \textit{similar to a Boolean language}, if it can be approximated by a language over the Boolean domain using Definition \ref{Definition-alpha-approximation}. More precisely, for each weighted relation, the value of any two assignments that assign label 0 to the same set of variables must be equal up to a constant factor. This way, when disregarding constant factors, all non-zero labels can be treated as a single one, leading us essentially to the Boolean domain. 

\begin{definition}
	Let $f$ be a normalised $k$-set function on set $V$. For any $\alpha  \geq 1$, $f$ is called $\alpha$-{\SIM} if, for all disjoint $X_1,\dots, X_{k} \subseteq V$ and all disjoint $Y_1,\dots, Y_{k} \subseteq V$ such that $X_1 \cup \dots \cup X_{k} = Y_1 \cup\dots \cup Y_{k}$, it holds that
\begin{equation}\label{Definition-SIM_equation}
	f \left(X_1, \dots, X_{k}\right) \leq  \alpha  \cdot f \left(Y_1, \dots, Y_{k}\right).\tag{\SIM}
\end{equation}

A weighted relation is $\alpha$-{\SIM} if it corresponds under normalisation to a $k$-set function that is $\alpha$-{\SIM}. Moreover, a language $\Gamma$ is {\SIM} if there is some $\alpha  \geq 1$ such that every weighted relation $ \gamma   \in  \Gamma$ is $\alpha$-{\SIM}.
\end{definition}

Note that every normalised set function is $1$-{\SIM}. Hence, {\EDS} is a subclass of {\SIM}.
Going beyond the Boolean domain, the class {\SEDS} of languages \textit{similar to {\EDS}} arises as a natural generalisation of {\EDS}. Intuitively, {\SEDS} contains precisely those languages that can be approximated by {\EDS} languages.

\begin{definition}\label{Definition-SEDS}
For any $\alpha  \geq 1$, a normalised $k$-set function $f$ on $V$ is $ \alpha $-{\SEDS} if it is $\alpha$-{\SIM} and, for all disjoint $X_1, \dots, X_{k} \subseteq V$ and all disjoint $Y_1, \dots, Y_{k} \subseteq V$, it holds that
\begin{equation}\label{Definition-SEDS_equation}
    f\left(X_1 \backslash Y_1,\dots,  X_{k} \backslash Y_{k}\right) \leq \alpha \cdot \left(f\left(X_1, \dots, X_{k}\right) + f\left(Y_1,\dots, Y_{k}\right)\right). \tag{\SEDS}
\end{equation}

A weighted relation is $\alpha$-{\SEDS} if it corresponds under normalisation to a $k$-set function that is $\alpha$-{\SEDS}. Moreover, a language $\Gamma$ is {\SEDS} if there is some $\alpha  \geq 1$ such that every weighted relation $ \gamma   \in  \Gamma$ is $\alpha$-{\SEDS}.
\end{definition}

The class {\SDS}, or \textit{similar to a downset}, imposes a stricter requirement than {\SEDS}. When any arguments of a weighted relation are changed to label $0$, the value should decrease, stay equal or increase by at most a constant factor. Intuitively, weighted relations of these languages can be approximated by increasing set functions.

\begin{definition}\label{Definition-SDS}
For any $\alpha  \geq 1$, a normalised $k$-set function $f$ on $V$ is $ \alpha $-{\SDS} if it is $\alpha$-{\SIM} and in addition, for all disjoint $X_1, \dots, X_{k}, Y_1,\dots, Y_{k} \subseteq V$, it holds that
\begin{equation}\label{Definition-SDS_equation}
	 f\left(X_1, \dots, X_{k}\right) \leq  \alpha  \cdot f\left(X_1 \cup Y_1, \dots, X_{k} \cup Y_{k}\right).\tag{\SDS}
\end{equation}

A weighted relation is $\alpha$-{\SDS} if it corresponds under normalisation to a $k$-set function that is $\alpha$-{\SDS}, and a language $\Gamma$ is {\SDS} if there is some $\alpha  \geq 1$ such that every weighted relation $ \gamma   \in  \Gamma$ is $\alpha$-{\SDS}.
\end{definition}

Note that {\SDS} is a subclass of {\SEDS}. To see this, consider any $\alpha$-{\SDS} $k$-set function $f$ on $V$. Then it holds for all disjoint $X_1, \dots, X_{k} \subseteq V$ and all disjoint $Y_1, \dots, Y_{k} \subseteq V$ that
\begin{equation*}
	f\left(X_1 \backslash Y_1,\dots,  X_{k} \backslash Y_{k}\right) \leq  \alpha  \cdot f\left(X_1, \dots, X_{k}\right)  \leq  \alpha  \cdot \left(f\left(X_1, \dots, X_{k}\right) + f\left(Y_1,\dots, Y_{k}\right)\right),
\end{equation*} 
proving that $f$ is $\alpha$-{\SEDS}.

\section{Classifying {\SEDS} and {\SDS} Languages}
\label{sec:classification}

In this section, we first show that a {\SEDS} language $\Gamma$ is globally $\ell$-tractable if it is {\SDS} or if the reduced language $\fixall{\Gamma}$ is globally $\ell$-tractable. Afterwards, we prove global $s$-intractability of the remaining {\SEDS} languages conditioned on global $s$-intractability of $\fixall{\Gamma}$.

We begin by restating~\cite[Theorem~5.17]{FUZ18:surjective} concerning {\EDS} languages and then devise similar approximations for {\SEDS} and {\SDS} languages.

\begin{theorem}[\cite{FUZ18:surjective}]\label{Theorem_EDS_Approximation_FullaZ17}
	For any $\alpha$-{\EDS} set function $f$ on $V$, there exists a {\GMC} instance $h$ that   $\alpha^{n + 2}\left(n^3 + 2n\right)$-approximates $f$, where $n = \left|V\right|$.
\end{theorem}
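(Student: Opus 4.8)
The goal is to exhibit a {\GMC} instance whose value function $h(X) = g(X) + w(X)$, with $g$ superadditive and $w$ a cut, satisfies $h(X) \le f(X) \le \beta \cdot h(X)$ for all $X \subseteq V$ with $\beta = \alpha^{n+2}(n^3 + 2n)$; this is exactly the $k=1$ case of Definition~\ref{Definition-alpha-approximation}. The two halves of this sandwich are handled by different mechanisms, and the plan is to arrange the lower bound $h \le f$ by construction and to spend the {\EDS} inequality of Definition~\ref{Definition-EDS} only on the upper bound $f \le \beta h$. Each elementary use of the {\EDS} inequality peels off one vertex and costs a factor $\alpha$, so an $O(n)$-step peeling is what produces the $\alpha^{n+2}$ in $\beta$, while the polynomial factor $n^3 + 2n$ comes from summing the $O(n^2)$ pairwise contributions along a chain of length $O(n)$.

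The first step is to account for the infinite values, which for an {\EDS} function are of two kinds. A pair $u,v$ may be \emph{glued}, meaning that separating them (placing exactly one of them in the chosen set) forces value $\infty$; I would detect such pairs from the places where $f$ is infinite and encode them by infinite-weight edges $w_{uv} = \infty$, so that any cut separating a glued pair is already infinite. Gluing genuinely occurs for {\EDS} functions — for instance $f(\{u\}) = f(\{v\}) = \infty$ with $f(\{u,v\}) < \infty$ is $\alpha$-{\EDS} — which is why the cut, and not only the superadditive part, is indispensable; this reflects the description of {\EDS} as a downset together with binary equalities. The remaining infinities, which after gluing form an upward-closed family of forbidden sets, are encoded by letting $g$ take the value $\infty$ there, since a $\{0,\infty\}$-valued function supported on an up-set is superadditive. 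These two devices reproduce the feasibility pattern of $f$ exactly, so it remains to approximate $f$ on the finite region.

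On the finite region I would split $f$ into a pairwise (cut) part and a monotone (superadditive) part. Finite edge weights $w_{uv}$ are chosen to capture the extent to which $f$ penalises separating $u$ from $v$, and $g$ is taken to absorb the residual increasing backbone of $f$; both can be read off from the singleton and pairwise values of $f$, chosen conservatively so that neither component ever overshoots $f$ (securing $h \le f$). For the upper bound, given an arbitrary $X$ I would bound $f(X)$ by repeatedly applying the {\EDS} inequality to remove the vertices of $X$ one at a time, reducing $f(X)$ — up to a factor $\alpha$ per removal, hence $\alpha^{|X|} \le \alpha^{n}$ overall — to a combination of singleton and pairwise values of $f$. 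These local quantities are in turn dominated by $g(X) + w(X)$, and summing the $O(n^2)$ pairwise terms (with the $O(n)$ chain length) contributes the polynomial factor; combining the exponential and polynomial losses yields the stated $\beta$.

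The main obstacle is to meet the two bounds \emph{simultaneously}. The upper bound $f \le \beta h$ needs $h$ to be large wherever $f$ is large, and $f$ can be large for two different reasons — a heavy separation penalty, which only the cut $w$ can supply, or a monotone/downset obstruction, which only the superadditive $g$ can supply — so the split of $f$ into its cut-like and monotone parts must be accurate enough that the right component is charged in each regime. At the same time the lower bound $h \le f$ forbids either component from overshooting, and $g$ is constrained to be superadditive (hence increasing), which a generic {\EDS} function is not: the natural superadditive functions attached to $f$, such as the maximum of $\sum_i f(X_i)$ over partitions of $X$, are majorants rather than minorants. Reconciling ``large enough for the upper bound'' with ``not too large for the lower bound, and superadditive'' is the heart of the construction. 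The secondary, quantitative difficulty is to organise the chain of {\EDS} applications so that the accumulated multiplicative error stays within $\alpha^{n+2}$ and each local value is charged only a polynomial number of times, which is where the exact exponent $n^3 + 2n$ is pinned down.
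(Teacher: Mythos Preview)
The paper does not contain a proof of this statement: Theorem~\ref{Theorem_EDS_Approximation_FullaZ17} is quoted verbatim from~\cite{FUZ18:surjective} (their Theorem~5.17) and used as a black box, so there is no in-paper argument to compare your proposal against.

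As for the proposal itself, it is a strategy outline rather than a proof, and several of the load-bearing steps are not pinned down. You never actually define $g$ or the finite edge weights $w_{uv}$; saying they are ``read off from the singleton and pairwise values of $f$, chosen conservatively'' does not tell the reader what they are, and without a concrete formula one cannot check either $h\le f$ or superadditivity of $g$. Your treatment of infinities also needs care: the claim that, after handling glued pairs by infinite edges, the remaining infinite locus of $f$ is upward-closed is not immediate from the {\EDS} inequality. From $f(X)=\infty$ and $X\subseteq Y$ one gets $f(X)=f\bigl(Y\setminus(Y\setminus X)\bigr)\le\alpha\bigl(f(Y)+f(Y\setminus X)\bigr)$, which only forces $f(Y)=\infty$ \emph{or} $f(Y\setminus X)=\infty$; turning this into an up-set statement requires an argument you have not given. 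Finally, the accounting that produces the exact constant $\alpha^{n+2}(n^3+2n)$ is asserted (``$O(n^2)$ pairwise terms times $O(n)$ chain length'') but not carried out, and matching a specific constant from a cited theorem typically requires reproducing the specific construction, not just its shape. If you want a self-contained proof here, you should consult the original construction in~\cite{FUZ18:surjective} and either reproduce it or explain precisely how your alternative $g$ and $w$ are defined and why the bounds hold with the stated constant.
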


\begin{lemma}
	For any $ \alpha $-{\SEDS} $k$-set function $f$ on $V$, there exists an $\alpha$-{\EDS} set function $g$ that $\alpha^2$-approximates $f$.
\label{EDT0-EDS-approximation}
\end{lemma}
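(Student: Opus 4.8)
The plan is to approximate $f$ by the set function obtained from \emph{collapsing all non-zero labels into a single one}. The intuition is that an $\alpha$-\SIM\ function treats, up to the factor $\alpha$, all decompositions of a fixed ground set $X_1\cup\dots\cup X_k$ as interchangeable, so it should suffice to remember only the union and place all of its mass in one coordinate. Concretely, I would define the set function $g$ on $V$ by
\[
g(S)=\tfrac1\alpha\, f\left(S,\emptyset,\dots,\emptyset\right)\qquad(S\subseteq V),
\]
that is, I put all of $S$ into the first coordinate and rescale by $1/\alpha$. The rescaling is a small technical device that converts the asymmetric \SIM\ bounds into the symmetric approximation window demanded by Definition~\ref{Definition-alpha-approximation}. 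Normalisation of $g$ is immediate: $g(\emptyset)=\tfrac1\alpha f(\emptyset,\dots,\emptyset)=0$ and $g\ge 0$ because $f\ge 0$.

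For the approximation bound, fix disjoint $X_1,\dots,X_k$ with union $S$. Both $(X_1,\dots,X_k)$ and $(S,\emptyset,\dots,\emptyset)$ are decompositions of the \emph{same} set $S$, and since $f$ is $\alpha$-\SEDS\ it is in particular $\alpha$-\SIM, so I may apply \SIM\ in both directions. From $f(S,\emptyset,\dots,\emptyset)\le\alpha f(X_1,\dots,X_k)$ I obtain $g(S)\le f(X_1,\dots,X_k)$, the lower half of the window; from $f(X_1,\dots,X_k)\le\alpha f(S,\emptyset,\dots,\emptyset)=\alpha^2 g(S)$ I obtain the upper half. Hence $g$ indeed $\alpha^2$-approximates $f$.

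For the \EDS\ bound, take arbitrary $X,Y\subseteq V$ and apply the \SEDS\ inequality to the two (trivially disjoint) decompositions $(X,\emptyset,\dots,\emptyset)$ and $(Y,\emptyset,\dots,\emptyset)$. The componentwise differences collapse to $(X\setminus Y,\emptyset,\dots,\emptyset)$, so \SEDS\ yields $f(X\setminus Y,\emptyset,\dots,\emptyset)\le\alpha\bigl(f(X,\emptyset,\dots,\emptyset)+f(Y,\emptyset,\dots,\emptyset)\bigr)$; dividing by $\alpha$ gives exactly $g(X\setminus Y)\le\alpha\bigl(g(X)+g(Y)\bigr)$, so $g$ is $\alpha$-\EDS.

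The only real obstacle here is conceptual rather than computational: choosing the right $g$. The seemingly natural candidate $g(S)=\min_{X_1\cup\dots\cup X_k=S} f(X_1,\dots,X_k)$ does give the approximation (in fact with the better factor $\alpha$), but when one tries to verify \EDS\ the \SEDS\ inequality must be invoked on two \emph{independently chosen} minimising decompositions of $X$ and of $Y$, and these need not agree on $X\cap Y$; repairing that misalignment forces an extra \SIM\ correction and yields only $\alpha^2$-\EDS. Routing everything through the single ``first-coordinate'' embedding sidesteps the alignment issue entirely, because \SEDS\ is then applied to decompositions that are automatically compatible and no \SIM\ correction is needed in the \EDS\ step. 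This is precisely why the constants come out as $\alpha$-\EDS\ together with $\alpha^2$-approximation rather than the reverse.
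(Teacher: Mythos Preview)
Your proof is correct and essentially identical to the paper's: the same definition $g(S)=\tfrac{1}{\alpha}f(S,\emptyset,\dots,\emptyset)$, the same use of \SIM\ in both directions for the $\alpha^2$-approximation, and the same application of \SEDS\ to first-coordinate tuples for the $\alpha$-\EDS\ bound. The additional commentary on why the min-based alternative is inferior is a nice bonus not present in the paper.
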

\begin{proof}
	We define the set function $g$ on $V$ by
$
	 g\left(X\right) =  \frac{1}{ \alpha }f \left(X,  \emptyset, \dots,  \emptyset \right).
$
Observe that, since $f$ is normalised, it holds $g \left( \emptyset \right) = f\left( \emptyset, \dots,  \emptyset \right) = 0$ and $g\left(X\right)  =  \frac{1}{ \alpha }f\left(X,  \emptyset, \dots,  \emptyset \right)\geq 0$ for every $X \subseteq V$. Thus, $g$ is normalised as well.
In addition, for all $X, Y \subseteq V$, it holds that
\begin{align*}
	 \alpha \cdot \left(g \left(X\right) +g\left(Y\right)\right) &  =f \left(X,  \emptyset, \dots,  \emptyset \right) +  f \left(Y,  \emptyset, \dots,  \emptyset \right) {\geq} \frac{1}{ \alpha } \cdot f\left(X \backslash Y,  \emptyset, \dots,  \emptyset \right) 
	  = g\left(X \backslash Y\right),
\end{align*}
where the second step uses equation (\ref{Definition-SEDS_equation}).
Hence, $g$ is $\alpha$-{\EDS}. 

It remains to show that $g$ $\alpha^2$-approximates $f$. For this purpose, consider any disjoint $X_1,\dots, X_{k} \subseteq V$ and let $X =  \bigcup_{i = 1}^{k}X_i$ denote their union. Since $f$ is $\alpha$-{\SIM}, it holds that
\begin{equation*}
	g\left(X\right)  =  \frac{1}{ \alpha }f\left(X,  \emptyset, \dots,  \emptyset \right) \leq  f\left(X_1, \dots, X_{k}\right)\leq  \alpha  \cdot f \left(X,  \emptyset, \dots,  \emptyset \right) = \alpha^2  \cdot g\left(X\right).\qedhere
\end{equation*}
\end{proof}

By combining Lemma~\ref{EDT0-EDS-approximation} and Theorem~\ref{Theorem_EDS_Approximation_FullaZ17}, we can deduce the following result.

\begin{theorem}
\label{SEDS-Approximation}
	For any $\alpha $-{\SEDS} $k$-set function $f$ on $V$, there exists a {\GMC} instance $h$ that $\alpha^{n + 4}\left(n^3 + 2n\right)$-approximates $f$, where $n = \left|V\right|$.
\end{theorem}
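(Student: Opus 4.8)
The plan is to chain together the two ingredients that immediately precede the statement. Lemma~\ref{EDT0-EDS-approximation} produces, from the given $\alpha$-{\SEDS} $k$-set function $f$, an $\alpha$-{\EDS} (1-)set function $g$ on $V$ that $\alpha^2$-approximates $f$; and Theorem~\ref{Theorem_EDS_Approximation_FullaZ17} produces, from any $\alpha$-{\EDS} set function, a {\GMC} instance approximating it up to the factor $\alpha^{n+2}(n^3+2n)$. Since $g$ is $\alpha$-{\EDS}, applying the latter result to $g$ (with the same $\alpha$) yields a {\GMC} instance $h$ that $\alpha^{n+2}(n^3+2n)$-approximates $g$. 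The whole argument then amounts to composing these two approximations and tracking the resulting constant.

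First I would unfold the two guarantees according to Definition~\ref{Definition-alpha-approximation}. Writing $X = X_1 \cup \dots \cup X_k$ for disjoint $X_1, \dots, X_k \subseteq V$, the first reads $g(X) \leq f(X_1,\dots,X_k) \leq \alpha^2 \cdot g(X)$, and the second reads $h(X) \leq g(X) \leq \alpha^{n+2}(n^3+2n) \cdot h(X)$. For the lower bound of the target approximation I would combine $h(X) \leq g(X)$ with $g(X) \leq f(X_1,\dots,X_k)$ to get $h(X) \leq f(X_1,\dots,X_k)$. For the upper bound I would chain $f(X_1,\dots,X_k) \leq \alpha^2 g(X)$ with $g(X) \leq \alpha^{n+2}(n^3+2n) h(X)$, obtaining $f(X_1,\dots,X_k) \leq \alpha^2 \cdot \alpha^{n+2}(n^3+2n) \cdot h(X) = \alpha^{n+4}(n^3+2n)\, h(X)$. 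Together these two inequalities say precisely that $h$ $\alpha^{n+4}(n^3+2n)$-approximates $f$, as required.

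There is essentially no hard part here; the only things demanding care are that the two approximations are oriented consistently and that the exponent arithmetic is recorded correctly. In both approximations the simpler object (first $g$, then $h$) serves as the lower bound, so the lower bounds compose on one side while the multiplicative slacks compose on the other, and $\alpha^2 \cdot \alpha^{n+2} = \alpha^{n+4}$ gives the claimed exponent. The polynomial factor $(n^3+2n)$ is untouched by the factor-$\alpha^2$ step from Lemma~\ref{EDT0-EDS-approximation}, so it carries through the composition unchanged.
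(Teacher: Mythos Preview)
Your proposal is correct and essentially identical to the paper's own proof: both apply Lemma~\ref{EDT0-EDS-approximation} to obtain an $\alpha$-{\EDS} set function $g$ that $\alpha^2$-approximates $f$, then apply Theorem~\ref{Theorem_EDS_Approximation_FullaZ17} to $g$, and finally compose the two approximation guarantees to pick up the factor $\alpha^{n+4}(n^3+2n)$.
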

\begin{proof}
	Let $f$ be an $\alpha$-{\SEDS} $k$-set function defined on $V$. According to Lemma \ref{EDT0-EDS-approximation}, there exists  an $\alpha$-{\EDS} set function $g$ that $\alpha^2$-approximates $f$, meaning that, for all disjoint $X_1,\dots, X_{k} \subseteq V$, it holds
	\begin{equation}
		g\left( \bigcup_{i = 1}^{k}X_{k}\right) \leq  f \left(X_1,\dots, X_{k}\right) \leq  \alpha^2  \cdot g\left( \bigcup_{i = 1}^{k}X_{k}\right) .
		\label{EDT0-approximation_EDT0-bound}
	\end{equation}
	According to Theorem \ref{Theorem_EDS_Approximation_FullaZ17}, as an $\alpha$-{\EDS} set function, $g$ is $ \alpha^{n + 2}\left(n^3 + 2n\right)$-approximable by some {\GMC} instance $h$, meaning that, for every $X \subseteq V$, it holds
	\begin{equation}
		h\left(X\right) \leq g \left(X\right) \leq  \alpha^{n + 2}\left(n^3 + 2n\right) \cdot h\left(X\right).
		\label{EDT0-approximation_EDS-bound}
	\end{equation}
	By combining (\ref{EDT0-approximation_EDT0-bound}) and (\ref{EDT0-approximation_EDS-bound}), it follows that, for all disjoint $X_1,\dots, X_{k} \subseteq V$, we have
	\begin{equation*}
		h\left( \bigcup_{i = 1}^{k}X_i\right) \leq f\left(X_1, \dots, X_{k}\right) \leq  \alpha^{n + 4}\left(n^3 + 2n\right) \cdot h\left( \bigcup_{i = 1}^{k}X_i\right),
	\end{equation*}
	proving that $h$ $\alpha^{n + 4}\left(n^3 + 2n\right)$-approximates $f$.
\end{proof}

There is a more restrictive approximation of {\SDS} languages through superadditive set functions, which can be though of as {\GMC} instances without edges.

\begin{theorem}
\label{SDS-Approximation}
	For any $ \alpha $-{\SDS} $k$-set function $f$ on $V$, there exists a superadditive set function $g$ that $n\alpha^{n + 1}$-approximates $f$, where $n = \left|V\right|$.
\end{theorem}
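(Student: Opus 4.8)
The plan is to mirror the two-step strategy used for \SEDS\ in Lemma~\ref{EDT0-EDS-approximation}: first collapse the $k$-set function $f$ to an ordinary set function using only the $\alpha$-\SIM\ part of the hypothesis, and then approximate that set function \emph{from below} by a genuinely superadditive one. Concretely, I would set $\bar g(X) = f(X, \emptyset, \dots, \emptyset)$ for $X \subseteq V$, which is normalised because $f$ is. Applying the $\alpha$-\SIM\ inequality~\eqref{Definition-SIM_equation} in both directions to the two splits $(X_1, \dots, X_k)$ and $(X, \emptyset, \dots, \emptyset)$ of the common union $X = X_1 \cup \dots \cup X_k$ gives
\begin{equation*}
	\tfrac{1}{\alpha}\,\bar g(X) \leq f(X_1, \dots, X_k) \leq \alpha\,\bar g(X),
\end{equation*}
so it suffices to approximate the single set function $\bar g$ and then absorb two extra factors of $\alpha$ into the final bound.

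The second ingredient is that $\bar g$ is \emph{approximately increasing}: instantiating inequality~\eqref{Definition-SDS_equation} with $X_1 = A$, $Y_1 = B \setminus A$ and all remaining arguments empty yields $\bar g(A) \leq \alpha\,\bar g(B)$ whenever $A \subseteq B$. I would then define the set function
\begin{equation*}
	E(X) = \max\left\{ \textstyle\sum_{j} \bar g(Z_j) : \{Z_j\}\ \text{is a partition of}\ X\ \text{into nonempty parts} \right\},
\end{equation*}
with $E(\emptyset) = 0$. Superadditivity is immediate, since the union of an optimal partition of $A$ with an optimal partition of a disjoint $B$ is a partition of $A \cup B$, giving $E(A \cup B) \geq E(A) + E(B)$; and $E$ is normalised. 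The trivial partition shows $E(X) \geq \bar g(X)$, while every part $Z_j$ of an optimal partition is a subset of $X$, so there are at most $|X| \leq n$ of them and each satisfies $\bar g(Z_j) \leq \alpha\,\bar g(X)$ by approximate monotonicity; hence $E(X) \leq n\alpha\,\bar g(X)$. Finally I rescale, setting $g = \tfrac{1}{n\alpha^2} E$, which remains superadditive and normalised under positive scaling.

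Chaining the estimates finishes the argument. From $g(X) \leq \tfrac{1}{\alpha}\bar g(X)$ and the lower \SIM-bound we obtain $g(X_1 \cup \dots \cup X_k) \leq f(X_1, \dots, X_k)$, while from the upper \SIM-bound together with $\bar g(X) \leq E(X) = n\alpha^2\, g(X)$ we obtain $f(X_1, \dots, X_k) \leq \alpha\,\bar g(X) \leq n\alpha^3\, g(X_1 \cup \dots \cup X_k)$. Since $n\alpha^3 \leq n\alpha^{n+1}$ for all $n \geq 2$, and the cases $n \leq 1$ (where $E = \bar g$ and the choice $g = \tfrac{1}{\alpha}\bar g$ already gives factor $\alpha^2 = n\alpha^{n+1}$) are trivial, the claimed bound $n\alpha^{n+1}$ follows, in fact with room to spare.

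The step I expect to be the real obstacle is neither the \SIM-reduction nor the bookkeeping of constants, but the passage from the merely approximately-increasing $\bar g$ to a genuinely superadditive function. A naive rescaling of $\bar g$ fails superadditivity outright — for instance when $\bar g$ is essentially constant on nonempty sets — so one is forced to pay a factor linear in $n$. The crux is therefore the observation that the smallest superadditive majorant $E$ of $\bar g$ stays within the factor $n\alpha$ of $\bar g$, which is exactly where approximate monotonicity (supplied by the \SDS\ hypothesis) is used; without it the value of a part $Z_j$ could not be controlled by the value of the whole, and the envelope could blow up arbitrarily.
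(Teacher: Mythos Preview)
Your proof is correct and takes a genuinely different route from the paper. The paper constructs its superadditive approximant by an explicit weighted rescaling,
\[
	g(X)=\frac{\alpha^{|X|-n-1}\,|X|}{n}\cdot f(X,\emptyset,\dots,\emptyset),
\]
and verifies superadditivity directly: each application of the \SDS\ inequality costs a factor~$\alpha$, and the built-in $\alpha^{|X|}$ term absorbs one such factor whenever two disjoint nonempty parts are merged. This is why the paper ends up paying $\alpha^{n+1}$ in the final bound. Your approach instead passes through the \emph{superadditive envelope} $E(X)=\max\bigl\{\sum_j \bar g(Z_j)\bigr\}$ over partitions, which is automatically superadditive and, by approximate monotonicity of $\bar g$, stays within a factor $n\alpha$ of $\bar g$. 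After the final \SIM\ sandwich you obtain the approximation factor $n\alpha^{3}$, which is strictly sharper than the paper's $n\alpha^{n+1}$ for every $n\ge 3$ and matches it at $n=2$. The paper's version has the minor advantage of being a closed-form expression, whereas evaluating your $E$ requires optimising over partitions; but since the approximating $g$ is only used analytically in Theorem~\ref{Theorem-l-tractability-EDT0} (one never needs to compute it), this is irrelevant, and your tighter constant is a genuine improvement.
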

\begin{proof}
	Let the set function $g$ on $V$ be given by
	\begin{equation}
		g\left(X\right) = \frac{\alpha^{\left|X\right| - n - 1}\left|X\right|}{ n } \cdot f\left(X,  \emptyset, \dots,  \emptyset \right).
		\label{superadditivedefinition}
	\end{equation}

	Observe that since $f$ is normalised, it holds $g \left( \emptyset \right) = f\left( \emptyset, \dots,  \emptyset \right) = 0$ and $g\left(X\right) \geq \frac{1}{ n\alpha^{n+1} } \cdot f\left(X,  \emptyset, \dots,  \emptyset \right)\geq 0$ for every $X \subseteq V$. Thus, $g$ is normalised as well.
	Moreover, $g$ is superadditive, because for all disjoint $ \emptyset  \neq X, Y\subseteq V$, it holds that
	\begin{align*}
		g & \left(X\right) + g\left(Y\right)  
		=\frac{ \alpha^{\left|X\right| - n - 1} \left|X\right|}{ n } \cdot f \left(X,  \emptyset, \dots,  \emptyset \right) + \frac{ \alpha^{\left|Y\right| - n - 1} \left|Y\right|}{ n } \cdot f\left(Y,  \emptyset, \dots,  \emptyset \right)   \\
      &\overset{(\ref{Definition-SDS_equation})}\leq \frac{ \alpha^{\left|X\right| - n - 1} \left|X\right|}{ n } \cdot  \alpha  \cdot f \left(X \cup Y,  \emptyset, \dots,  \emptyset \right) + \frac{ \alpha^{\left|Y\right| - n - 1} \left|Y\right|}{ n } \cdot  \alpha  \cdot f \left(X \cup Y,  \emptyset, \dots,  \emptyset \right)   \\
      & \overset{X, Y \neq  \emptyset }\leq \frac{ \alpha^{\left|X\right| + \left|Y\right| - n - 1} \left|X\right|}{ n }\cdot f \left(X \cup Y,  \emptyset, \dots,  \emptyset \right) + \frac{ \alpha^{\left|X\right| + \left|Y\right| - n - 1} \left|Y\right|}{ n }\cdot f\left(X \cup Y,  \emptyset, \dots,  \emptyset \right)   \\
     & \overset{X \cap Y = \emptyset }  = \frac{ \alpha^{\left|X \cup Y\right| - n - 1} \left|X \cup Y\right|}{ n } \cdot f\left(X \cup Y,  \emptyset, \dots,  \emptyset \right)\tag{since $X \cap Y = \emptyset$}  \\
		& = g\left(X \cup Y\right).
	\end{align*}
	
	It remains to show that $g$ $n\alpha^{n + 1}$-approximates $f$. Consider any disjoint $X_1, \dots, X_k \subseteq V$ and let $X = \bigcup_{i = 1}^{k}X_i$. If $X =  \emptyset$, it holds $g\left(X\right) = f\left(X_1, \dots, X_k\right) = 0$. Otherwise, it holds on the one hand that
	\begin{align*}
		g\left(X\right) = \frac{\alpha^{\left|X\right| - n - 1}\left|X\right|}{ n } \cdot   f \left(X,  \emptyset, \dots,  \emptyset \right) \leq \frac{1}{ \alpha }f\left(X,  \emptyset, \dots,  \emptyset \right) \overset{(\ref{Definition-SIM_equation})}\leq f \left(X_1, \dots, X_k\right)
	\end{align*}
	and on the other hand that
	\begin{equation*}
		n\alpha^{n + 1} \cdot g\left(X\right) = \alpha^{\left|X\right|} \cdot  \left|X\right| \cdot   f \left(X,  \emptyset, \dots,  \emptyset \right)  \geq  \alpha  \cdot f \left(X,  \emptyset, \dots,  \emptyset \right) \overset{(\ref{Definition-SIM_equation})}\geq f\left(X_1, \dots, X_k\right).\qedhere
	\end{equation*}
\end{proof}

Based on these approximations, we now show our main tractability theorem, which in places closely follows the
proof of~\cite[Theorem~5.18]{FUZ18:surjective}.

\begin{theorem}\label{Theorem-l-tractability-EDT0}
	Let $\Gamma$ be a {\SEDS} language. Then $\Gamma$ is globally $\ell$-tractable if it is {\SDS} or if the reduced language $\fixall{\Gamma}$ is globally $\ell$-tractable. 
\end{theorem}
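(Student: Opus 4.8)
The plan is to reduce a \VCSPl{\Gamma} instance $I=(V,D,\phi_I)$, for a fixed lower bound $l$, to the Bounded Generalised Min-Cut machinery of Section~\ref{section:BGMC}. First I would view the whole objective $\phi_I$ as a single weighted relation of arity $n=|V|$. Since each $\gamma\in\Gamma$ corresponds under normalisation and this property is preserved by non-negative sums, $\phi_I$ corresponds under normalisation to a $k$-set function $f=\sum_i w_i\tilde f_i$ on $V$, where $\tilde f_i$ is the $k$-set function of $\gamma_i$; whenever $\Gamma$ is $\alpha$-{\SEDS} (respectively $\alpha$-{\SDS}), so is each $\tilde f_i$. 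An assignment $s$ is recorded by the preimages $(s^{-1}(1),\dots,s^{-1}(k))$, and I write $X=V\setminus s^{-1}(0)$ for its \emph{non-zero set}; the bound $l$ then asks for $|X|\le n-l(0)$, $|X|\ge q:=\sum_{d\neq 0}l(d)$, and an $l|_{D^*}$-respecting splitting of $X$, where $D^*=D\setminus\{0\}$. Note $q$ and $l(0)$ are constants for fixed $l$. The degenerate regimes of optimal value $0$ or $\infty$ I would dispatch directly as in Lemma~\ref{GMC_Lemma_Value_Distinction}, so assume from now on that the optimal value $\phi^*$ is positive and finite.

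The crucial point is to keep the approximation factor a \emph{constant}. Applying Theorem~\ref{SEDS-Approximation} (or Theorem~\ref{SDS-Approximation}) to $f$ as a whole would incur a factor $\alpha^{n+O(1)}\mathrm{poly}(n)$, which is useless in Theorem~\ref{GMC_Proof} and Corollary~\ref{BGMC-Enumeration}, as those require a constant $\alpha$. Instead I would approximate each $\tilde f_i$ \emph{separately}: as $\Gamma$ has bounded arity $A$, the function $\tilde f_i$ lives on at most $A$ variables, so its approximation factor is at most the constant $\alpha^{A+4}(A^3+2A)$ (respectively $A\alpha^{A+1}$). Both the sandwiching inequalities of Definition~\ref{Definition-alpha-approximation} and superadditivity are preserved under non-negative sums and under extending a scope-restricted set function to all of $V$; summing the per-constraint {\GMC} (respectively superadditive) instances therefore yields a single {\GMC} instance $h$ that $C$-approximates $f$ (respectively a superadditive $g$), with a factor $C$ independent of $n$.

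Treating $h$ as a \BGMC{q}{l(0)} instance with optimum $\lambda$, I would enumerate all $C$-optimal solutions via Corollary~\ref{BGMC-Enumeration}; there are polynomially many. The optimal non-zero set $X^*$ is among them: the lower inequality gives $h(X^*)\le f(\dots)=\phi^*$, while any $l|_{D^*}$-respecting splitting of an optimal solution $X_h$ of $h$ (which exists since $|X_h|\ge q$) yields a feasible assignment of value $\le C\lambda<\infty$, whence $\phi^*\le C\lambda$ and $X^*$ is $C$-optimal. For each enumerated candidate $X$, fixing $0$ on $V\setminus X$ turns the problem of labelling $X$ over $D^*$ into a \VCSPl{\fixall{\Gamma}} instance with bound $l|_{D^*}$, because each $\gamma_i$ becomes the relation $\fix{U_i}{\gamma_i}\in\fixall{\Gamma}$ obtained by fixing the $0$-labelled scope positions $U_i$. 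If $\fixall{\Gamma}$ is globally $\ell$-tractable, I solve each residual in polynomial time and return the best completion; as there are polynomially many candidates, the whole procedure runs in polynomial time.

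For the case that $\Gamma$ is {\SDS}, no assumption on $\fixall{\Gamma}$ is available, and here I would argue by induction on the domain size $|D|$, the base case $|D|=1$ being trivial. Using the superadditive approximation $g$ (with its edge-free {\GMC} instance) I would again enumerate candidate non-zero sets and reduce each to a residual over $\fixall{\Gamma}$ on the smaller domain $D^*$, letting the induction finish the job. I expect the main obstacle to lie exactly in this step: by $\alpha$-{\SIM} the reduced relations $\fix{U_i}{\gamma_i}$ are constant up to the factor $\alpha$ across all labellings of a fixed non-zero set, but they need \emph{not} correspond under normalisation with respect to any single fixed special label of $D^*$, so $\fixall{\Gamma}$ is not literally {\SDS} and the induction cannot be applied verbatim. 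Making the {\SDS} case self-contained therefore requires handling this normalisation and choice-of-special-label subtlety—plausibly by exploiting the full strength of the superadditive approximation of Theorem~\ref{SDS-Approximation} rather than only its {\SIM} consequence—which I regard as the technical heart of this direction.
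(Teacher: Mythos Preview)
Your treatment of the {\SEDS} case with $\fixall{\Gamma}$ globally $\ell$-tractable is essentially the paper's argument: approximate each constraint separately to keep the factor constant, sum to a \BGMC{q}{l(0)} instance, enumerate $C$-optimal solutions via Corollary~\ref{BGMC-Enumeration}, and solve the residual over $\fixall{\Gamma}$ for each candidate. That part is fine.

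The gap is in the {\SDS} case. Your plan---induction on $|D|$, reducing to $\fixall{\Gamma}$ on $D^*$---runs into exactly the obstacle you flag: $\fixall{\Gamma}$ need not be {\SDS} (nor even correspond under normalisation with respect to any label of $D^*$), so the inductive hypothesis does not apply, and you have no way to solve the residual instances. The paper avoids this entirely by never reducing to $\fixall{\Gamma}$ in the {\SDS} case. The key observation you are missing is that when $\Gamma$ is {\SDS}, the approximating instance $h$ is \emph{purely superadditive} (no edges), and superadditivity forces the optimal non-zero set $X$ to have \emph{constant size}: if $|X|\ge(\alpha+1)l^*$ with $l^*=\sum_{d\neq 0}l(d)$, split $X$ into $\alpha+1$ disjoint pieces each of size $\ge l^*$; each piece is a feasible \BGMC{l^*}{l(0)} solution of value $\ge\lambda$, and superadditivity gives $h(X)\ge(\alpha+1)\lambda>\alpha\lambda$, contradicting $\alpha$-optimality of $X$. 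Hence $|X|<(\alpha+1)l^*$, so there are only $O(n^{(\alpha+1)l^*})$ candidates for $X$ and $O(k^{(\alpha+1)l^*})$ partitions $X_1\cup\dots\cup X_k=X$ for each; brute-force enumeration over all of them suffices, with no call to $\fixall{\Gamma}$ at all. This is the ``full strength of the superadditive approximation'' you gesture at in your final sentence, but it replaces your inductive scheme rather than repairing it.
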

\begin{proof}
	Let $\Gamma$ be an {\SEDS} language on domain $D$. Then every weighted relation $\gamma  \in  \Gamma$ corresponds under normalisation to a $k$-set function $f_{ \gamma }$. Furthermore, weighted relations in $\Gamma$ are of bounded arity. If $\Gamma$ is {\SDS}, Theorem \ref{SDS-Approximation} implies that for some $\alpha \in \mathbb{N}$, every such $k$-set function $f_{ \gamma }$ can be $\alpha$-approximated by a superadditive set function $h_{\gamma}$. In the following, we treat $h_{ \gamma }$ as a {\GMC} instance without any edge weights. If $\Gamma$ is not {\SDS}, we can still $\alpha$-approximate every $k$-set function $f_{ \gamma }$ by a {\GMC} instance $h_{\gamma}$ according to Theorem \ref{SEDS-Approximation}, but there is no restriction on the edge weights.

Let $l : D \rightarrow \mathbb{N}_0$ be a fixed lower bound and consider any \VCSPl{ \Gamma } instance $I$ with objective function 
\begin{equation*}
	 \phi_I\left(x_1, \dots, x_n\right) =  \sum_{i = 1}^{t}w_i \cdot  \gamma_i\left(\boldsymbol{x}^i\right).
\end{equation*}
Let $f_I$ be the $k$-set function corresponding under normalisation to the objective function $\phi_I$. We construct a {\GMC} instance $h$ that $\alpha$-approximates $f_I$. 
  
  For $i \in \left[t\right]$, we relabel the vertices of $h_{ \gamma_i}$ to match the variables in the scope $\boldsymbol{x}^i$ of the $i$-th constraint (i.e., vertex $j$ is relabelled to $x_j^i$) and identify vertices in case of repeated variables. As the constraint is weighted by a non-negative factor $w_i$, we also scale the weights of the edges of $h_{ \gamma_i}$ and the superadditive set function by $w_i$ (note that non-negative scaling preserves superadditivity). Instance $h$ is then obtained by adding up {\GMC} instances $h_{ \gamma_i}$ for all $i \in \left[t\right]$. In the following, we treat $h$ as a \BGMC{l^*}{l\left(0\right)} instance, where $l^* =  \sum_{i = 1}^{k}l\left(i\right)$. Note that if $\Gamma$ is {\SDS}, $h$ has zero edge weights.

Let $X_0, \dots, X_{k}$ be a partition of $\left[n\right]$ such that $f_I \left(X_1, \dots, X_{k}\right)$ is minimal among all partitions satisfying $\left|X_{d}\right| \geq l\left(d\right)$ for all $d \in D$. In other words, $X_0, \dots, X_{k}$ corresponds to an optimal assignment for instance $I$. Let $X =  \bigcup_{d = 1}^{k}X_d$ denote all indices with non-zero labels.  
In addition, let $Y \subseteq \left[n\right]$ denote an optimal solution of the \BGMC{l^*}{l\left(0\right)} instance $h$ and let $\lambda  = h\left(Y\right)$.

Since $\left|Y\right| \geq l^*$, there must exist some partition $Y_1, \dots, Y_{k}$ of $Y$ such that $\left|Y_d\right| \geq l\left(d\right)$ for all $1 \leq d \leq k$. Because $h$ $\alpha$-approximates $f_I$, it holds that
\begin{equation*}
	 \lambda \leq h\left(X\right) \leq  f_{I}\left(X_1, \dots, X_{k}\right) \leq f_{I}\left(Y_1, \dots, Y_{k}\right) \leq  \alpha  \cdot h\left(Y\right) =  \alpha  \cdot  \lambda.
\end{equation*}
Hence, $X$ is an $\alpha$-optimal solution of $h$.

According to Lemma \ref{GMC_Lemma_Value_Distinction}, it can be determined in polynomial time whether $\lambda  = 0$, $\lambda =  \infty $ or $0  <   \lambda <  \infty$.
Furthermore, in case $\lambda = 0$, a solution $Z$ such that $h\left(Z\right) = 0$ can be found. We explore this case first. Because $Z$ must have size $\left|Z\right| \geq l^*$ as a solution of \BGMC{l^*}{l\left(0\right)} instance $h$, we can select some partition $Z_1, \dots, Z_{k}$ of $Z$ such that $\left|Z_d\right| \geq l\left(d\right)$ for all $1 \leq d \leq k$. Since $h$ $\alpha$-approximates $f_I$, it must hold $f_I\left(Z_1, \dots, Z_{k}\right) \leq \alpha  \cdot h\left(Z\right) = 0$, meaning that $Z_1, \dots, Z_{k}$ represents an optimal assignment for instance $I$.

If $\lambda  =  \infty$, then obviously there are no feasible assignments.

Otherwise, it holds $0 <  \lambda < \infty$. In this case, we distinguish whether $\Gamma$ is {\SDS} or $\fixall{\Gamma}$ is globally $\ell$-tractable.

First, we assume that $\Gamma$ is {\SDS} and hence, that $h$ has zero edge weights. We claim that the size of $X$ is bounded by a constant. For the sake of contradiction, assume that $\left|X\right| \geq \left( \alpha  + 1\right)l^*$. Then there are disjoint subsets $Z_1, Z_2, \dots, Z_{ \alpha  + 1} \subseteq X$ such that $\left|Z_i\right| \geq l^*$ for all $1 \leq i \leq  \alpha  + 1$. Being a solution of $h$, every $Z_i$ must have value at least $h\left(Z_i\right) \geq  \lambda$. Based on the superadditivity of $h$, we arrive at the contradiction
\begin{equation*}
	\left( \alpha  + 1\right) \cdot  \lambda  \leq  h\left(Z_1\right) + \dots + h\left(Z_{ \alpha  + 1}\right)  \leq h\left(X\right) \leq  \alpha  \cdot \lambda.
\end{equation*}

Thus, it must hold $\left|X\right| < \left( \alpha  + 1\right)l^*$. This leaves less than $O\left(n^{\left( \alpha  + 1\right)l^*}\right)$ possible choices for $X$, each of which admits at most $O\left(k^{\left( \alpha  + 1\right)l^*}\right)$ partitions of the form $X_1 \cup  \dots \cup X_k = X$. By checking all of these, we can retrieve the sets $X_1, \dots, X_k$ in polynomial time.

Now, we assume that $\fixall{\Gamma}$ is globally $\ell$-tractable.
According to Corollary \ref{BGMC-Enumeration}, there are only polynomially many $\alpha$-optimal solutions of $h$, all of which can be computed in polynomial time. $X$ must be among those solutions. By repeating the following process for all of them, we can assume that $X$ is known, and so is $X_0 = \left[n\right] \backslash X$.

Let $D^* = D \backslash \left\{0\right\}$ and let $l_{\restriction_{D^*}} : D^* \rightarrow \mathbb{N}$ denote the restriction of $l$ to $D^*$. We consider the \VCSPx{\mbox{\footnotesize{$\text{\footnotesize{$l$}}_{\text{\fontsize{6.4pt}{5.8pt}\selectfont{$\restriction$}}_{\text{\fontsize{4.5pt}{6pt}\selectfont{$D^*$}}}}$} \hskip-0.2cm}}{\fixall{\Gamma}} instance $I_X = \left(X, D^*, \phi_X\right)$, where objective function $\phi_X$ is constructed from $\phi_I$ by fixing label 0 to the variables in $X_0$. This construction can be realised by replacing every weighted relation $\gamma_i$ from $\phi_I$ with $\fix{U_i}{\gamma_i}$ instead, where $U_i$ are the indices of the variables from $X_0$ in the scope of $\gamma_i$, and the remaining variables in the scope of $\gamma_i$ form the new scope for $\fix{U_i}{\gamma_i}$.
According to this construction, by assigning label $0$ to the variables in $X_0$, every assignment for $I_X$ induces an assignment for $I$ with the same objective value. This includes the assignment for $I_X$ represented by the sets $X_1, \dots, X_k$. Thus, an optimal assignment for $I_X$, which can be obtained efficiently when  $\fixall{\Gamma}$ is globally $\ell$-tractable, induces an optimal assignment for $I$.
\end{proof}

\begin{remark}\label{Remark_SEDS_Enumeration}
	In fact, the algorithm presented in Theorem \ref{Theorem-l-tractability-EDT0} can, for every fixed lower bound $l : D \rightarrow \mathbb{N}_0$ and every \VCSPl{ \Gamma } instance $I$ with optimal value $0 <  \lambda  <  \infty$, enumerate all optimal assignments for $I$ in polynomial time if either
	\begin{enumerate}[(i)]
	\item $\Gamma$ is {\SDS}, or
	\item $\Gamma$ is {\SEDS}, $\fixall{\Gamma}$ is globally $\ell$-tractable and for every \VCSPl{\fixall{\Gamma}} instance with optimal value $0 <  \lambda  <  \infty$, all optimal assignments can be enumerated in polynomial time.
	\end{enumerate}
\end{remark}

To complete our analysis of {\SEDS} languages, we will now focus on the case that a language is not {\SDS} and that $\fixall{\Gamma}$ is globally $s$-intractable. Going even beyond {\SEDS}, our main hardness result is that {\SIM} languages are globally $s$-intractable under those circumstances. 

\begin{theorem}\label{Hardness-Theorem-Similar}
	Let $\Gamma$ be a valued constraint language over domain $D$ that is {\SIM}, but not {\SDS}, and let $\fixall{\Gamma}$ be globally $s$-intractable. Then $\Gamma$ is globally $s$-intractable.
\end{theorem}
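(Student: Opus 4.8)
The plan is to prove global $s$-intractability of $\Gamma$ by a polynomial-time many-one reduction from \VCSPs{\fixall{\Gamma}}, which is globally $s$-intractable by assumption. Given an instance $I^*$ of \VCSPs{\fixall{\Gamma}} on a variable set $V^*$ over the domain $D^*=D\setminus\{0\}$, I would build an instance $I$ of \VCSPs{\Gamma} over $D$ whose optimal surjective solutions correspond to those of $I^*$. The skeleton is to add a single fresh variable $w_0$, rewrite every constraint of $I^*$ of the form $w_i\cdot\fix{U_i}{\gamma_i}(\boldsymbol z_i)$ back as $w_i\cdot\gamma_i(\cdot)$, routing the $U_i$-positions to $w_0$ and the remaining positions to $\boldsymbol z_i\subseteq V^*$, and then to attach a penalty gadget (described below) that makes label $0$ expensive on each variable of $V^*$. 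By the very definition of $\fix$, whenever $w_0=0$ and the $V^*$-variables avoid $0$, each rewritten constraint evaluates \emph{exactly} to $\fix{U_i}{\gamma_i}(\boldsymbol z_i)$, so on such assignments the objective of $I$ equals that of $I^*$.

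The central device, and the step where I would spend most of the effort, is a non-zero-forcing gadget extracted from the failure of \SDS. Since $\Gamma$ is \SIM\ but not \SDS, for every $\alpha$ there is a $\gamma\in\Gamma$ violating the $\alpha$-\SDS\ inequality; choosing $\alpha$ large and using that \SIM\ forces values on equal $0$-sets to agree up to a constant, such a violation must come from a pair of assignments $\boldsymbol b$ (finite-valued) and $\boldsymbol a$, where $\boldsymbol a$ is obtained from $\boldsymbol b$ by changing some non-zero coordinates to $0$, with $\gamma(\boldsymbol a)>\alpha\,\gamma(\boldsymbol b)$ (possibly $\gamma(\boldsymbol a)=\infty$). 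Routing a target variable into the flipped coordinates and scaling the constraint by a suitable weight yields, for any prescribed $M$, a gadget that contributes at least $M$ when the target takes label $0$ but only the small value $\gamma(\boldsymbol b)$ when it takes a non-zero label; crucially, \SIM\ guarantees the cheap branch holds for \emph{every} non-zero label, so the gadget penalises exactly the event ``target $=0$''.

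First I would assemble these pieces into $I$, scaling the penalty weight $M$ above the total value attainable by the rewritten constraints. Then, in any finite-valued assignment, no optimal solution can place $0$ on a $V^*$-variable, so the label $0$ required by surjectivity onto $D$ can appear only at $w_0$; hence $w_0=0$ and $D^*$ is covered entirely by $V^*$. Conversely, every surjective solution of $I^*$ lifts to a surjective solution of $I$ of the same value by setting $w_0=0$. This makes the reduction value-preserving and polynomial, transferring global $s$-intractability from \VCSPs{\fixall{\Gamma}} to \VCSPs{\Gamma}.

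The hard part will be making the penalty gadget robust, because the witness $\gamma$ generally has further coordinates — a ``context'' that must be held at the values of $\boldsymbol a$ and $\boldsymbol b$ for the gap between $\gamma(\boldsymbol a)$ and $\gamma(\boldsymbol b)$ to materialise. Context coordinates that are $0$ can be routed to $w_0$, and those that are non-zero can be routed to auxiliary reference variables that are themselves forced non-zero. The delicate issue is the resulting circularity: I must rule out cheap \emph{escape} configurations, for instance setting all auxiliaries to $0$ (which makes the constraint collapse to the normalised value $0$), or shifting the surjectivity-mandated $0$ onto a $V^*$-variable while simultaneously de-activating its penalty by corrupting the context. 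I expect to resolve this by amplification — using many gadget copies over a pool of mutually-referencing variables and invoking surjectivity to exclude the all-$0$ configuration — together with a counting argument showing that any assignment placing $0$ on a penalised variable triggers an activated gadget copy and therefore costs at least $M$, strictly more than the intended all-non-zero assignment. Verifying that, for a suitable choice of references and weights, no escape configuration undercuts the intended one is the technical heart of the proof.
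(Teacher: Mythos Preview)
Your reduction skeleton matches the paper's: one fresh variable $z$ to absorb all $U_i$-positions, each $\fix{U_i}{\gamma_i}$ rewritten via the identified relation $\sigma_i'$, plus a penalty gadget from the \SDS\ failure to keep label $0$ off $V^*$. You have also correctly isolated the crux --- the circularity arising from the ``context'' coordinates of the witness $\gamma$.

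The gap is that your proposed resolution (amplify over a pool of mutually-referencing auxiliary variables) is not yet an argument, and as stated it just reproduces the circularity one level up: to force the auxiliaries non-zero you need more gadgets with more context, and so on. The paper breaks the loop with two concrete moves you are missing. First, it collapses the witness to a \emph{ternary} relation $\gamma^*(x,y,z)$ by identifying the coordinates of $\gamma$ into three blocks $X$, $Y$, and ``rest''; the \SDS\ violation plus \SIM\ then reads $\gamma^*(d,0,0)\gg\gamma^*(d',d'',0)$ for all $d,d',d''\in D^*$. Second, it uses \emph{no external auxiliaries at all}: the variables of $V^*$ serve as each other's references by summing the gadget over \emph{all ordered pairs} $(x,y)\in V^*\times V^*$. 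Surjectivity together with $|D|\ge 3$ guarantees some $x\in V^*$ is non-zero, so if any $y\in V^*$ were $0$, the pair $(x,y)$ alone triggers the penalty.

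What makes this work is a case split you do not have: either $\gamma^*(1,1,1)=0$ or $\gamma^*(1,1,1)>0$. In the first case \SIM\ gives $\gamma^*(x,y,y)=0$ for all $x,y\in D^*$, so one can tie the ``rest'' block to $y$ and use $\sum_{x,y\in V^*}\nu\cdot\gamma^*(x,y,y)$; this sum vanishes on the intended assignment and blows up whenever some $V^*$-variable is $0$. In the second case \SIM\ gives $\gamma^*(d,d,d)>0$ for every $d\in D^*$, so a single term $\nu\cdot\gamma^*(z,z,z)$ forces $z=0$ outright; then one adds $\sum_{x,y}\varepsilon'\cdot\gamma^*(x,y,z)$ with a carefully chosen small weight $\varepsilon'$ so that on the intended assignment the total perturbation is below the granularity $\varepsilon$ of $\phi_I$, while a single $0$ in $V^*$ costs more than $\omega$. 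Without this dichotomy you cannot simultaneously control the ``rest'' block: tying it to a $V^*$-variable fails when $\gamma^*(1,1,1)>0$ (the gadget is never cheap), and tying it to $z$ fails when $\gamma^*(1,1,1)=0$ (you lose the mechanism that pins $z$ to $0$).
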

\begin{proof}
	Since $\fixall{\Gamma}$ is globally $s$-intractable, the domain $D$ must have at least three elements. Let $\alpha  \geq 1$ be such that $\Gamma$ is $\alpha$-{\SIM}. 
	We show that \VCSPs{\fixall{\Gamma}} is reducible to \VCSPs{ \Gamma }.
	
	For this purpose, let $I = \left(V, D^*, \phi_I\right)$ be any \VCSPs{\fixall{\Gamma}} instance on domain $D^* = D \backslash \left\{0\right\}$ with objective function $ \phi_I\left(\boldsymbol{x}\right) =  \sum_{i = 1}^{t}w_i \gamma_i\left(\boldsymbol{x}_i\right)$. 
	
	Every constraint $\gamma_i$ must be of the form $\gamma_i = \fix{U_i}{\sigma_i}$ for some weighted relation $ \sigma_i \in  \Gamma$ and some set $U_i \subseteq \left[\operatorname{ar}\left( \sigma_i\right)\right]$. Let $ \sigma_i'$ denote the identification of the weighted relation $ \sigma_i$ at the coordinates in $U_i$, i.e. such that $ \sigma_i'\left(\boldsymbol{x}_i, 0\right) = \gamma_i \left(\boldsymbol{x}_i\right)$ for every $\boldsymbol{x}_i \in \left(D^*\right)^{\operatorname{ar}\left( \gamma_i\right)}$. Here and later on in the proof, the notation $\sigma_i'\left(\boldsymbol{x}_i, 0\right)$ is shorthand for $\sigma_i'\left(x_{i,1},\dots, x_{i,\operatorname{ar}\left( \gamma_i\right)}, 0\right)$.	Note that $\sigma_i'$ is expressible over $\Gamma$. We will utilise these relations later in the proof in order to express the objective function $\phi_I$ over $\Gamma$.
	
	Let $\varepsilon > 0$ be a lower bound for the smallest positive difference between the values of any two assignments for instance $I$. In other words, we select $\varepsilon$ sufficiently small so that if the objective value of some assignment is $ \kappa $, then there is no other assignment with objective value in $\left( \kappa  -  \varepsilon,  \kappa \right)$ or $\left( \kappa,  \kappa  +  \varepsilon \right)$. Note that $\varepsilon$ can be calculated efficiently by multiplying the denominators of all values that the constraints can obtain and of all weights that occur in $\phi_I$. 
	
	Similarly, let $\omega$ denote an upper bound for the largest finite value that any assignment for instance $I$ can obtain.
	
	If $\Gamma$ is not {\SDS}, in particularly not $\left(\frac{2\left|V\right|^2 \cdot  \omega }{ \varepsilon } \cdot \alpha^4\right)$-{\SDS}, then there must exist a weighted relation $\gamma  \in  \Gamma$ of some arity $r$ and disjoint $X_1, \dots, X_{k}, Y_1, \dots, Y_{k} \subseteq \left[r\right]$ such that, in violation of equation (\ref{Definition-SDS_equation}), the $k$-set function $f$ corresponding under normalisation to $\gamma$ satisfies
	\begin{equation}\label{Hardness-Lemma-Similar_Pairwise-Set-Condition-Unfinished}
		f \left(X_1, \dots, X_{k}\right) > \frac{2\left|V\right|^2 \cdot  \omega }{ \varepsilon } \cdot \alpha^4  \cdot f \left(X_1 \cup Y_1, \dots, X_{k} \cup Y_{k}\right).
	\end{equation}
	Let $X =  \bigcup_{d = 1}^{k}X_d$ and $Y = \bigcup_{d = 1}^{k}Y_d$. Since $f$ is $\alpha$-{\SIM}, we can transform (\ref{Hardness-Lemma-Similar_Pairwise-Set-Condition-Unfinished}) to
	\begin{equation}\label{Hardness-Lemma-Similar_Pairwise-Set-Condition}
		f\left(X,  \emptyset, \dots,  \emptyset  \right) > \frac{2\left|V\right|^2 \cdot  \omega }{ \varepsilon } \cdot \alpha^2  \cdot f\left(X \cup Y,  \emptyset, \dots,  \emptyset  \right).
	\end{equation}
	
	Without loss of generality, we can assume that $\gamma \left(\boldsymbol{0}^r \right) = 0$ so that $\gamma$ and $f$ are interchangeable. In order to simplify notation, we first define the 3-ary weighted relation $\gamma^*$ by $ \gamma^*\left(x, y, z\right) =  \gamma \left(\boldsymbol{s}_{x, y, z}\right)$ for $x, y, z \in D$, where the $i$-th coordinate $s_i$ of $\boldsymbol{s}_{x, y, z}$ is $s_i = x$ if $i \in X$, $s_i = y$ if $i \in Y$ and $s_i = z$ otherwise.

	According to the (\ref{Hardness-Lemma-Similar_Pairwise-Set-Condition}), it holds that
	\begin{equation*}
		\gamma^* \left(1, 0, 0\right)  > \frac{2\left|V\right|^2 \cdot  \omega }{ \varepsilon }   \cdot \alpha^2  \cdot  \gamma \left(1, 1, 0\right).
	\end{equation*}
	Since $\gamma$ is $\alpha$-{\SIM}, this implies for all $x, y, z \in D^*$ that
	\begin{equation}\label{Hardness-Lemma-Similar_Initial-Condition}
		\gamma^* \left(x, 0, 0\right)  > \frac{2\left|V\right|^2 \cdot  \omega }{ \varepsilon } \cdot  \gamma \left(y, z, 0\right).
	\end{equation}
	
	Finally, let $\nu  > 0$ be a sufficiently large value so that, for all $x, y, z \in D$ such that $\gamma^* \left(x, y, z\right) > 0$, it holds that
	\begin{equation}\label{Hardness-Lemma-Similar_Nu}
		\nu  \cdot  \gamma^*\left(x, y, z\right) >  \omega.
	\end{equation}
	
	Based on these definitions, we can now complete the proof. We distinguish two cases.
	
	\paragraph*{Case 1:} First, assume that $\gamma^* \left(1, 1, 1\right) = 0$.
	
	We construct the \VCSPs{ \Gamma } instance $I' = \left(V \cup \left\{z\right\}, D,  \phi_{I'}\right)$ with objective function
	\begin{equation*}
		\phi_{I'}\left(\boldsymbol{x}, z\right) = \sum_{x, y \in V} \nu   \cdot \gamma^*\left(x, y, y\right) +  \sum_{i = 1}^{t}w_i  \sigma_i'\left(\boldsymbol{x}_i, z\right).
	\end{equation*}
	
	From $\gamma^* \left(1, 1, 1\right) = 0$ and the fact that $\Gamma$ is $\alpha$-{\SIM}, it follows that $\gamma^* \left(x, y, y\right) = 0$ for all $x, y \in D^*$. We focus on assignments for $I'$ of the form $\boldsymbol{x}  \in \left(D^*\right)^{\left|V\right|}$ and $z = 0$. For every such assignment, it must hold
	\begin{align*}
		\phi_{I'}\left(\boldsymbol{x}, z\right) = 0 +  \sum_{i = 1}^{t}w_i  \sigma_i'\left(\boldsymbol{x}_i, z\right) =  \sum_{i = 1}^{t}w_i  \gamma_i\left(\boldsymbol{x}_i\right) =  \phi_I\left(\boldsymbol{x}\right).
	\end{align*}
	
	Hence, every assignment for $I'$ of the form $\boldsymbol{x}  \in \left(D^*\right)^{\left|V\right|}$ and $z = 0$ induces an assignment $\boldsymbol{x}  \in \left(D^*\right)^{\left|V\right|}$ for $I$ with the same objective value, and vice versa. In particular, if $I$ is feasible, then there is an assignment for $I'$ of value at most $\omega$.  To show that an optimal assignment for $I$ can be derived from an optimal assignment for $I'$, it remains to show that every minimal assignment for $I'$ must be of the described form, which we do by showing that every assignment violating this form must have value greater than $\omega$.

	Consider any surjective assignment for $\boldsymbol{x}$ and $z$. Since $\left|D\right| \geq 3$, there must be some variable $x \in V$ such that $x \neq 0$. If there was any $y \in V$ with assigned label $y = 0$, then it would hold $\gamma^*\left(x, y, y\right) > 0$ according to (\ref{Hardness-Lemma-Similar_Initial-Condition}) and therefore
	\begin{equation*}
		\phi_{I'}\left(\boldsymbol{x}, z\right) \geq  \nu  \cdot \gamma^*\left(x, y, y\right) \overset{(\ref{Hardness-Lemma-Similar_Nu})} >  \omega.
	\end{equation*}
	Thus, we can assume $\boldsymbol{x}  \in \left(D^*\right)^{\left|V\right|}$ in every minimal assignment. By the surjectivity of the assignment, that implies $z = 0$ and completes the reduction proof in this case.

	\paragraph*{Case 2:} Now, assume that  $ \gamma^* \left(1, 1, 1\right) > 0$.
	In this case, we construct the \VCSPs{ \Gamma } instance $I^* = \left(V \cup \left\{z\right\}, D, \phi_{I^*}, \right)$ with objective function
	\begin{equation*}
		\phi_{I^*}\left(\boldsymbol{x}, z\right) =  \nu  \cdot  \gamma^*\left(z, z, z\right) +  \sum_{x, y \in V}\frac{ \varepsilon  \cdot  \gamma^*\left(x, y, z\right)}{2\left|V\right|^2 \max\limits_{a, b \in D^*} \gamma^*\left(a, b, 0\right)} +  \sum_{i = 1}^{t}w_i  \sigma_i'\left(\boldsymbol{x}_i, z\right).
	\end{equation*}
	
	An assignment of the form $\boldsymbol{x}  \in \left(D^*\right)^{\left|V\right|}$ and $z = 0$ satisfies on the one hand that
	\begin{align*}
		\phi_{I^*}\left(\boldsymbol{x}, z\right) \leq 0 +  \frac{\varepsilon}{2}  +  \sum_{i = 1}^{t}w_i  \gamma_i\left(\boldsymbol{x}_i\right) =   \frac{\varepsilon}{2}  + \phi_I\left(\boldsymbol{x}\right),
	\end{align*}
	and on the other hand that
	\begin{align*}
		\phi_{I^*}\left(\boldsymbol{x}, z\right)  \geq  \sum_{i = 1}^{t}w_i  \gamma_i\left(\boldsymbol{x}_i\right) = \phi_I\left(\boldsymbol{x}\right).
	\end{align*}
	Hence, an assignment for $I^*$ of the form $\boldsymbol{x}  \in \left(D^*\right)^{\left|V\right|}$ and $z = 0$ induces an assignment $\boldsymbol{x}  \in \left(D^*\right)^{\left|V\right|}$ for $I$ of similar value, and vice versa. It remains to show that every minimal assignment for $I^*$ must be of this form. This completes the reduction proof, because by our choice of $\varepsilon$, a minimal assignment for $I^*$ of this form must then induce a minimal assignment for $I$.
	
	By the assumption $\gamma^* \left(1, 1, 1\right) > 0$ and since $\Gamma$ is  {\SIM}, we have $ \gamma^* \left(z, z, z\right) > 0$ for every $z \in D^*$. Thus, every assignment of the form $\boldsymbol{x}  \in D^{\left|V\right|}$ and $z \in D^*$ must have objective value
	\begin{equation*}
		\phi_{I^*}\left(\boldsymbol{x}, z\right) \geq \nu  \cdot  \gamma^*\left(z, z, z\right)\overset{(\ref{Hardness-Lemma-Similar_Nu})} >  \omega
	\end{equation*}
	and thereby cannot be optimal.
	
	Otherwise, when $z = 0$, there must be some $x \in V$ in every surjective assignment such that $x = 1$. If there was any variable $y \in V$ such that $y = 0$, then, for the summand $\gamma^*\left(x, y, z\right)$ in the second part of $\phi_I^*$, it would hold that
	\begin{align*}
		\gamma^*\left(x, y, z\right) = \gamma^*\left(1, 0, 0\right)  > \frac{2\left|V\right|^2 \cdot  \omega }{ \varepsilon }   \cdot  \max\limits_{0 \neq a, b} \gamma^*\left(a, b, 0\right), 
	\end{align*}
	and hence,
	\begin{align*}
		\phi_{I^*}\left(\boldsymbol{x}, z\right) \geq \sum_{x, y \in V}\frac{ \varepsilon  \cdot  \gamma^*\left(x, y, z\right)}{2\left|V\right|^2 \max\limits_{a, b  \in D^*} \gamma^*\left(a, b, 0\right)} >  \omega.
	\end{align*}
	
	Thus, in addition to $z = 0$, it must also hold $\boldsymbol{x} \in \left(D^*\right)^{\left|V\right|}$ in every minimal assignment.
	This reduces \VCSPs{\fixall{\Gamma}} to \VCSPs{ \Gamma } in this case as well and thereby completes our proof.
\end{proof}

\section{Lower-Bounded VCSPs on the Boolean Domain}
\label{sec:Boolean}
In this final section, we prove our dichotomy theorem for lower-bounded VCSPs on the Boolean domain and, in the end, extend this result to {\SEDS} languages on three-element domains. A classification of Boolean surjective VCSPs has been given by  \cite{FUZ18:surjective} based on polymorphisms and multimorphisms~\cite{Jeavons:1997,cohen06:complexitysoft}, which we introduce in the following.

\begin{definition}
	Let $r$ and $s$ be positive integers and let $\gamma$ be a $r$-ary weighted relation on domain $D$. An operation $o : D^s \rightarrow D$ is a \textit{polymorphism} of $ \gamma $ (and $\gamma$ \textit{admits} polymorphism $o$) if, for all $\boldsymbol{x}_1, \dots, \boldsymbol{x}_s \in D^r$ such that $\gamma \left(\boldsymbol{x}_1\right), \dots,  \gamma \left(\boldsymbol{x}_s\right) <  \infty$, it holds $ \gamma \left(o\left(\boldsymbol{x}_1, \dots, \boldsymbol{x}_s\right)\right) <  \infty $, where $o$ is applied componentwise as
	\begin{equation*}
	o\left(\boldsymbol{x}_1, \dots, \boldsymbol{x}_s\right) = \left(o\left(x_{1, 1}, \dots, x_{s, 1}\right), \dots, o\left(x_{1, r}, \dots, x_{s, r}\right)\right).
	\end{equation*}
	A language $\Gamma$ admits polymorphism $o$ if $o$ is a polymorphism of every $\gamma  \in  \Gamma$.
\end{definition}

\begin{definition}
	Let $r$ and $s$ be positive integers and let  $\gamma$ be a $r$-ary weighted relation on domain $D$. A list $\langle o_1, \dots, o_s\rangle$ of $s$-ary polymorphisms of $\gamma$ is a \textit{multimorphism} of $\gamma$ (and $\gamma$ \textit{admits} multimorphism $\langle o_1, \dots, o_s\rangle$) if, for all $\boldsymbol{x}_1, \dots, \boldsymbol{x}_s \in D^r$, it holds that
	\begin{equation*}
		\sum_{i = 1}^{s} \gamma \left(o_i\left(\boldsymbol{x}_1, \dots, \boldsymbol{x}_s\right)\right) \leq  \sum_{i = 1}^{s} \gamma \left(\boldsymbol{x}_i\right).
	\end{equation*}
	A language $\Gamma$ admits multimorphism $\langle o_1, \dots, o_s\rangle$ if every $\gamma  \in  \Gamma$ admits $\langle o_1, \dots, o_s\rangle$.
\end{definition}

For $d \in D$, the constant operation $c_d : D \rightarrow D$ is defined by $c_d\left(x\right) = d$ for every $x \in D$. According to this definition, a language $\Gamma$ admits multimorphism $\langle c_d \rangle$ for some $d \in D$ if every weighted relation $\gamma  \in  \Gamma$ satisfies $ \gamma \left(d, d,\dots, d\right) \leq \gamma \left(\boldsymbol{x}\right)$ for all $\boldsymbol{x} \in D^{\operatorname{ar}\left( \gamma \right)}$. Such a language is always tractable,
but it may not be $s$-tractable or $\ell$-tractable. Note that the class {\SIM} and all subclasses only contain languages that admit multimorphism $\langle c_0 \rangle$, because this is a requirement for corresponding under normalisation to a $k$-set function.

In addition, the following operations for the Boolean domain $D = \{0,1\}$, which were initially given by \cite{cohen06:complexitysoft}, will be relevant for us.
\begin{itemize}
	\item The binary operation $\min$ ($\max$) returns the smaller (larger) of its two arguments with respect to the order $0 < 1$.
	\item The minority operation $\operatorname{Mn} : D^3 \rightarrow D$ is defined for $x, y \in D$ by $\operatorname{Mn}\left(x, x, y\right) = \operatorname{Mn}\left(x, y, x\right) = \operatorname{Mn}\left(y, x, x\right) = y$.
	\item Similarly, the majority operation $\operatorname{Mj} : D^3 \rightarrow D$ is given for $x, y \in D$ by $\operatorname{Mj}\left(x, x, y\right) = \operatorname{Mj}\left(x, y, x\right) = \operatorname{Mn}\left(y, x, x\right) = x$.
\end{itemize}

Furthermore, given a Boolean language $\Gamma$, let $\neg \left( \Gamma \right)$ denote the language where labels $0$ and $1$ are flipped. This can be seen as relabelling the domain so that VCSPs over $\Gamma$ and over $\neg \left( \Gamma \right)$ have the same complexity.

Based on these operations, \cite[Theorem~3.2]{FUZ18:surjective} gives a classification of Boolean $\overline{\mathbb{Q}}$-valued languages with respect to global $s$-tractability, which we restate here.

\begin{theorem}[\cite{FUZ18:surjective}]\label{Boolean-Rational-Infinity-s-Tractability}
	Let $\Gamma$ be a Boolean language. Then $\Gamma$ is globally $s$-tractable if $\Gamma$ is {\EDS}, if $\neg \left( \Gamma \right)$ is {\EDS} or if $\Gamma$ admits any of the following multimorphisms: $\langle  \min,  \min \rangle$, $\langle  \max,  \max  \rangle, \langle  \min,  \max \rangle$, $\langle\operatorname{Mn}, \operatorname{Mn}, \operatorname{Mn}\rangle, \langle\operatorname{Mj}, \operatorname{Mj}, \operatorname{Mj}\rangle$, $\langle\operatorname{Mj}, \operatorname{Mj}, \operatorname{Mn}\rangle$. Otherwise, $\Gamma$ is globally $s$-intractable.
\end{theorem}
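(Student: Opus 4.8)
The plan is to establish the two directions of the dichotomy separately, and to observe that the hardness direction follows almost immediately from the known classification of regular Boolean VCSPs. Recall the two facts noted in the introduction (see also~\cite{FUZ18:surjective}): (i) a globally intractable language is globally $s$-intractable, and (ii) a globally tractable language containing all constant relations is globally $s$-tractable. For the hardness direction, suppose $\Gamma$ admits none of the six listed multimorphisms. By the classification of regular Boolean $\overline{\mathbb{Q}}$-valued languages~\cite{cohen06:complexitysoft}, these six multimorphisms exhaust the tractable cases, so $\Gamma$ is globally intractable; by fact (i) it is therefore globally $s$-intractable. Since the ``otherwise'' region of the statement is contained in the set of languages admitting none of the six multimorphisms, this settles hardness.

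For the tractability direction, I would treat the three sufficient conditions in turn. When $\Gamma$ admits one of the six multimorphisms, I would first check that each of the constant relations $\rho_0, \rho_1$ (feasible on a single label) also admits all six: since the underlying operations are idempotent, every multimorphism inequality holds with equality on a single-tuple relation. Hence $\Gamma \cup \{\rho_0, \rho_1\}$ still admits the same multimorphism and is globally tractable, and as it contains all Boolean constant relations, fact (ii) makes it globally $s$-tractable. Because every \VCSPs{\Gamma} instance is also a \VCSPs{\Gamma \cup \{\rho_0,\rho_1\}} instance, global $s$-tractability of $\Gamma$ follows. Concretely, fact (ii) amounts to guessing one variable forced to $0$ and one forced to $1$, solving the resulting regular instance over $\Gamma \cup \{\rho_0,\rho_1\}$, and minimising over the $O(n^2)$ choices, each of which yields a surjective assignment.

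The genuinely new case is \EDS, and this is where I expect the main work. Here the $s$-tractability cannot be obtained from constant relations, so I would route it through the {\GMC} machinery directly. Given a surjective instance over an \EDS\ language, normalise the objective so that $\boldsymbol 0$ evaluates to $0$, obtaining an \EDS\ set function; by Theorem~\ref{Theorem_EDS_Approximation_FullaZ17} this is approximated up to a constant factor $c$ (depending only on the bounded arity) by a {\GMC} instance $h$. The surjectivity requirement (at least one label $1$ and at least one label $0$) corresponds exactly to selecting a non-trivial proper vertex subset, i.e. to the lower and upper bound $1$ of {\GMC}. An optimal surjective assignment then corresponds to a $c$-optimal {\GMC} solution, so after disposing of the degenerate cases $\lambda = 0$ and $\lambda = \infty$, Theorem~\ref{GMC_Proof} lets me enumerate all $\lceil c\rceil$-optimal solutions in polynomial time and pick the best among them; since $c$ is constant there are only polynomially many. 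The case where $\neg(\Gamma)$ is \EDS\ follows by relabelling $0 \leftrightarrow 1$, under which \VCSPs{\Gamma} and \VCSPs{\neg(\Gamma)} have identical complexity.

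The main obstacle is the \EDS\ case: the approximation in Theorem~\ref{Theorem_EDS_Approximation_FullaZ17} is only up to a constant factor, so one cannot read off the true optimum from a single {\GMC} computation and must instead enumerate a whole band of near-optimal {\GMC} solutions via Theorem~\ref{GMC_Proof} and argue that the genuine surjective optimum lies within it. Care is also needed to verify that the six multimorphisms are precisely the tractable ones in the regular setting, so that the ``otherwise'' clause really does coincide with admitting none of the six.
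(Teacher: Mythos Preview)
This theorem is not proved in the present paper at all; it is quoted verbatim from~\cite{FUZ18:surjective} and used as a black box. So there is no ``paper's own proof'' to compare against, and your proposal is really a sketch of the argument in~\cite{FUZ18:surjective}. Your tractability direction is essentially the right outline (constants plus fact~(ii) for the six multimorphisms, {\GMC} approximation plus enumeration for {\EDS}), and it matches what that paper does.

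Your hardness direction, however, has a genuine gap. The regular Boolean VCSP classification does \emph{not} say that the six listed multimorphisms exhaust the tractable cases: it also includes the constant unary multimorphisms $\langle c_0\rangle$ and $\langle c_1\rangle$. A language admitting $\langle c_0\rangle$ is trivially globally tractable (assign $0$ everywhere), yet it need not be {\EDS} and need not admit any of the six multimorphisms. Such a language lies in the ``otherwise'' region of the theorem but is \emph{not} globally intractable, so fact~(i) gives you nothing. Establishing $s$-intractability for languages that admit $\langle c_0\rangle$ (or $\langle c_1\rangle$) but are not {\EDS} (respectively, whose negation is not {\EDS}) is precisely the new hardness content of~\cite{FUZ18:surjective}; it requires a bespoke gadget reduction and is not a corollary of the regular classification. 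Your closing caveat (``care is needed to verify that the six multimorphisms are precisely the tractable ones'') is thus not a routine check but the crux of the hardness side, and as stated it is false.
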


Note that if $\text{P} \neq \text{NP}$, global $s$-tractability and global
$s$-intractability are mutually exclusive. In order to extend Theorem
\ref{Boolean-Rational-Infinity-s-Tractability} to lower-bounded VCSPs, we rely
on the results from Section~\ref{sec:classification} as well as the following two auxiliary lemmas.

\begin{lemma}\label{Lemma_EDS_SEDS_Boolean_Domain}
	Let $\Gamma$ be a Boolean language and let $\alpha  \geq 1$. Then $\Gamma$ is $\alpha$-{\SEDS} if and only if it is $\alpha$-{\EDS}.
\end{lemma}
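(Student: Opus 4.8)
The plan is to unfold both definitions on the Boolean domain, where $D = \{0,1\}$ corresponds to the parameter $k = 1$, and to observe that the two notions then coincide essentially verbatim. Since a weighted relation is $\alpha$-\EDS\ (respectively $\alpha$-\SEDS) exactly when the \emph{same} object---the normalised $(1$-$)$set function $f_\gamma$ corresponding under normalisation to $\gamma$---is $\alpha$-\EDS\ (respectively $\alpha$-\SEDS), and a language has either property iff each of its weighted relations does, it suffices to prove the equivalence for a single normalised set function $f$ on $V$.

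First I would dispose of the \SIM\ half of the \SEDS\ definition. For $k = 1$ a $k$-set function is simply a set function, and every normalised set function is $1$-\SIM\ (as noted after the definition of \SIM); since $\alpha \geq 1$, it is therefore $\alpha$-\SIM. Hence the \SIM\ clause imposes no restriction, and a set function is $\alpha$-\SEDS\ precisely when it satisfies inequality (\ref{Definition-SEDS_equation}).

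Next I would compare the two remaining inequalities. For $k = 1$ the tuples $(X_1, \dots, X_k)$ and $(Y_1,\dots,Y_k)$ appearing in (\ref{Definition-SEDS_equation}) are single sets $X_1, Y_1 \subseteq V$, and the hypothesis that the components be disjoint is vacuous for a one-element tuple, so no constraint is placed on $X_1$ or $Y_1$. Writing $X = X_1$ and $Y = Y_1$, inequality (\ref{Definition-SEDS_equation}) becomes
\begin{equation*}
f(X \setminus Y) \leq \alpha \cdot \bigl(f(X) + f(Y)\bigr) \quad \text{for all } X, Y \subseteq V,
\end{equation*}
which is exactly the defining inequality (\ref{Definition-EDS_equation}) of $\alpha$-\EDS. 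Thus a normalised set function is $\alpha$-\SEDS\ if and only if it is $\alpha$-\EDS, and the lemma follows by quantifying over all $\gamma \in \Gamma$.

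There is essentially no obstacle beyond careful bookkeeping: the only two points to verify are that the \SIM\ condition carries no information for set functions and that the disjointness hypotheses in (\ref{Definition-SEDS_equation}) are empty when $k = 1$. Both are immediate, the former from the remark following the \SIM\ definition and the latter from the fact that a single set is trivially disjoint.
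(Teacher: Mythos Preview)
Your proposal is correct and follows essentially the same approach as the paper: unfold both definitions on the Boolean domain, note that the \SIM\ clause is automatic for normalised set functions, and observe that the \SEDS\ inequality for $k=1$ is literally the \EDS\ inequality. The paper's own proof is terser but makes the identical observations.
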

\begin{proof}
	As a Boolean language, $\Gamma$ is $\alpha$-{\SIM} if every weighted relation $\gamma  \in  \Gamma$ corresponds under normalisation to a set function. This is the case if $\Gamma$ is $\alpha$-{\EDS}.
	
	The remainder of the definitions of {\EDS} and {\SEDS} from pages \pageref{Definition-EDS} and \pageref{Definition-SEDS} are equivalent for the Boolean domain, showing the statement.
\end{proof}

Recall that for a label $d \in D$, the constant relation $\rho_d$ is defined by $\rho_d\left(d\right) = 0$ and $\rho_d\left(x\right) =  \infty$ for $d \neq x \in
D$. Let $\mathcal{C}_{D} = \left\{ \rho_d \mid d \in D\right\}$ denote the set of constant unary relations.

\begin{lemma}\label{Tractability-Constants-Global-l-Tractability}
	Let $\Gamma$ be a language on domain $D$ such that $\Gamma  \cup \mathcal{C}_{D}$ is globally tractable. Then $\Gamma$ is globally $\ell$-tractable.
\end{lemma}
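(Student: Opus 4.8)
The plan is to reduce a lower-bounded VCSP instance over $\Gamma$ to a regular VCSP instance over $\Gamma \cup \mathcal{C}_D$, for which a polynomial-time algorithm exists by the global tractability assumption. The key obstacle is that a lower bound $l : D \rightarrow \mathbb{N}_0$ requires each label $d$ to be used at least $l(d)$ times, and the constant relations $\rho_d$ can only \emph{pin down} specific variables to specific labels; they cannot directly express a counting constraint. However, since each $l(d)$ is a fixed constant, the total number $L = \sum_{d \in D} l(d)$ of mandated label-occurrences is also a constant, so we can afford to enumerate over all ways of designating which variables realise the lower bound.

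First I would, given a \VCSPl{\Gamma} instance $I = (V, D, \phi_I)$, enumerate all tuples of pairwise-disjoint subsets $(S_d)_{d \in D}$ of $V$ with $|S_d| = l(d)$ for each $d \in D$. Since $|V| = n$ and $L = \sum_d l(d)$ is constant, there are at most $n^{L} = O(n^L)$ such choices, which is polynomial. For each such choice, I would build an instance $I_{(S_d)}$ of \VCSPof{\Gamma \cup \mathcal{C}_D} whose objective function is $\phi_I$ augmented by the terms $\sum_{d \in D}\sum_{v \in S_d} \rho_d(v)$; this forces $s(v) = d$ for every $v \in S_d$ (any assignment violating this incurs value $\infty$), while leaving the remaining variables unconstrained in the regular (non-surjective) sense.

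The correspondence to verify is that the optimum of $I$ equals the minimum, over all valid choices $(S_d)$, of the regular optimum of $I_{(S_d)}$. For the easy direction, any feasible assignment $s$ for $I$ satisfies $|s^{-1}(d)| \geq l(d)$, so one can pick $S_d \subseteq s^{-1}(d)$ with $|S_d| = l(d)$; then $s$ is feasible for $I_{(S_d)}$ with the same value, since the added $\rho_d$ terms contribute $0$. For the converse, any finite-value assignment $s'$ for some $I_{(S_d)}$ must set $s'(v) = d$ for all $v \in S_d$, so $|s'^{-1}(d)| \geq |S_d| = l(d)$; hence $s'$ is a feasible assignment for $I$ of the same value. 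Taking the minimum over the polynomially many choices $(S_d)$ and solving each $I_{(S_d)}$ in polynomial time via the assumed algorithm for \VCSPof{\Gamma \cup \mathcal{C}_D} yields an optimal assignment for $I$ in polynomial time, establishing global $\ell$-tractability of $\Gamma$.

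The main subtlety, rather than obstacle, is ensuring the enumeration remains polynomial: this rests on $l$ being a \emph{fixed} lower bound (so $L$ is a constant independent of the input), which is exactly the hypothesis built into the definition of global $\ell$-tractability. A minor point to handle cleanly is that distinct labels require disjoint designated sets $S_d$; if the enumeration ever produces overlapping sets this is automatically excluded since no single variable can be pinned to two labels with finite value, but it is cleaner to restrict the enumeration to pairwise-disjoint tuples from the outset.
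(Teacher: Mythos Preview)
Your proposal is correct and essentially identical to the paper's proof: both enumerate the $O(n^{L})$ choices of pairwise-disjoint sets $S_d$ (the paper calls them $V_d$) with $|S_d|=l(d)$, pin each via the constant relations $\rho_d$, solve the resulting \VCSPof{\Gamma\cup\mathcal{C}_D} instances, and take the minimum. Your write-up even spells out the two directions of the correspondence a bit more explicitly than the paper does.
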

\begin{proof}
	Let $l : D \rightarrow \mathbb{N}_0$ be a fixed lower bound, let $l^* =  \sum_{d \in D}l\left(d\right)$ and consider any \VCSPl{ \Gamma } instance $I = \left(V, D,  \phi_I\right)$. There are only $O\left(\left|V\right|^{l^*}\right)$, i.e. polynomially many, ways to select disjoint sets $V_d \subseteq V$ of size $\left|V_d\right| = l\left(d\right)$ for all $d \in D$. For each such choice, we construct a \VCSPof{\Gamma  \cup \mathcal{C}_{D}} instance $I' = \left(V, D,  \phi_I'\right)$, where $\phi_I'$ is constructed from $\phi_I$ by adding a constraint $\rho_d\left(x\right)$ for every $d \in D$ and every $x \in V_d$. These additional constraints guarantee that only those assignments for $I'$ are feasible that respect lower bound $l$.
	
	Conversely, every assignment for $I$ that respects lower bound $l$ is an assignment for some instance $I'$ constructed from some disjoint sets $V_d \subseteq V$ of the described form. Therefore, an assignment that is minimal among all optimal assignments for instances $I'$ must be an optimal assignment for $I$.
\end{proof}

\begin{theorem}\label{Boolean-Rational-Infinity-l-Tractability} Let $\Gamma$ be a Boolean language. Then $\Gamma$ is globally $\ell$-tractable if and only it is globally $s$-tractable. Otherwise, $\Gamma$ is globally $\ell$-intractable. \end{theorem}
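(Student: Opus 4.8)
The plan is to prove the two implications of the equivalence separately and then read off the \enquote{otherwise} clause from the dichotomy in Theorem~\ref{Boolean-Rational-Infinity-s-Tractability}. One direction is immediate: global $\ell$-tractability implies global $s$-tractability, since the surjective setting is the special case $l\equiv 1$ of the lower-bounded setting, as already noted in the introduction. The content is therefore the converse, that global $s$-tractability implies global $\ell$-tractability, and I would obtain it by walking through the tractable cases enumerated in Theorem~\ref{Boolean-Rational-Infinity-s-Tractability}.

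First case: $\Gamma$ is {\EDS} or $\neg(\Gamma)$ is {\EDS}. By Lemma~\ref{Lemma_EDS_SEDS_Boolean_Domain}, an {\EDS} Boolean language is {\SEDS}. For a Boolean $\Gamma$ the reduced language $\fixall{\Gamma}$ lives on the one-element domain $D^* = \{1\}$, and any language on a one-element domain is trivially globally $\ell$-tractable, since there is a unique assignment which one simply evaluates. Hence Theorem~\ref{Theorem-l-tractability-EDT0} applies and shows that $\Gamma$ is globally $\ell$-tractable whenever it is {\EDS}. If instead $\neg(\Gamma)$ is {\EDS}, I would run the same argument on $\neg(\Gamma)$ and then transfer the conclusion back: relabelling the Boolean domain is a bijection on lower bounds $l : D \to \mathbb{N}_0$ and preserves the complexity of the lower-bounded problem, so global $\ell$-tractability of $\neg(\Gamma)$ yields that of $\Gamma$.

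Second case: $\Gamma$ admits one of the six multimorphisms $\langle\min,\min\rangle$, $\langle\max,\max\rangle$, $\langle\min,\max\rangle$, $\langle\operatorname{Mn},\operatorname{Mn},\operatorname{Mn}\rangle$, $\langle\operatorname{Mj},\operatorname{Mj},\operatorname{Mj}\rangle$, $\langle\operatorname{Mj},\operatorname{Mj},\operatorname{Mn}\rangle$. Here I would invoke Lemma~\ref{Tractability-Constants-Global-l-Tractability} and argue that $\Gamma \cup \mathcal{C}_D$ is globally tractable. The key observation is that each of $\min$, $\max$, $\operatorname{Mn}$, $\operatorname{Mj}$ is idempotent, so every constant relation $\rho_d$ admits each of these multimorphisms: if some argument differs from $d$ the right-hand side of the multimorphism inequality is infinite, and otherwise idempotency forces equality. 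Thus $\Gamma \cup \mathcal{C}_D$ still admits whichever multimorphism $\Gamma$ does, and since admitting any one of these multimorphisms guarantees global tractability of a Boolean VCSP language, $\Gamma \cup \mathcal{C}_D$ is globally tractable; Lemma~\ref{Tractability-Constants-Global-l-Tractability} then yields global $\ell$-tractability of $\Gamma$.

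Combining the two cases covers every globally $s$-tractable Boolean language, establishing $s$-tractable $\Rightarrow$ $\ell$-tractable and hence the equivalence. For the final sentence, if $\Gamma$ is not globally $s$-tractable then Theorem~\ref{Boolean-Rational-Infinity-s-Tractability} makes it globally $s$-intractable, which in turn forces global $\ell$-intractability, as recorded in the introduction. The main obstacle, as I see it, is the multimorphism case: one must be sure that adjoining all constant relations does not destroy tractability. This is exactly where idempotency of the six operations is used, and where I would be most careful to appeal to the underlying Boolean VCSP tractability classification rather than merely to the $s$-tractability statement of Theorem~\ref{Boolean-Rational-Infinity-s-Tractability}.
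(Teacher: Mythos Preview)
Your proposal is correct and follows essentially the same approach as the paper: you split into the {\EDS}/$\neg(\text{\EDS})$ case (handled via Lemma~\ref{Lemma_EDS_SEDS_Boolean_Domain} and Theorem~\ref{Theorem-l-tractability-EDT0} with the trivial one-element $\fixall{\Gamma}$) and the multimorphism case (handled via Lemma~\ref{Tractability-Constants-Global-l-Tractability} after observing that the constant relations $\rho_0,\rho_1$ admit the six multimorphisms), and then derive the \enquote{otherwise} clause from the dichotomy in Theorem~\ref{Boolean-Rational-Infinity-s-Tractability}. The only cosmetic difference is that you establish the easy direction of the biconditional directly from $\ell$-tractability $\Rightarrow$ $s$-tractability, whereas the paper phrases the argument under the assumption $\text{P}\neq\text{NP}$; your version is arguably cleaner.
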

\begin{proof}
	We assume that $\text{P} \neq \text{NP}$, because otherwise every language is globally $\ell$-tractable and the statement trivially holds true. If $\Gamma$ is globally $s$-tractable, it must satisfy at least one of the properties listed in Theorem \ref{Boolean-Rational-Infinity-s-Tractability}.
	
	First, we assume that $\Gamma$ admits any of the multimorphisms $\langle  \min,  \min \rangle$, $\langle  \max,  \max  \rangle$, $\langle  \min,  \max \rangle$, $\langle\operatorname{Mn}, \operatorname{Mn}, \operatorname{Mn}\rangle$, $\langle\operatorname{Mj}, \operatorname{Mj}, \operatorname{Mj}\rangle$ or $\langle\operatorname{Mj}, \operatorname{Mj}, \operatorname{Mn}\rangle$. Then $\Gamma \cup \left\{ \rho_0,  \rho_1\right\}$ must be tractable as well, because the constant relations $\rho_0$ and $ \rho_1$ both admit all of these multimorphisms. This implies global $\ell$-tractability of $\Gamma$ according to Lemma \ref{Tractability-Constants-Global-l-Tractability}.
	
	If $ \Gamma $ is {\EDS}, it must also be {\SEDS} according to Lemma \ref{Lemma_EDS_SEDS_Boolean_Domain}. Furthermore, the reduced language $\fixall{\Gamma}$ is trivial in this case and, in particular, globally $\ell$-tractable. Hence, $ \Gamma $ must be globally $\ell$-tractable by Theorem \ref{Theorem-l-tractability-EDT0}. The same applies if $\neg \left( \Gamma \right)$ is {\EDS}.
	
	Otherwise, $\Gamma$ must be globally $s$-intractable according to Theorem \ref{Boolean-Rational-Infinity-s-Tractability}. That immediately implies global $\ell$-intractability.
\end{proof}

Hence, the classification from Theorem \ref{Boolean-Rational-Infinity-s-Tractability} is also valid for lower-bounded VCSPs.
For $\mathbb{Q}$-valued and $\left\{0,  \infty\right\}$-valued languages, a
tighter classification of Boolean surjective VCSPs is provided in
\cite{FUZ18:surjective}, which can in the same way be lifted to lower-bounded VCSPs by
Theorem \ref{Boolean-Rational-Infinity-l-Tractability}. In particular, a Boolean
$\mathbb{Q}$-valued language $\Gamma$ is globally $\ell$-tractable precisely if it is {\EDS}, if $\neg \left( \Gamma \right)$ is {\EDS} or if $\Gamma$ is submodular.

While our focus so far has been on global $s$-tractability and global $\ell$-tractability, there is a further distinction for infinite languages. A language $\Gamma$ is \textit{tractable} if every finite subset $\Gamma' \subseteq  \Gamma$ is globally tractable, and \textit{intractable} if some finite subset is globally intractable. The terms \textit{$s$-tractability} and \textit{$\ell$-tractability} are defined analogously for surjective and lower-bounded VCSPs.
\cite[Remark~2]{FUZ18:surjective} outlines a dichotomy theorem for Boolean languages with respect to $s$-tractability. We lift this result to lower-bounded VCSPs.

\begin{corollary}
	Let $\Gamma$ be a Boolean language. Then $\Gamma$ is $\ell$-tractable if and only it is $s$-tractable. Otherwise, $\Gamma$ is $\ell$-intractable.
\end{corollary}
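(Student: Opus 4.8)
The plan is to lift the global dichotomy of Theorem~\ref{Boolean-Rational-Infinity-l-Tractability} to the finite-subset notions of tractability, using nothing more than the definitions just introduced.

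The first observation I would record is that every finite subset $\Gamma' \subseteq \Gamma$ of a Boolean language is itself a Boolean language (of bounded arity), so Theorem~\ref{Boolean-Rational-Infinity-l-Tractability} applies verbatim to each such $\Gamma'$: it is globally $\ell$-tractable if and only if it is globally $s$-tractable, and otherwise globally $\ell$-intractable. This is the only ingredient the proof needs.

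For the biconditional I would simply unfold the definitions. By definition, $\Gamma$ is $\ell$-tractable precisely when every finite subset $\Gamma' \subseteq \Gamma$ is globally $\ell$-tractable. Applying the first half of Theorem~\ref{Boolean-Rational-Infinity-l-Tractability} subset by subset turns this into the statement that every finite subset is globally $s$-tractable, which is exactly the assertion that $\Gamma$ is $s$-tractable. Hence $\Gamma$ is $\ell$-tractable if and only if it is $s$-tractable.

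For the remaining ``otherwise'' clause I would argue contrapositively. If $\Gamma$ is not $\ell$-tractable, then by definition some finite subset $\Gamma' \subseteq \Gamma$ fails to be globally $\ell$-tractable; the dichotomy in Theorem~\ref{Boolean-Rational-Infinity-l-Tractability} then forces this $\Gamma'$ to be globally $\ell$-intractable, exhibiting a finite subset that witnesses $\ell$-intractability of $\Gamma$. Since this is a purely definitional lifting, I do not expect any genuine obstacle; the only point deserving care is confirming that the global dichotomy indeed applies to each finite subset and that the infinite-language notions are exactly ``all finite subsets globally tractable'' versus ``some finite subset globally intractable,'' so that the equivalence propagates through the quantifier over finite subsets.
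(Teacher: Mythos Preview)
Your proposal is correct and follows essentially the same route as the paper: both arguments apply Theorem~\ref{Boolean-Rational-Infinity-l-Tractability} to each finite subset and then propagate through the definitions of (non-global) $\ell$- and $s$-tractability. The only cosmetic difference is that the paper phrases the ``otherwise'' case starting from ``not $s$-tractable'' rather than ``not $\ell$-tractable'', but since you have already established the biconditional these are interchangeable.
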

\begin{proof}
	If $\Gamma$ is $s$-tractable, every finite subset $\Gamma' \subseteq  \Gamma$ is $s$-tractable. Since $s$-tractability and global $s$-tractability coincide for finite languages, every finite $\Gamma' \subseteq  \Gamma $ must be globally $s$-tractable. By Theorem \ref{Boolean-Rational-Infinity-l-Tractability}, every finite $\Gamma' \subseteq  \Gamma $ is then globally $\ell$-tractable and therefore, $\Gamma$ is $\ell$-tractable.
	
	Otherwise, if $\Gamma$ is not $s$-tractable, there must be some finite subset $\Gamma' \subseteq  \Gamma$ that is not $s$-tractable. In this case, $\Gamma'$ cannot be globally $s$-tractable and must instead be globally $\ell$-intractable by Theorem \ref{Boolean-Rational-Infinity-l-Tractability}. Hence, $\Gamma$ is $\ell$-intractable.
\end{proof}

Moreover, we can now classify lower-bounded VCSPs over {\SEDS} languages on three-element domains.
\begin{theorem}
	Let $\Gamma$ be a {\SEDS} language on domain $D = \left\{0, 1, 2\right\}$. Then $\Gamma$ is globally $\ell$-tractable if it is {\SDS} or if $\fixall{\Gamma}$ is globally $\ell$-tractable, and globally $\ell$-intractable otherwise.
\end{theorem}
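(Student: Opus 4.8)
The plan is to assemble this classification from three results already in hand: the tractability Theorem~\ref{Theorem-l-tractability-EDT0}, the hardness Theorem~\ref{Hardness-Theorem-Similar}, and the Boolean dichotomy of Section~\ref{sec:Boolean}. The one feature that makes the three-element case special, and that I would emphasise throughout, is that the reduced domain $D^* = D \setminus \{0\} = \{1,2\}$ is Boolean; consequently $\fixall{\Gamma}$ is a Boolean language and the full strength of Theorems~\ref{Boolean-Rational-Infinity-s-Tractability} and~\ref{Boolean-Rational-Infinity-l-Tractability} applies to it.

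The tractability direction is immediate and uses no assumption on the domain size: if $\Gamma$ is {\SDS}, or if $\fixall{\Gamma}$ is globally $\ell$-tractable, then Theorem~\ref{Theorem-l-tractability-EDT0} directly yields that $\Gamma$ is globally $\ell$-tractable.

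For the intractability direction I would assume that $\Gamma$ is neither {\SDS} nor has $\fixall{\Gamma}$ globally $\ell$-tractable, and aim to conclude global $\ell$-intractability of $\Gamma$. Since $\fixall{\Gamma}$ lives on the Boolean domain $\{1,2\}$, Theorem~\ref{Boolean-Rational-Infinity-l-Tractability} tells us that its global $\ell$-tractability coincides with its global $s$-tractability; as $\fixall{\Gamma}$ is assumed not globally $\ell$-tractable, it is therefore not globally $s$-tractable, and the Boolean $s$-dichotomy of Theorem~\ref{Boolean-Rational-Infinity-s-Tractability} then upgrades this to $\fixall{\Gamma}$ being globally $s$-\emph{intractable}. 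This promotion, from the mere failure of $\ell$-tractability up to genuine $s$-intractability, is the crux of the argument and is precisely where Booleanness of $D^*$ is exploited: for larger $D^*$ no such dichotomy is known, which is exactly why the general classification of Figure~\ref{figure_three_element_domain} remains conditional.

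Finally I would feed this into the hardness machinery. Because $\Gamma$ is {\SEDS} it is in particular {\SIM} (being {\SIM} is part of the definition of {\SEDS}); together with the standing assumption that $\Gamma$ is not {\SDS} and the conclusion just drawn that $\fixall{\Gamma}$ is globally $s$-intractable, Theorem~\ref{Hardness-Theorem-Similar} shows that $\Gamma$ is globally $s$-intractable. Since global $s$-intractability implies global $\ell$-intractability, this gives the desired statement. I do not expect any genuine obstacle here: all the substantial work lives in the earlier theorems, and the only care needed is the logical bookkeeping of the dichotomy, namely translating ``$\fixall{\Gamma}$ not globally $\ell$-tractable'' into ``$\fixall{\Gamma}$ globally $s$-intractable'' before Theorem~\ref{Hardness-Theorem-Similar} can be invoked.
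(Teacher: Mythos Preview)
Your proposal is correct and follows essentially the same route as the paper: tractability via Theorem~\ref{Theorem-l-tractability-EDT0}, and in the remaining case, using the Boolean dichotomy (Theorems~\ref{Boolean-Rational-Infinity-s-Tractability} and~\ref{Boolean-Rational-Infinity-l-Tractability}) to promote ``$\fixall{\Gamma}$ not globally $\ell$-tractable'' to ``$\fixall{\Gamma}$ globally $s$-intractable'', then invoking Theorem~\ref{Hardness-Theorem-Similar}. Your write-up is in fact more explicit than the paper's about why {\SEDS} supplies the {\SIM} hypothesis needed for Theorem~\ref{Hardness-Theorem-Similar} and about the logical step through the dichotomy, but the argument is the same.
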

\begin{proof}
	If $\Gamma$ is {\SDS} or $\fixall{\Gamma}$ globally $\ell$-tractable, the statement follows from Theorem \ref{Theorem-l-tractability-EDT0}. Otherwise, $\fixall{\Gamma}$ must be globally $s$-intractable by Theorem \ref{Boolean-Rational-Infinity-l-Tractability} and the dichotomy from \cite[Theorem 3.2]{FUZ18:surjective}. Hence, $\Gamma$ is globally $s$-intractable by Theorem \ref{Hardness-Theorem-Similar}, which gives the result.
\end{proof}

\section{Conclusions}

Based on the newly introduced Bounded Generalised Min-Cut problem and its
tractability, which might be of independent interest, we have provided a
conditional complexity classification of surjective and lower-bounded {\SEDS}
VCSPs on non-Boolean domains. Consequently, we obtained a dichotomy theorem with
respect to $\ell$-tractability for Boolean domains as well as, more
interestingly, for {\SEDS} languages on three-element domains.

While our results only pertain to languages that admit multimorphism $\langle
c_d \rangle$ for some label $d$ 
we expect our results and techniques to be useful in identifying new s-tractable and
$\ell$-tractable languages going beyond those admitting $\langle c_d \rangle$.

As mentioned in Section~\ref{sec:intro}, globally tractable languages that
include constant relations are also $s$-tractable. It is easy to show the same
for global $\ell$-tractability.
For example, this
shows that well-studied sources of tractability such as
submodularity~\cite{Schrijver2003combinatorial} and its generalisation
$k$-submodularity~\cite{Huber12:ksub}, which are known to be globally
tractable~\cite{Kolmogorov2015}, are also globally $\ell$-tractable.

What other non-Boolean languages are s-tractable and $\ell$-tractable? Our
results are a first step in this direction. In all cases we encountered global
$s$-(in)tractability coincides with global $\ell$-(in)tractability. We do not
know whether this is true in general.

The natural next step is to consider languages on three-element domains. As is
often the case in the (V)CSP literature, languages on three-element domains are
significantly more complex than Boolean languages; for instance, compare
two-element CSPs~\cite{schaefer78:complexity} and three-element CSPs~\cite{Bulatov06:3-elementJACM}. 
There is an interesting surjective CSP on a three-element domain, known
as the \textit{3-No-Rainbow-Colouring} problem~\cite{Bodirsky12:dam}. The
task is to colour the
vertices of a three-regular hypergraph such that every colour is used at least
once, while no hyperedge attains all three colours. It has recently been shown
that the 3-No-Rainbow-Colouring is NP-hard~\cite{Zhuk20:arxiv}. Consequently, we expect that
it should be possible to classify all three-element surjective CSPs and perhaps even all
three-element surjective VCSPs.

\section*{Acknowledgements}

We would like to thank the anonymous referees of both the conference~\cite{mz19:stacs}
and this full version of the paper. We also thank Costin-Andrei Oncescu for
detailed feedback on a previous version of this paper.


\newcommand{\noopsort}[1]{}

\end{document}